\theoremstyle{plain}
\newtheorem{theorem}{Theorem}[section]
\newtheorem{lemma}[theorem]{Lemma}
\newtheorem{myclaim}[theorem]{Claim}
\newtheorem{proposition}[theorem]{Proposition}
\theoremstyle{nonumberplain}
\theoremstyle{change}
\newtheorem{definition}[theorem]{Definition}
\theoremstyle{nonumberplain}
\theoremstyle{change}
\theoremstyle{nonumberplain}
\newcommand{\ie}{i.\nolinebreak[4]\hspace{0.125em}e.\nolinebreak[4]\@\xspace}
\DeclareMathAlphabet{\mathcal}{OMS}{cmsy}{m}{n}
\definecolor{darkgreen}{RGB}{0,180,0}
\tikzstyle{initial by arrow}=   [after node path=
\tikzset{negated/.style={
		decoration={markings,
			mark= at position 0.5 with {
				\node[transform shape] (tempnode) {$\bigg/$};
			}
		},
		postaction={decorate}
	}
}
\DeclarePairedDelimiter\abso{\lvert}{\rvert}
\newcommand{\abs}{\abso}
\newclass{\ALOGSPACE}{ALOGSPACE}
\newclass{\EXPTIME}{EXPTIME}
\newclass{\NEXPTIME}{NEXPTIME}
\newcommand{\dotCup}{\mathbin{\dot{\cup}}}
\newcommand{\call}[1]{\widehat{#1}}
\newcommand{\hist}[1]{\overline{#1}}
\newcommand{\hs}{\hist{\Sigma}^*}
\newcommand\algproblem[3]{%
	\par\noindent \textsc{#1}\\
	{\textit{Given}}: #2\\
	{\textit{Question}}: #3\par
}
\newcommand{\Call}{\ensuremath{\mathit{Call}}\xspace}
\newcommand{\Read}{\ensuremath{\mathit{Read}}\xspace}
\newcommand{\Move}{\mathit{Move}}
\newcommand{\Inf}{\mathit{Inf}}
\newcommand{\Next}{\mathit{Next}}
\newcommand\restr[2]{{% we make the whole thing an ordinary symbol
		\left.\kern-\nulldelimiterspace % automatically resize the bar with \right
		#1 % the function
		\vphantom{\big|} % pretend it's a little taller at normal size
		\right|_{#2} % this is the delimiter
}}
\newcommand{\states}[3][]{\mathit{states}_{#1}(#2,#3)}
\newcommand{\term}{\emph}
\newcommand{\tFive}{\ensuremath{\SReg}}
\newcommand{\calA}{\ensuremath{{\mathcal A}}\xspace}
\newcommand{\calD}{\ensuremath{{\mathcal D}}\xspace}
\newcommand{\calK}{\ensuremath{{\mathcal K}}\xspace}
\newcommand{\calN}{\ensuremath{{\mathcal N}}\xspace}
\newcommand{\calP}{\ensuremath{{\mathcal P}}\xspace}
\newcommand {\SReg}      {\ensuremath{\textsc{SReg}}\xspace}
\newcommand{\df}{\ensuremath{\mathrel{\smash{\stackrel{\scriptscriptstyle{
						\text{def}}}{=}}}} \;}
\newcommand{\shat}{\ensuremath{\hat{\sigma}}}
\newcommand{\sse}{\ensuremath{\textsf{sse}}}
\newcommand{\lesl}{\ensuremath{\le_{\textsf{sl}}}}
\newcommand{\lsl}{\ensuremath{<_{\textsf{sl}}}}
\newcommand{\gsl}{\ensuremath{>_{\textsf{sl}}}}
\newcommand{\online}[1][\calN]{\ensuremath{\textsc{OnlineNFA}(#1)}}
\newenvironment{proofof}[1]{\begin{proof}[Proof of #1.]}{\end{proof}}
\newcommand{\short}[1]{}\newcommand{\full}[1]{#1}  % Langversion
\title{Streaming Rewriting Games: Winning Strategies and Complexity}
\author[1]{Christian Coester\footnote{Supported by EPSRC}}
\author[2]{Thomas Schwentick}
\author[3]{Martin Schuster}
\affil[1]{University of Oxford}
\affil[2]{TU Dortmund}
\affil[3]{University of Edinburgh}
\begin{document}

\maketitle

\hyphenation{NFAUniversality UFAEmptiness}

\begin{abstract}
Context-free games on strings are two-player rewriting games based on a set of production rules and a regular target language. In each round, the first player selects a position of the current string; then the second player replaces the symbol at that position according to one of the production rules. The first player wins as soon as the current string belongs to the target language. In this paper the one-pass setting for context-free games is studied, where the knowledge of the first player is incomplete, she selects positions in a left-to-right fashion and only sees the current symbol and the symbols from previous rounds. The paper studies conditions under which dominant or undominated strategies for the first player exist and investigates the complexity of some related algorithmic problems.
\end{abstract}

\pagenumbering{arabic}
\section{Introduction}

\emph{Context-free games} on strings are rewriting games based on a set of \emph{production rules} and a regular \term{target language}. They are played by two players, \emph{Juliet} and \emph{Romeo}, and consist of several rounds. In each round, first Juliet selects a position of the current string; then Romeo replaces the symbol at that position according to one of the production rules. Juliet wins as soon as the current string belongs to the target language. Context-free games were introduced by Muscholl, Schwentick and Segoufin~\cite{MuschollSS06} as an abstraction of \emph{Active XML}.

Active XML (AXML) is a framework that extends XML by ``active nodes''. In AXML documents, some of the data is given explicitly while other parts are given by means of embedded calls to web services \cite{MiloAABN05}. These embedded calls can be invoked to materialise more data. As an example (adapted from \cite{MiloAABN05,MuschollSS06}; see Figure~\ref{fig:AXML}), consider a document for the web page of a local newspaper. The document may contain some explicit data, such as the name of the city, whereas information about the weather and local events is given by means of calls to a weather forecast service and an events service (see Figure~\ref{fig:AXMLbefore}). By invoking these calls, the data is materialised, \ie replaced by concrete weather and events data (see Figure~\ref{fig:AXMLafter}). The data returned by the service call may contain further service calls.

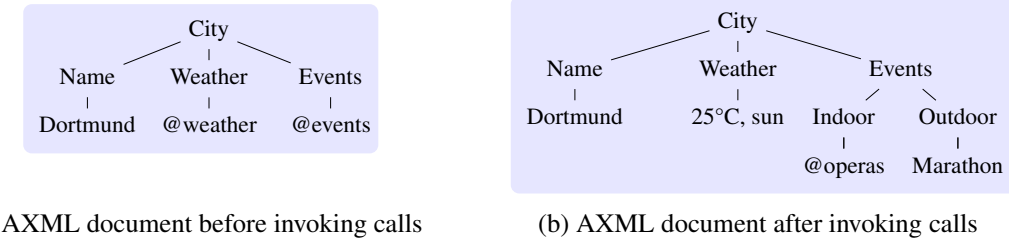
\begin{figure}
\subcaptionbox{AXML document before invoking calls\label{fig:AXMLbefore}}[.46\textwidth]{
\scalebox{0.8}{
%\pgfdeclarelayer{background}
	\begin{tikzpicture}[yscale=0.8,xscale=1.0,every node/.style={font=\vphantom{Ag}}]%font=\vphantom{Ag} bewirkt, dass Text je Zeile nach derselben Baseline ausgerichtet, egal wie hoch die Buchstaben sind
	\node (Stadt) at (2,2) {City};
	\node (Name) at (0,1) {Name};
	\node (Dortmund) at (0,0) {Dortmund};
	\node (Wetter) at (2,1) {Weather};
	\node (weather) at (2,0) {@weather};
	\node (Veranstaltungen) at (4,1) {Events};
	\node (events) at (4,0) {@events};
	\node (Kino) at (3.3,-1) {\phantom{@operas}};%dummy for space
	
	\draw[-] (Stadt) -- (Name);
	\draw[-] (Stadt) -- (Wetter);
	\draw[-] (Stadt) -- (Veranstaltungen);
	\draw[-] (Name) -- (Dortmund);
	\draw[-] (Wetter) -- (weather);
	\draw[-] (Veranstaltungen) -- (events);
%	\boundingbox;
 \begin{pgfonlayer}{background}
    \node [fill=blue!10,rounded corners,fit=(Stadt) (Dortmund) (Veranstaltungen)] {};
  \end{pgfonlayer}
\end{tikzpicture}}%
}
\subcaptionbox{AXML document after invoking calls\label{fig:AXMLafter}}[.54\textwidth]{
\scalebox{0.8}{
	\begin{tikzpicture}[yscale=0.8,xscale=1.34,every node/.style={font=\vphantom{Ag}}]%font=\vphantom{Ag} bewirkt, dass Text je Zeile nach derselben Baseline ausgerichtet, egal wie hoch die Buchstaben sind
	\node (Stadt) at (2,2) {City};
	\node (Name) at (0,1) {Name};
	\node (Dortmund) at (0,0) {Dortmund};
	\node (Wetter) at (2,1) {Weather};
	\node (weather) at (2,0) {25\textcelsius, sun};
	\node (Veranstaltungen) at (4,1) {Events};
	\node (Indoor) at (3.3,0) {Indoor};
	\node (Kino) at (3.3,-1) {@operas};
	\node (Outdoor) at (4.7,0) {Outdoor};
	\node (Campuslauf) at (4.7,-1) {Marathon};
	
	\draw[-] (Stadt) -- (Name);
	\draw[-] (Stadt) -- (Wetter);
	\draw[-] (Stadt) -- (Veranstaltungen);
	\draw[-] (Name) -- (Dortmund);
	\draw[-] (Wetter) -- (weather);
	\draw[-] (Veranstaltungen) -- (Outdoor);
	\draw[-] (Veranstaltungen) -- (Indoor);
	\draw[-] (Outdoor) -- (Campuslauf);
	\draw[-] (Outdoor) -- (Campuslauf);
	\draw[-] (Indoor) -- (Kino);
	\draw[-] (Indoor) -- (Kino);
%	\boundingbox;
\begin{pgfonlayer}{background}
    \node [fill=blue!10,rounded corners,fit=(Stadt) (Dortmund) (Campuslauf)] {};
  \end{pgfonlayer}
	\end{tikzpicture}}
	}
\caption{An AXML document before and after the invocation of service calls\label{fig:AXML}}
\end{figure}

It might not be necessary to invoke all possible service calls. In the example of Figure~\ref{fig:AXML}, data about the weather might be relevant only if there are outdoor events and otherwise it does not need to be materialised. The choice which data needs to be materialised by the sender and the receiver may be influenced by considerations about performance, capabilities, security and functionalities and can be specified, for instance, by a DTD \cite{MiloAABN05}. An overview about AXML is given in \cite{AbiteboulBM08}.

The question whether a document can be rewritten so that it satisfies the specification then basically translates to the \emph{winning problem} for context-free games:  given a game and a string\footnote{The restriction to strings instead of trees was justified in \cite{MiloAABN05}.}, does Juliet have a winning strategy? 
 In general, this problem is undecidable, however it becomes decidable if Juliet has to follow a  \term{left-to-right}-strategy~\cite{MuschollSS06}. With such a strategy, Juliet basically traverses the string from left to right and decides, for each symbol, whether to play \emph{Read} (keep the symbol and go to the next symbol) or \emph{Call} (let Romeo replace the symbol). 
It turns out that, with this restriction, the set of strings, on which Juliet can win, is regular. Furthermore, she even has a uniform strategy that wins on all these strings and can be computed by a finite automaton (in a way that will be made more precise in Section~\ref{cha:existence}).  
 
With applications in mind, in which the AXML document comes as a data stream, Abiteboul, Milo and Benjelloun~\cite{AbiteboulMB05} initiated the study of a further strategy restriction, called \emph{one-pass strategies}:  Juliet still has to process the string from left-to-right, but now she does not even see the remaining part of the string, beyond the current symbol.

Due to the lack of knowledge of Juliet, one-pass strategies are more difficult to analyse and have less desirable properties than left-to-right strategies. For instance, in the sandbox game with one replacement rule $a\to b$ and the target language $\{ab,bc\}$, Juliet has a winning strategy that wins on the word $ab$ (Read the initial $a$) and one that wins on $ac$ (Call the initial $a$), but none that wins on both~\cite{AbiteboulMB05}. This example shows that even for some extremely simple games and input strings, there is no \emph{dominant} strategy\footnote{Such strategies were called \emph{optimum} in~\cite{AbiteboulMB05}. Undominated strategies were called \emph{optimal} there.} for Juliet, i.e., a strategy that wins on all words on which she has a winning strategy at all. However, both mentioned strategies are optimal in the sense that they can not be strictly improved; we call such strategies \emph{undominated}. In this paper, we consider a third kind of ``optimality'' which lies between the two former notions. We define a linear order $\lesl$ (the shortlex order) on strings and languages and call a strategy \emph{weakly dominant} if its winning set is maximal with respect to this order.  In the example above, assuming the usual order on the alphabet, the strategy which plays \Call on the initial $a$ is weakly dominant. 

Abiteboul et al.~\cite{AbiteboulMB05} also introduced \term{regular strategies}, a simple type of one-pass strategies defined by a finite state automaton, and therefore efficiently computable. In this paper, we also study a particularly simple form of regular strategies, called \emph{strongly regular}, computed by an automaton that is derived from the minimal automaton for the target language. We refer to Section~\ref{cha:preliminaries} for precise definitions of these notions.
\paragraph*{Contributions}

Since dominant strategies are so elusive, we broaden our view towards weakly dominant and undominated strategies and, at the same time, study conditions of games that guarantee such strategies. In particular, in Section~\ref{cha:existence}, we study the following three questions.
\begin{itemize}
\item Under which circumstances does a context-free game have a dominant, weakly dominant  or undominated one-pass strategy?
\item When can such a strategy even be chosen regular or strongly regular?
\item When can such a strategy be \emph{forgetful}, in that it does not need to remember all decisions it made, but only the (prefix of the) current string?
\end{itemize}

We identify various conditions on games that yield positive results, summarised in Theorem~\ref{theo:right}. The first one, the \emph{bounded-depth property}, is of a semantical nature and guarantees the existence of weakly dominant strategies. The other conditions are syntactical: \emph{prefix-free games}, \emph{non-recursive games}  and games with a finite target language have the bounded-depth property and therefore weakly dominant strategies. For the most natural\footnote{As explained later, every game can be transformed into a very similar prefix-free game.} condition, prefix-free games, there is always a regular weakly dominant strategy. If a non-recursive game or a game with a finite target language has a dominant strategy then it even has a strongly regular one. It remains open whether all context-free games have undominated strategies and, for that matter, whether there exist games which lack the bounded-depth property.

We complement these results by further negative results (Theorem~\ref{theo:wrong}): there exist games with a dominant strategy, but without a forgetful one; and there exist games with a regular and forgetful dominant strategy, but without a strongly regular one, and similarly for undominated strategies. Figure~\ref{fig:results} gives an illustration of our results of Section~\ref{cha:existence}.

\full{
In the second part, in Section~\ref{cha:complexity}, we determine the complexity of the following decision problems for regular strategies.
\begin{itemize}
\item Given a game, a regular strategy and a word, does the strategy win on the word?
\item Given a game and a word,  does a (strongly) regular winning strategy for the word exist?
\item Given two regular strategies, does one dominate the other?
\end{itemize}
The results are summarised in Theorem~\ref{thm:complexity}. 
} % \full
\short{
In Section~\ref{cha:complexity}, we determine the complexity of some algorithmic problems related to regular strategies.

Due to space constraints most proof details are delegated to the full version of this paper, which is attached as an appendix, for the convenience of the reviewers.
}

This paper is based on the Master's thesis of the first author, supervised by the other two authors~\cite{Coester15}. The thesis contains further results, some of which will be mentioned later. 

\paragraph*{Related work}

Further background about AXML is given in \cite{AbiteboulBM08,AbiteboulBCMM03,MiloAABN05}. Context-free games were introduced in~\cite{MuschollSS04}, which is the conference paper corresponding to~\cite{MuschollSS06}. The article studies the decidability and complexity of deciding whether a winning unrestricted or left-to-right strategy exists for a word in the general case and several restricted cases. One-pass strategies and (forgetful) regular strategies were introduced in~\cite{AbiteboulMB05}. The complexity of deciding, for a given context-free game, whether Juliet has a winning left-to-right strategy for every word for which she has a winning unrestricted strategy is studied in~\cite{BjoerklundSSK13}. Extended settings of context-free games with nested words (resembling the tree structure of (A)XML documents) are examined in~\cite{SchusterS15,Schuster16}.

\section{Preliminaries}\label{cha:preliminaries}

We denote the set of strings over an alphabet $\Sigma$ by $\Sigma^*$ and the set of non-empty strings by $\Sigma^+$. $\Sigma^k$ denotes the set of strings of length $k$ and $\Sigma^{\le k}$ the set of strings of length at most $k$.

A \emph{nondeterministic finite automaton (NFA)} is a tuple $\mathcal{A}=(Q,\Sigma,\delta,s,F)$, where $Q$ is the set of states, $\Sigma$ the alphabet, $\delta\subseteq Q\times\Sigma\times Q$ the transition relation, $s\in Q$ the initial state and $F\subseteq Q$ the set of accepting states. A \emph{run} on a string $w=w_1\cdots w_n$ is  a sequence $q_0,\ldots,q_n$ of states such that $q_0=s$ and, for each $i\le n$, $(q_{i-1},w_i,q_i)\in\delta$. A run is \emph{accepting} if $q_n\in F$. 
A word $w$ is in the language $L(\mathcal{A})$ of $\mathcal{A}$ if $\calA$ has an accepting run on $w$. 
If $\mathcal{A}$ is deterministic, \ie, for each $p$ and $a$, there is exactly one state $q$ such that $(p,a,q)\in\delta$, then we consider $\delta$ as \emph{transition function} $Q\times\Sigma\to Q$ and also use the \emph{extended transition function} $\delta^*:Q\times\Sigma^*\to Q$, as usual.
\paragraph*{Context-free games}
A \term{context-free game}, or a \term{game} for short, is a tuple $G=(\Sigma,R,T)$ consisting of
a finite alphabet $\Sigma$, a minimal\footnote{The assumption that $T$ is minimal will be convenient at times.} DFA $T=(Q,\Sigma,\delta,s,F)$, and 
a binary relation $R\subseteq\Sigma\times\Sigma^+$ such that for each $a\in\Sigma$, the \emph{replacement language} $L_a\df \{v\in\Sigma^+\mid (a,v)\in R\}$ of $a$ is regular.
We call $L(T)$ the \term{target language} of $G$. By $\Sigma_f=\{a\in\Sigma\mid \exists v\in\Sigma^+\colon (a,v)\in R\}$ we denote the set of \term{function symbols}, \ie the symbols occurring as the left hand side of a rule.
The languages $L_a$ are usually represented by regular expressions $R_a$, for each $a\in\Sigma_f$ and we specify $R$ often by expressions of the form $a\to R_a$.
We note that the definition of context-free games assures $\epsilon\not\in L_a$.

The semantics of context-free games formalises the intuition given in the introduction. In a configuration, we summarise the information about a current situation of a play together with some information about the history of the play. For the latter, let $\call{\Sigma_f}=\{\call{a}\mid a\in\Sigma_f\}$ be a disjoint copy of the set $\Sigma_f$ of function symbols, and let $\hist{\Sigma}=\Sigma\dotCup\call{\Sigma_f}$. A \term{configuration} is a tuple $(\alpha,u)\in\hist{\Sigma}^*\times\Sigma^*$. If $u$ is non-empty, \ie $u=av$ for $a\in\Sigma$ and $v\in\Sigma^*$, then we also denote this configuration by $(\alpha,a,v)$, consisting of a \term{history string} $\alpha$, a \term{current symbol} $a\in\Sigma$ and a \term{remaining string} $v\in\Sigma^*$. We denote the set of all (syntactically) possible configurations by $\calK$. Intuitively, if the $i$th symbol of the history string is $b\in\Sigma$ then this shall denote that Juliet's $i$th move was to read the symbol $b$, and if it is $\call{b}\in\call{\Sigma_f}$ then this shall denote that Juliet's $i$th move was to call $b$. The remaining string is the string of symbols that have not been revealed to Juliet yet.
By $\natural:\hist{\Sigma}^*\to\Sigma^*$ we denote\footnote{We usually omit brackets and write, e.g., $\natural \alpha\beta$ for $\natural(\alpha\beta)$.}  the homomorphism which deletes all symbols from $\call{\Sigma_f}$ and is the identity on $\Sigma$. We call $\delta^*(s,\natural\alpha)$ the \term{$T$-state} of the configuration $(\alpha,u)$.

A play is a sequence of configurations, connected by moves. In one move at a configuration  $(\alpha,a,v)$ Juliet can either ``read'' $a$ or ``call'' $a$. In the latter case, Romeo can replace $a$ by a string from $L_a$.
More formally, a \term{play} of a game is a finite or infinite sequence $\Pi=(K_0,K_1,K_2,\dots)$ of configurations with the following properties:
\begin{enumerate}[(a)]
\item The \term{initial configuration} is of the form $K_0=(\epsilon,w)$, where $w\in\Sigma^*$ is called the \term{input word}.
\item If $K_n=(\alpha,a,v)$, then either $K_{n+1}=(\alpha a,v)$ or $K_{n+1}=(\alpha\call{a},xv)$ with $x\in L_a$.\label{it:PreceedingConfiguration} In the former case we say that Juliet plays a \term{Read move}, otherwise she plays a \term{Call move} and Romeo replies by $x$.
\item If $K_n=(\alpha,\epsilon)$, then $K_n$ is the last configuration of the sequence. Its history string $\alpha$ is called the \term{final history string} of $\Pi$. Its \term{final string} is $\natural\alpha$. 
\end{enumerate}
 
A play is \term{winning for Juliet} (and \emph{losing for Romeo}) if it is finite and its final string is in the target language $L(T)$. A play is \term{losing for Juliet} (and \emph{winning for Romeo}) if it is finite and its final string is not in $L(T)$. An infinite play is neither winning nor losing for any player.

\paragraph*{Strategies}\label{sec:JulietStrategies}

As mentioned in the introduction, we are interested in so-called one-pass strategies for Juliet, where Juliet's decisions do not depend on any symbols of the remaining string beyond the current symbol.

A \term{one-pass strategy for Juliet} is a map $\sigma\colon\hist{\Sigma}^*\times\Sigma_f\to\{\Call,\Read\}$, where the argument corresponds to the first two components of a configuration. A \term{strategy for Romeo} is a map $\tau\colon\sigma\colon\hist{\Sigma}^*\times\Sigma_f\to\Sigma^+$ where $\tau(\alpha,a)\in L_a$ for each $(\alpha,a)\in\hist{\Sigma}^*\times\Sigma_f$.\footnote{Even though we think of Romeo as an omniscient adversary, it is not necessary to provide the remaining string as an argument to $\tau$: The remaining string is uniquely determined by the input word and his own and Juliet's previous moves.} We generally denote strategies for Juliet by $\sigma, \sigma', \sigma_1, \ldots$ and Romeo strategies by $\tau, \tau', \tau_1, \ldots$.
We often just use the term \emph{strategy} to refer to a one-pass strategy for Juliet.

The (unique) \term{play of $\sigma$ and $\tau$ on $w$} is a play $\Pi(\sigma,\tau,w)=(K_0,K_1,K_2,\dots)$ with input word $w$ satisfying that
\begin{itemize}
\item if $K_n=(\alpha,a,v)$ and $\sigma(\alpha,a)=\Read$, then $K_{n+1}=(\alpha a,v)$,
\item if $K_n=(\alpha,a,v)$ and $\sigma(\alpha,a)=\Call$, then $K_{n+1}=(\alpha\call{a},\tau(\alpha,a)v)$.
\end{itemize} The \emph{depth} of a finite play is its maximum nesting depth of \Call moves. E.g., if Romeo replaces some symbol $a$ by a string $u$ and Juliet calls a symbol in $u$, the nesting depth of this latter \Call move is 2.

A strategy $\sigma$ is \term{terminating} if each of its plays is finite. The \term{depth} of $\sigma$ is the supremum of depths of plays of $\sigma$. Note that each strategy with finite depth is terminating. The converse, however, is not true and it is easy to construct counter-examples of a game and a strategy $\sigma$ where each play of $\sigma$ has finite depth but depths are arbitrarily large.

A strategy $\sigma$ \term{wins} on a string $w\in\Sigma^*$ if every play of $\sigma$ on $w$ is winning (for Juliet). By $W(\sigma)=W_G(\sigma)$ we denote the set of words on which $\sigma$ wins in $G$. In contrast, $\sigma$ \term{loses} on $w$ if there exists a losing play of $\sigma$ on $w$. Note that $\sigma$ neither wins nor loses on $w$ if there exists an infinite play of $\sigma$ on $w$ but no losing play of $\sigma$ on $w$.

A strategy $\sigma$ \term{dominates} a strategy $\sigma'$  if $W(\sigma')\subseteq W(\sigma)$. It \term{strictly dominates} $\sigma'$, if $W(\sigma')\subsetneq W(\sigma)$.
A strategy $\sigma$ is \term{dominant} if it dominates every other (one-pass) strategy. It is \term{undominated} if there is no other strategy that strictly dominates it.

We also consider a third, intermediate, type of optimality. To define it, we fix some total order $<$ of the alphabet
        $\Sigma$. We order strings by \emph{shortlex order}, \ie for
        two strings $v,w\in\Sigma^*$ we define $v\lsl w$ if $|v|<|w|$
        or if $|v|=|w|$ and $v$ precedes $w$ in the lexicographical
        order. We extend this to a total order $\lesl$ on sets of
        words as follows. Let $V,W\subseteq\Sigma^*$ be two sets with
        $V\neq W$. Their order is determined by the minimal string $w$
        (with respect to shortlex order $\lesl$) that is contained in
        only one of the two sets. If $w\in W$, then $V\lsl W$;
        otherwise $W\lsl V$. We observe that if $V\subsetneq W$ then
        $V\lsl W$.
A strategy $\sigma$ is \emph{weakly dominant} if, for every 
strategy $\sigma'$ it holds $W(\sigma')\lesl W(\sigma)$. Thus, a weakly dominant strategy can be seen as a best undominated strategy with respect to \lesl.  

A strategy $\sigma$ is \emph{regular} if the set $L$ of strings $\alpha a$ with $\sigma(\alpha,a)=\Call$ is regular. In this case, a DFA $\calA$ for $L$ is called a \emph{strategy automaton} for $\sigma=\sigma_\calA$.
 A strategy is \term{forgetful} if its decisions are independent of symbols from $\call{\Sigma_f}$ in the history string, \ie if $\sigma(\alpha,a)=\sigma(\beta,a)$ whenever $\natural\alpha=\natural\beta$.
It is not hard to show that a strategy is regular \emph{and} forgetful if and only if $L'\df \{\natural\alpha a\mid \sigma(\alpha,a)=\Call\}$ is regular. A DFA $\calA$ for $L'$ is also called a strategy automaton, and we write $\sigma_{\mathcal A}=\sigma$ again.

We are particularly interested in the special case of regular forgetful strategies where Juliet's decisions depend only on the current $T$-state and the current symbol. More precisely, if $T=(Q,\Sigma,\delta,s,F)$ is the target automaton and the strategy automaton is of the form $\calA=(Q\cup\{\Call\},\Sigma,\delta_\calA,s,\{\Call\})$ with $\delta_\calA(q,a)\in\{\delta(q,a),\Call\}$, for each $q$ and $a$, then $\sigma_\calA$ is called \emph{strongly regular}.

The following lemma yields a convenient property of one-pass strategies.
\begin{lemma}\label{lem:alwaysTerm}
	For each game $G$ and strategy $\sigma$ there is a terminating strategy $\sigma'$ s.t.\ $W(\sigma)\subseteq W(\sigma')$.
\end{lemma}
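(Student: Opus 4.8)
The plan is to keep $\sigma$'s decisions wherever a \Call could still be part of a winning play and to switch to \Read otherwise, certifying ``safety'' \emph{semantically} rather than by any syntactic loop detection. For a configuration $K$ let me say that $\sigma$ \emph{wins from} $K$ if every play of $\sigma$ starting in $K$ is winning for Juliet; thus $w\in W(\sigma)$ iff $\sigma$ wins from $(\epsilon,w)$, and, since every play starting in a configuration reached along a $\sigma$-play is a suffix of the latter, $\sigma$ wins from \emph{every} configuration reached along a play on a word in $W(\sigma)$. I would then define
\[
  \sigma'(\alpha,a)=\Call \quad\Longleftrightarrow\quad \sigma(\alpha,a)=\Call \ \text{ and }\ \sigma \text{ wins from } (\alpha,a,v) \text{ for some } v\in\Sigma^*,
\]
and $\sigma'(\alpha,a)=\Read$ otherwise. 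I deliberately do \emph{not} try to detect repeated $T$-states or repeated (state, symbol) pairs: a one-pass strategy cannot reconstruct the nesting structure of a play from $\alpha$ alone, and cutting at a repeated state would destroy wins (it would, for instance, stop Juliet from calling several sibling symbols that each return $T$ to its initial state).

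For $W(\sigma)\subseteq W(\sigma')$ I would first note that $\sigma$ and $\sigma'$ \emph{agree} on every configuration $K=(\alpha,a,v)$ from which $\sigma$ wins: if $\sigma(\alpha,a)=\Call$ there, then $v$ itself is a witness and $\sigma'(\alpha,a)=\Call$; if $\sigma(\alpha,a)=\Read$ then trivially $\sigma'(\alpha,a)=\Read$. Fixing $w\in W(\sigma)$, a straightforward induction along any play of $\sigma'$ on $w$ shows that every configuration reached is one from which $\sigma$ wins (the two strategies agree, hence make the same move and reach the same successor, which is again a configuration from which $\sigma$ wins). Therefore the plays of $\sigma'$ on $w$ coincide with those of $\sigma$ on $w$, which are finite and winning, so $w\in W(\sigma')$.

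The crux is termination. Here I would isolate a \emph{locality} property of one-pass strategies: while Juliet processes the expansion of a called symbol, i.e.\ as long as the remaining string has the form $z\,t$ with $z\neq\epsilon$ and $t$ the so-far untouched tail, the current history and current symbol are determined by $z$ and the players' moves alone, independently of $t$. Hence from two configurations $(\gamma,x\,t)$ and $(\gamma,x\,t')$ the plays evolve identically until the front $x$ is fully consumed, and either both reach the tail at the same step or neither ever does. Now suppose for contradiction that $\sigma'$ has an infinite play $\Pi$ on some $w'$. As $w'$ has only finitely many symbols, were the expansion of every called symbol processed in finitely many moves the whole play would be finite; so some \Call in $\Pi$ has an infinitely processed replacement, and I pick the first such call $K_1=(\alpha_1,a_1,v_1)$. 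Since $\sigma'$ calls at $K_1$, there is a witness $u_1$ with $\sigma$ winning from $(\alpha_1,a_1,u_1)$; writing $x_1=\tau(\alpha_1,a_1)$ for Romeo's reply in $\Pi$, the configuration $(\alpha_1\call{a_1},x_1u_1)$ is one from which $\sigma$ wins, so $\sigma'$ agrees with $\sigma$ from there on and its play is finite. In particular the processing of $x_1$ with tail $u_1$ halts; by locality the processing of $x_1$ with tail $v_1$ halts as well, contradicting the choice of $K_1$. Thus $\sigma'$ is terminating.

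The main obstacle is this termination argument, and in particular the interplay of two facts: that a correct \Call must be certified by the mere \emph{existence} of a winning tail (nothing visible in $\alpha$ suffices), and that such a certificate in fact forces the \emph{processing} of the replacement to halt. The locality statement is what transfers the halting guarantee from the witness tail $u_1$ to the actual tail $v_1$, and making it precise — that the intermediate play is literally independent of the tail until the tail is reached — is the step I would state and verify most carefully. The domination direction, by contrast, is a routine induction once the agreement of $\sigma$ and $\sigma'$ on all configurations from which $\sigma$ wins has been observed.
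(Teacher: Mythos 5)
Your proof is correct, and it shares the paper's basic skeleton---start from $\sigma$ and surgically replace dangerous \Call moves by \Read moves, using a criterion that depends only on $(\alpha,a)$ and is existential over tails---but the criterion itself is genuinely different, and this redistributes where the real work lies. The paper's $\sigma'$ bails out (plays \Read at $(\alpha,a)$ and on all later moves) exactly when Romeo can force the processing of $a$ at history $\alpha$ to never complete against $\sigma$; your $\sigma'$ keeps a \Call only when some tail $v$ makes $(\alpha,a,v)$ winning for $\sigma$, a strictly stronger requirement that also discards Calls which can merely lose finitely (harmless, since the lemma only asks for $W(\sigma)\subseteq W(\sigma')$). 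Consequently, in the paper termination is nearly immediate: an infinite play of $\sigma'$ would have to avoid every bailout point, hence be an infinite play of $\sigma$, whose first never-completing \Call is itself a bailout point, a contradiction; the inclusion $W(\sigma)\subseteq W(\sigma')$ is where the tail-independence of one-pass play is silently used (behind the paper's ``Clearly''), to transfer Romeo's infinite dive from the witness tail to the actual tail and conclude that a play on $w\in W(\sigma)$ can never reach a bailout point. In your proof the roles are exactly swapped: domination is the routine induction (because $\sigma'$ agrees with $\sigma$ on every configuration from which $\sigma$ wins), and the locality property carries the termination argument instead. Stating that locality property explicitly is a genuine gain in rigour---it is precisely the fact that both your argument and the paper's ``Clearly'' rest on, and it holds in an especially clean form here because Romeo's strategies, like Juliet's, are functions of the history string and current symbol only, so a Romeo strategy witnessing non-termination can literally be replayed unchanged against a different tail.
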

\full{
	\begin{proof}
		The strategy $\sigma'$ plays like $\sigma$ except that once it is apparent that Romeo can force an infinite play against $\sigma$, all remaining moves of $\sigma'$ are Read.
		
		For a formal definition, consider some $\alpha\in\hs$ and $a\in\Sigma_f$. If there exists a play $\Pi$ of $\sigma$ that contains a configuration $(\alpha,a,v)$ for some $v\in\Sigma^*$ such that no later configuration with remaining string $v$ occurs in $\Pi$, then let $\sigma'(\alpha,a)=\Read$ and $\sigma'(\alpha a\beta,b)=\Read$ for each $\beta\in\hs$ and $b\in\Sigma_f$. For all elements of the domain for which $\sigma'$ is not already defined by this, we define $\sigma'$ like $\sigma$. Clearly, $\sigma'$ is terminating and $W(\sigma)\subseteq W(\sigma')$.
	\end{proof}
}
\section{Existence of dominant and undominated strategies}\label{cha:existence}
One-pass strategies are a restriction of \emph{left-to-right} strategies, where Juliet's moves can depend on the whole current configuration, including the remaining string.  
Every context-free game has a dominant left-to-right strategy for Juliet, \ie a strategy that dominates all other left-to-right strategies \cite{MuschollSS04,AbiteboulMB05}. Furthermore, there is such a strategy whose moves can be computed by a DFA that reads the current $T$-state, the current symbol and the remaining string.
Due to the ignorance of Juliet regarding the remaining string, this result fails to hold for one-pass strategies. It was shown in \cite[Example 2.6]{AbiteboulMB05} that dominant (one-pass) strategies need not exist, even for very restricted games.

\tikzset{action/.style={
      %  The shape:
      rectangle,
      %  The size:
      minimum size=6mm,
      minimum height = .8cm,
      %  The border:
     % very thick,
      % draw=red!50!black!50,
      % Corners
      rounded corners=2mm,
      %  The filling:
      fill=yellow,
    }
}

\newcommand{\mybox}[4]{\node[action,fill=yellow] (#1) at (#2) {\begin{minipage}{#3}%
\centering #4\end{minipage}};}

\begin{figure}[ht]
\scalebox{0.8}{
\begin{tikzpicture}
\node (solid1) at (0,0) {};
\node (solid2) at (1,0) {};
\draw[->] 	(solid1) -- (solid2) node [right] {general;};
\end{tikzpicture}
}
\scalebox{0.8}{
\begin{tikzpicture}
\node (sdash1) at (0,0) {};
\node (sdash2) at (1,0) {};
\draw[->,densely dashed] 	(sdash1) -- (sdash2) node [right] {bounded depth property;};
\end{tikzpicture}
}
\scalebox{0.8}{
\begin{tikzpicture}
\node (ldash1) at (0,0) {};
\node (ldash2) at (1,0) {};
\draw[->,loosely dashed] 	(ldash1) -- (ldash2) node [right] {prefix-free;};
\end{tikzpicture}
}
\scalebox{0.8}{
\begin{tikzpicture}
\node (dotted1) at (0,0) {};
\node (dotted2) at (1,0) {};
\draw[->,dotted] 	(dotted1) -- (dotted2) node [right] {finite $L(T)$ or non-recursive};
\end{tikzpicture}
}

\subcaptionbox{Dominant strategies}[.46\textwidth]{
\scalebox{0.7}{
\begin{tikzpicture}
\mybox{un}{1,2.5}{1.8cm}{unrestricted}
\mybox{reg}{4,2.5}{1.8cm}{regular}
\mybox{forg}{1,1}{1.8cm}{forgetful}
\mybox{forgreg}{4,1}{1.8cm}{forgetful,\\[-1mm]regular}
\mybox{sreg}{7,1}{1.8cm}{strongly\\[-1mm]regular}

\node (dummy) at (1,3.7) {};

\draw[->]  (dummy) -- (un)  node [midway] {$-$};
\draw[->]  (un) -- (forg) node [midway] {$-$};
\draw[->]  (reg) -- (forgreg) node [midway] {$-$};
\draw[->]  (reg) -- (forg) node [midway] {\footnotesize$\backslash$};
\draw[->]  (forgreg) -- (sreg) node [midway] {\footnotesize$|$};
\draw[->,dotted]  (un) -- (sreg);
\end{tikzpicture}
}
}
\hfill
\subcaptionbox{Undominated strategies}[.46\textwidth]{
\scalebox{0.7}{
\begin{tikzpicture}
\mybox{un}{0,2.5}{1.8cm}{unrestricted}
\mybox{reg}{3,2.5}{1.8cm}{regular}
\mybox{forg}{0,1}{1.8cm}{forgetful}
\mybox{forgreg}{3,1}{1.8cm}{forgetful,\\[-1mm]regular}
\mybox{sreg}{6,1}{1.8cm}{strongly\\[-1mm]regular}

\node (dummy) at (0,3.7) {};
\node (dummy1) at (3,3.7) {};

\draw[->,densely dashed]  (dummy) -- (un);
\draw[->,loosely dashed]  (dummy1) -- (reg);
\draw[->]  (reg) -- (forgreg) node [midway] {$-$};
\draw[->]  (un) -- (forg) node [midway] {$-$};
\draw[->]  (reg) -- (forg) node [midway] {\footnotesize$\backslash$};
\draw[->]  (forgreg) -- (sreg) node [midway] {\footnotesize$|$};
\end{tikzpicture}
}
}

\caption{Illustration of existence and implication results with respect to dominant and undominated strategies. Arrows from above indicate unconditional existence, arrows between boxes indicate implication of existence. Crossed out arrows indicate negative results. Arrow types indicate results for restricted game types as shown above. Upward and right-to-left solid arrows are not drawn, they all exist by definition. }
\label{fig:results}
\end{figure}
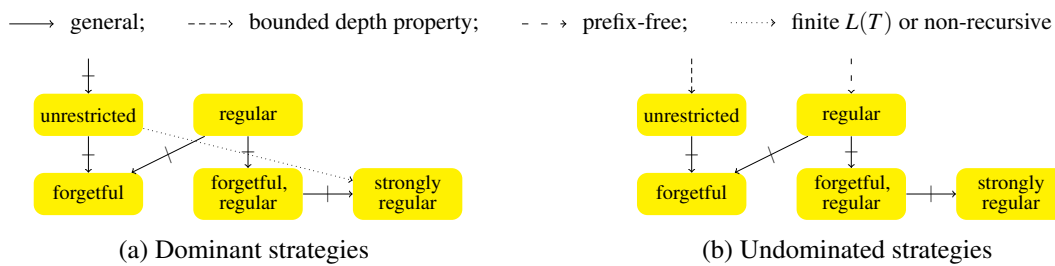

In this section, we investigate dominant and undominated strategies more deeply. The main results are illustrated in Figure~\ref{fig:results}.  We show that the situation for undominated strategies is much better than for dominant ones. Although it remains unclear whether such strategies exist for \emph{all} games, we identify important classes of games in which they do exist. More precisely, we give a semantical restriction, the bounded depth property, that guarantees existence of undominated strategies and a fairly natural syntactical restriction, prefix-freeness, that guarantees existence of regular undominated strategies.  For two other families of restricted games, non-recursive games and games with finite target language, we show that if they have a dominant strategy, there is even a strongly regular one. Finally, games with a unary alphabet also guarantee undominated strategies.

We complement these positive results by some negative findings: in general, the existence of a (regular) dominant strategy does not imply that there is a forgetful one. Furthermore, the existence of a regular and forgetful dominant strategy does not imply that there is a strongly regular one. The same holds with respect to undominated strategies, even for very restricted games.

\begin{definition}\label{def:boundedDepthProperty}
  \begin{itemize}
  \item A game $G=(\Sigma,R,T)$ has the \term{bounded depth property} if there
    exists a sequence $(B_k)_{k\in\mathbb N_0}\subseteq \mathbb N$
    such that for each one-pass strategy $\sigma$ for $G$ and each
    $k\in\mathbb N$ there exists a one-pass strategy $\sigma_k$ that
    wins on each $w\in W(\sigma)\cap\Sigma^{\le k}$ with plays
    of depth at most $B_{\abs{w}}$.
  \item A game $G=(\Sigma,R,T)$ is \emph{prefix-free} if each replacement language $L_a$ is prefix-free, that is, there are no $u,v\in L_a$ where $u$ is a proper prefix of $v$. 
  \item A game $G=(\Sigma,R,T)$ is \term{non-recursive} if no symbol can be derived from itself by a sequence of rules, \ie there do not exist $a_0,\dots,a_n\in\Sigma_f$, $n\ge1$, such that $a_0=a_n$ and for each $k=1,\dots,n$ there exists a word in $L_{a_{k-1}}$ containing $a_k$. 
  \item A game $G=(\Sigma,R,T)$ is \emph{unary} if $|\Sigma|=1$.
  \end{itemize}
\end{definition}

\subsection{Positive results}

Our positive results are summarised in the following theorem.

\begin{theorem}\label{theo:right}
  \begin{enumerate}[(a)]%[{label={(\alph*)}}]
  \item Every game with the bounded depth property has a  weakly dominant strategy.
  \item Every prefix-free game has a regular weakly dominant strategy.
  \item Every game with a finite target language that has a dominant strategy has a strongly regular dominant strategy. 
  \item Every non-recursive game with a dominant strategy has a strongly regular dominant strategy. 
  \item Every unary game has an undominated strategy.
  \end{enumerate}
\end{theorem}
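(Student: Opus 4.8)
The plan is to make part~(a) the backbone and obtain the other parts by refining or paralleling it. For~(a) I would first produce a \emph{candidate} winning set $W^*$ by a greedy sweep through $\Sigma^*$ in shortlex order: maintaining a finite ``committed'' set $S$ (initially empty), I process the strings $w_0 \lsl w_1 \lsl \cdots$ in turn and put $w_i$ into $S$ exactly when some one-pass strategy wins on $S\cup\{w_i\}$, letting $W^*$ be the resulting union. A short exchange argument then shows $W(\sigma')\lesl W^*$ for every strategy $\sigma'$: if $w$ is the $\lesl$-least string in the symmetric difference and, for contradiction, $w\in W(\sigma')\setminus W^*$, then on all strings below $w$ the sets $W(\sigma')$ and $W^*$ agree, so $\sigma'$ would have witnessed that $w$ should have been committed. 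Hence it suffices to \emph{realise} $W^*$, \ie to exhibit a single strategy $\sigma^*$ with $W^*\subseteq W(\sigma^*)$; by maximality this forces $W(\sigma^*)=W^*$ and makes $\sigma^*$ weakly dominant. This realisation step is where the bounded depth property, and the main difficulty, enter: for each $k$ the set $W^*\cap\Sigma^{\le k}$ is finite and winnable, so the property yields a strategy winning on it with every play of depth at most $B_{\abs{w}}$, and I would pass to a single strategy by compactness/diagonalisation over $k$. The obstacle is that the replacement languages $L_a$ may be infinite, so even at bounded depth a configuration has infinitely many successors. I expect to resolve this by abstracting a configuration to the \emph{stack of $T$-states} along its chain of pending calls together with the current symbol: once depth is bounded this stack has bounded height and ranges over a finite set, Romeo's replacements collapse to finitely many automaton-level effects, and winnability becomes a finite positional game at each depth level. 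Verifying that this abstraction is sound --- that $(T\text{-state stack},\text{symbol})$ really suffices to reproduce the wins of the bounded-depth strategies --- is the crux of~(a).

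For~(b) I would strengthen the realisation of~(a) to a finite automaton, using two facts. First, prefix-freeness lets a DFA \emph{detect replacement boundaries}: while reading the symbols Romeo produces for a call of $a$, the accumulated string lies in $L_a$ only when the replacement is complete, so the automaton knows exactly when to ``pop'' and update the $T$-state. Second, I would show prefix-free games enjoy a \emph{uniform} depth bound: a winning play in which some $T$-state repeats along the pending-call stack can be short-circuited to a shorter winning play, so the relevant stack height is bounded by the number of $T$-states. Together these let the abstract positional strategy of~(a) be implemented by a strategy automaton tracking the $T$-state and a bounded stack, which is regular. The main obstacle is the short-circuit lemma giving the uniform bound and checking it interacts correctly with boundary detection.

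Parts~(c) and~(d) I would treat together, since finite target languages and non-recursive rule sets both force a uniform depth bound --- non-recursivity bounds any call chain by $\abs{\Sigma_f}$, and a finite $L(T)$ bounds the length of any winning final string and hence the useful work. The new ingredient is the \emph{dominant}-strategy hypothesis, which I would use to \emph{normalise} decisions to depend only on the pair $(q,a)$ of current $T$-state and current symbol. The point is that a dominant strategy must win on \emph{every} winnable word simultaneously, so it cannot straddle a genuine conflict of the kind exhibited by the $\{ab,bc\}$ example; this should let me argue that for each $(q,a)$ a single choice --- $\Call$ precisely when calling is consistent with some win and never destroys one --- is forced, yielding a strongly regular strategy. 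The obstacle is proving that this per-$(q,a)$ strategy is still dominant, \ie that the existence of \emph{a} dominant strategy really removes all dependence on the history beyond $(q,a)$; I would establish this by an exchange argument, rerouting any history-dependent deviation through the canonical choice without losing any win.

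Finally,~(e) must be handled separately, because unary games may be recursive (\eg $a\to aa$) and so need not have the bounded depth property; thus~(a) does not apply and I aim only for an \emph{undominated} strategy, \ie a winning set maximal under inclusion. Over $\Sigma=\{a\}$ the only data are the ultimately periodic sets $\{k : a^k\in L_a\}$ and $\{n : a^n\in L(T)\}$, and a play is won iff the final length lands in the latter. I would model Juliet's task as an online reachability game on lengths and exploit ultimate periodicity so that the relevant state --- length taken modulo the common period, plus bounded bookkeeping --- is finite; the recursion is then handled by a fixpoint (attractor) computation rather than a finite game tree. With the winning sets thus describable by finitely much data, I would obtain a maximal realisable winning set either by a direct greedy maximisation or by checking the Zorn hypothesis that increasing chains of winning sets have a realisable upper bound. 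The main obstacle is exactly this last point: because depth is unbounded, showing that a maximal winning set is attained by a single strategy needs the periodic structure to tame the recursion.
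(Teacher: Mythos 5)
Your overall skeleton for part~(a) --- fix the $\lesl$-maximal candidate winning set greedily, then realise it by a compactness/diagonalisation over $k$ using the depth bounds --- is exactly the paper's proof (the paper takes $L_{k+1}$ to be the $\lesl$-maximal set of the form $W(\sigma)\cap\Sigma^{\le k+1}$ with $L_k\subseteq W(\sigma)$, which yields the same set as your string-by-string sweep). But the step you declare to be the crux of~(a) is not an obstacle at all, and the device you introduce to handle it is demonstrably unsound. The diagonalisation operates on restrictions $\restr{\sigma}{k}$ of strategies to the first $k$ rounds, i.e.\ on maps $\hist{\Sigma}^{\le k-1}\times\Sigma\to\{\Call,\Read\}$; since $\hist{\Sigma}^{\le k-1}$ is a finite set, there are only finitely many such maps, so the pigeonhole needed for convergence is available no matter how large the languages $L_a$ are. (Infinite branching matters only when checking that a play of depth at most $B_{|w|}$ is finite, which K\"onig's Lemma handles, because Romeo's actual replacements along a single play are fixed finite strings.) Your proposed abstraction --- current symbol plus the stack of $T$-states of pending calls --- genuinely loses winning power, and the paper's own example from the proof of Theorem~\ref{theo:wrong}(c) refutes it: in the game $a\to bb+cbc$, $b\to cc$ with target $\{bbc,bcc,cbc,ccc\}$ (prefix-free, non-recursive, finite target, hence within the scope of both (a) and (b)), winning on the shortlex-least winnable word $a$ requires playing \Call at $(\call{a}b,b,\epsilon)$ and \Read at $(\call{a}c,b,c)$, yet these two configurations have the same current symbol $b$, the same pending call of $a$ made at $T$-state $q_0$, and the same inner $T$-state $q_1$; your abstraction cannot separate them, so no strategy factoring through it can be weakly dominant there. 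Since the same abstraction is the backbone of your~(b), that part inherits the gap; moreover, (b) is missing its real core, namely how a finite automaton should choose online among incomparable achievable ``effects'' so that the resulting winning set is globally $\lesl$-maximal. In the paper this is the online word problem for NFAs (Lemma~\ref{lem:OnlineNFA}) together with the effect-triple composition machinery (Propositions~\ref{prop:GameToOnline}, \ref{prop:OnlineToGame} and~\ref{prop:sse}); your short-circuiting idea corresponds only to Proposition~\ref{prop:prefixfreeBDP}, and even there the paper's depth bound comes from counting effect triples, not repeated $T$-states on a stack.

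For (c) and (d) your instinct is right in spirit --- the paper also removes history dependence by an exchange argument, copying substrategies across $(q,a)$-conflicting configurations, with dominance guaranteeing that the copied subgames remain winnable --- but two of your claims fail. First, a finite target language does \emph{not} force a uniform depth bound on plays (with rule $a\to aa$ and target $\{aa\}$, Juliet can call forever); what finiteness actually buys is that Read moves cannot return to the conflict state $q$, plus a pumping argument showing that the two conflicting configurations carry the same remaining string. Second, a one-shot canonical per-$(q,a)$ choice with rerouting is not enough: rerouting alters downstream plays and can create new conflicts, which is why the paper removes conflicts iteratively (at most $|Q||\Sigma_f|$ rounds for (c)) and, for (d), needs an infinite sequence of dominant strategies converging to the desired one; the limit's dominance survives only via Lemma~\ref{lemma:conv}, and this is precisely where non-recursiveness (termination of all plays) enters --- not as a bound on a finite game tree. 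Finally, your~(e) rests on a greedy/Zorn maximisation over realisable winning sets, i.e.\ on the hypothesis that increasing chains of realisable winning sets admit realisable upper bounds; that hypothesis is essentially the paper's open question for general games, and nothing in your sketch derives it from unarity (your finite-state claim also fails as stated, since the remaining length is unbounded --- the paper's profiles track it explicitly). The paper's unary proof sidesteps this entirely: it shows an \emph{almost} undominated strategy can be upgraded by finitely many patches (Lemma~\ref{lem:almost}), and then argues that either the always-\Read strategy is undominated, or a ``patient'' dominating strategy exists and can be converted into one winning on an explicit ultimately periodic set $L$, any strategy winning on $L$ being almost undominated because winning on $L$ plus any sufficiently long additional word would let Romeo force an infinite play.
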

Since every non-recursive game trivially has the bounded-depth property, it also has an undominated strategy by part (a). Games with a finite target language constitute a  further class with the bounded-depth property and with regular undominated strategies \cite[Corollary 4.15]{Coester15}.

A concept used in the proofs of parts (a) and (d) is the convergence of a sequence of one-pass strategies.

\begin{definition}
	A sequence $(\sigma_k)_{k\in\mathbb N}$ of strategies \term{converges to} a strategy $\sigma$ if for each $n\in\mathbb N$ there exists $k_0\in\mathbb N$ such that for each $k\ge k_0$ and $(\alpha,a)\in\hist{\Sigma}^{\le n}\times\Sigma$ it holds that $\sigma(\alpha,a)=\sigma_k(\alpha,a)$.
\end{definition}

\full{
	This is the same as the convergence in the metric space of strategies where the metric $d$ is given by $d(\sigma,\sigma)=0$ and $d(\sigma,\sigma')=\frac{1}{n}$ for $\sigma\ne\sigma'$, where $n$ is minimal such that there exists $(\alpha,a)\in\hist{\Sigma}^n\times\Sigma$ and $\sigma(\alpha,a)\ne\sigma'(\alpha,a)$.
} % \full

\begin{lemma}[Convergence Lemma]\label{lemma:conv}
	Let $G$ be a game and $(\sigma_k)_{k\in\mathbb N}$ be a sequence of one-pass strategies that converges to some one-pass strategy $\sigma$. Let  $L_1\subseteq L_2 \subseteq \cdots$ be an infinite sequence of languages such that, for every $k$, $L_k\subseteq W(\sigma_k)$ and let $L\df \displaystyle \bigcup_{k\in\mathbb N} L_k$. 
	
	Then $\sigma$ wins on every word $w\in L$ for which it terminates.
\end{lemma}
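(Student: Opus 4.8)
The plan is to fix a word $w\in L$ on which $\sigma$ terminates and show that \emph{every} play of $\sigma$ on $w$ is winning for Juliet. Any play of $\sigma$ on $w$ has the form $\Pi(\sigma,\tau,w)$ for some Romeo strategy $\tau$, so I would fix such a $\tau$ and argue that this single play is winning; since $\tau$ is arbitrary and $\sigma$ terminates on $w$, this yields that $\sigma$ wins on $w$ by definition.

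The first step is to exploit termination to make the play \emph{finite}: write $\Pi(\sigma,\tau,w)=(K_0,\dots,K_m)$. Since each move appends exactly one symbol (either some $a\in\Sigma$ or some $\call{a}\in\call{\Sigma_f}$) to the history string and $K_0=(\epsilon,w)$, every configuration $K_i$ has a history string of length $i\le m$. Hence all decisions that $\sigma$ makes along this play are evaluated on arguments $(\alpha,a)$ with $\abs{\alpha}\le m$. Setting $n\df m$, the convergence hypothesis supplies a $k_0$ such that $\sigma_k$ and $\sigma$ agree on all of $\hist{\Sigma}^{\le n}\times\Sigma$ for every $k\ge k_0$.

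The key step is then a short induction showing $\Pi(\sigma_k,\tau,w)=\Pi(\sigma,\tau,w)$ for every $k\ge k_0$. Both plays start in $(\epsilon,w)$; assuming they agree on their first $i+1\le m$ configurations with common configuration $K_i=(\alpha,a,v)$, we have $\abs{\alpha}=i<m\le n$, so $\sigma_k(\alpha,a)=\sigma(\alpha,a)$, and since Romeo follows the same fixed $\tau$ in both plays the next configuration agrees as well. As $K_m=(\alpha,\epsilon)$ is terminal by the play rules, the play of $\sigma_k$ against $\tau$ also stops there, so the two plays are literally equal. Finally, because $L=\bigcup_k L_k$ with the $L_k$ increasing, $w$ lies in $L_k\subseteq W(\sigma_k)$ for all sufficiently large $k$; choosing $k\ge k_0$ that is also this large, the play $\Pi(\sigma_k,\tau,w)$ is winning, and by the equality of plays so is $\Pi(\sigma,\tau,w)$, i.e.\ it is finite with $\natural\alpha\in L(T)$.

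The one point needing care — and, I expect, the reason the lemma is stated only for words on which $\sigma$ terminates — is that the play lengths need \emph{not} be bounded uniformly over $\tau$ (indeed the excerpt already warns that finite-depth strategies can have arbitrarily large depths). Convergence only guarantees agreement of $\sigma_k$ with $\sigma$ up to some finite history length, so the threshold $k_0=k_0(\tau)$ must be chosen per play, \emph{after} its finite length is known; termination of $\sigma$ on $w$ is exactly what makes the relevant bound $n$ finite for each individual $\tau$. Once this per-play viewpoint is adopted, no uniform bound is required and the argument goes through.
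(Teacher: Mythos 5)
Your proof is correct and uses essentially the same idea as the paper's: a finite play of $\sigma$ on $w$ only consults histories up to its (finite) length, so by convergence it is literally also a play of $\sigma_k$ for all large $k$, and since $w\in L_k\subseteq W(\sigma_k)$ for large $k$ that play must be winning. The paper phrases this as a proof by contradiction starting from a losing (hence finite) play, while you argue directly play-by-play using the termination hypothesis and spell out the induction showing $\Pi(\sigma_k,\tau,w)=\Pi(\sigma,\tau,w)$; the substance is identical.
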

\full{
	\begin{proof}
		Towards a contradiction, suppose that $\sigma$ loses on a word $w\in L$. Then there exists a strategy of Romeo with which he wins the (finite) play $\Pi=\Pi(\sigma,\tau,w)=(K_0,\dots,K_n)$. Let $k_0\in\mathbb N$ be such that for each $k\ge k_0$ and $(\alpha,a)\in\hist{\Sigma}^{\le n}\times\Sigma$ it holds that $\sigma(\alpha,a)=\sigma_{k}(\alpha,a)$. Let furthermore $k_1$ be such that $w\in L_{k_1}$ and let $k\df \max(k_0,k_1)$.
		Then $\Pi$ is also a play of $\sigma_{k}$ on $w$. But then $\sigma_k$ loses on $w\in L_{k_1}\subseteq L_k$, the desired contradiction.
	\end{proof}
} % \full

\short{In the next section, we give the proof of part (a). In the subsequent section, we sketch the proofs of parts (b), (c) and (d). We also explain our motivation for prefix-freeness, which we consider the most relevant case and with the most interesting proof.}

\subsubsection{Games with the bounded depth property}\label{sec:boundedDepth}

Before we prove Theorem~\ref{theo:right}~(a), we define some further notation.
For a strategy $\sigma$ and some $i\ge0$, we denote by $\restr{\sigma}{i}$ the restriction of $\sigma$ to the first $i$ rounds of the game. Thus $\restr{\sigma}{i}$ is a mapping $\restr{\sigma}{i}\colon \hist{\Sigma}^{\le i-1}\times\Sigma\to\{\Call,\Read\}$ and $\restr{\sigma}{0}$ is the mapping with empty domain.

\begin{proofof}{Theorem~\ref{theo:right}~(a)}
    Let $G$ be a context-free game with the {bounded depth property} and let $(B_k)_{k\in\mathbb N_0}\subseteq \mathbb N$ be its sequence of depth bounds.

We first define a language $L$ which will serve as the winning set of
the weakly dominant strategy that will be constructed below.

The definition of $L$ is by induction. For each $k\ge 0$, we define a set $L_k\subseteq \Sigma^{\le k}$ such that $L_k\subseteq L_{k+1}$, and finally let $L\df \bigcup_k L_k$.

Let $L_0=\{\epsilon\}$ if $\epsilon$ is in the target language of $G$, and $L_0=\emptyset$ otherwise.

For $k\ge 0$, we define $L_{k+1}$ as the maximal set with respect to $\lesl$ of the form $W(\sigma)\cap\Sigma^{\le k+1}$ for some strategy $\sigma$ with $L_k\subseteq W(\sigma)$.
It is easy to see that the following two properties hold by construction.
\begin{enumerate}[(1)]%[{label={(\arabic*)}}]
\item For each $k$, there is a strategy $\sigma$ such that $L_k\subseteq W(\sigma)$.
\item There is no strategy $\sigma$ with $L\lsl W(\sigma)$.
\end{enumerate}
Thanks to property (2), it suffices to construct a strategy $\shat$ with $L\subseteq W(\shat)$

For each $k\ge 0$, let $S_k$ be the set of strategies $\sigma$ with $L_k\subseteq W(\sigma)$ for which each play on a word $w\in L_k$ has depth at most $B_{|w|}$.  Because of property (1) and since $G$ has the bounded depth property, we have $S_k\not=\emptyset$ for every $k\ge 0$.
    
    We will construct mappings $\rho_k\colon \hist{\Sigma}^{\le k-1}\times\Sigma\to\{\Call,\Read\}$ such that for every $k\ge 0$,
    \begin{itemize}
    	\item $\rho_{k+1}$ extends $\rho_k$; more
    	precisely: $\restr{\rho_{k+1}}{k}=\rho_k$, and
    	\item for each $\ell\ge k$ there exists $\sigma^k_\ell\in S_\ell$ with $\rho_k=\restr{\sigma^k_\ell}{k}$.
    \end{itemize}
	Let $\rho_0$ be the mapping with empty domain. Fix $k$ such that $\rho_0,\dots,\rho_k$ are defined and have the stated properties. Since there are only finitely many mappings $\hist{\Sigma}^{\le k}\times\Sigma\to\{\Call,\Read\}$, one of them has to occur infinitely often within $\big(\restr{\sigma^k_\ell}{k+1}\big)_{\ell\ge k}$. Let $\rho_{k+1}$ be such a mapping. For $\ell'\ge k+1$ we can choose $\sigma^{k+1}_{\ell'}=\sigma^{k}_{\ell}$ for some $\ell\ge\ell'$ with $\rho_{k+1}=\restr{\sigma^k_\ell}{k+1}$. This defines a sequence $\rho_0,\rho_1,\dots$ with the properties above. Let $\shat$ be the strategy that is uniquely determined by $\restr{\shat}{k}=\rho_k$, for every $k$. Clearly $(\rho_k)_{k\in\mathbb N}$ converges\footnote{Since the $\rho_k$ are only partially defined, one might consider the strategies $\sigma_k$ that result from the $\rho_k$ which take the value \Call whenever $\rho_k$ is undefined.} to $\shat$. 

Thanks to Lemma~\ref{lemma:conv} it suffices to show that $\shat$ terminates on $L$.
Let thus $w\in L$ and $\tau$ be a Romeo strategy. We show that the depth of  $\Pi\df\Pi(\shat,\tau,w)$ is at most $B_{|w|}$. Otherwise let $k$ be such that in the $k$th round Juliet does a $\Call$ move of nesting depth $B_{|w|}+1$. However, $\restr{\shat}{k}=\rho_k=\restr{\sigma_\ell^k}{k}$, where $\ell=\max\{k,|w|\}$, and $\sigma_\ell^k\in S_\ell$ has depth at most $B_{|w|}$ on $w\in L_\ell$, a contradiction. Therefore the depth of $\Pi$ is at most $B_{|w|}$ and by K\"{o}nig's Lemma $\Pi$ is thus finite, completing the proof.
\end{proofof}

\subsubsection{Prefix-free games}\label{sec:selfDelim}

Prefix-freeness appears as a realistic constraint for a practical
(Active XML) setting since it can be easily enforced by suffixing each
replacement string with a special end-of-file symbol. In this sense,
\emph{every} game $G=(\Sigma,R,T)$ can be transformed into a
prefix-free game $G'=(\Sigma',R',T')$ by letting
$\Sigma'=\Sigma\dotCup\{\$\}$ for some new symbol $\$\notin\Sigma$
that shall denote the end of replacement strings, and further letting
$R_a'=R^{\phantom\prime}_a\$$ for each $a\in\Sigma_f$ to enforce that
replacement words end with $\$$, and adding a loop transition for the
symbol $\$$ to each state of $T$ (accomplishing that the symbol~$\$$
is ``ignored'' by the target language). Another special case of
prefix-free games, which is similar to the \emph{one-pass with size}
setting discussed in \cite{AbiteboulMB05}, are games where the
alphabet $\Sigma$ contains (besides other symbols) numbers $1,\dots,N$
for some $N\in\mathbb N$ and all replacement strings are of the form
$nx$ where $x\in\Sigma^+$ and $n=\abs{x}$. Our result for prefix-free games also easily transfers to the setting where the input word is revealed to Juliet in a one-pass fashion, but Romeo's replacement words are revealed immediately.

We \short{sketch}\full{give} the proof of Theorem~\ref{theo:right}~(b) in the following.

A context-free game on a string $w=a_1\cdots a_n$ can be viewed as a
sequence of $n$ games on the single symbols $a_1,\ldots,a_n$.
Intuitively, in prefix-free games Juliet has the benefit to know when
a subgame on some symbol $a_i$ has ended and when the next subgame
starts. 

This allows us to view strategies of Juliet in a hierarchical
way: they consist of a top-level strategy that chooses, whenever a
subgame on some $a_i$ starts, a strategy for this subgame. This
choice may take the current history string into account.
We will use this view to proceed in an inductive fashion: we 
establish that there are automata for the subgame strategies and then combine these
automata with suitable automata for a ``top-level'' strategy. 

It turns out that the choice of the top-level strategy boils down to
an ``online word problem'' for NFAs which we introduce and study first.

\paragraph*{The online word problem for NFAs}

In the online-version of the word problem for an NFA $\calN$, denoted \online, the
single player gets to know the symbols of a word one by one, and always needs to decide which transition $\calN$ should
take before the next symbol is revealed. We only consider the case that $\calN=(Q,\Sigma,\delta,s,F)$ has at
least one transition for each symbol from each state. Formally, a strategy is a map $\rho\colon\Sigma^*\to Q$ such that $\rho(\epsilon)=s$ and $(\rho(w),a,\rho(wa))\in\delta$ for each $w\in\Sigma^*$ and $a\in\Sigma$.

Given a strategy $\rho$
for \online, we denote by $W_\calN(\rho)$ the \emph{winning
  set} of words that are accepted by $\calN$ if the player follows
$\rho$. A strategy $\rho$ is
\emph{weakly dominant}  if, for every
strategy $\rho'$, it holds $W_\calN(\rho')\lesl W_\calN(\rho)$.

\full{
For an NFA $\calN=(Q,\Sigma,\delta,s,F)$ and a state $q\in Q$, we
write $\calN^q$ for the NFA $(Q,\Sigma,\delta,q,F)$ with starting
state $q$.
We say that a word $w$ is \emph{universally accepted} by an NFA
$\calN$, if \emph{every} run of $\calN$ on $w$ is accepting. 
}%\full

We are interested in strategies that can be computed by automata. A
particularly simple such strategy for \textsc{OnlineNFA}($\mathcal
N$) can be obtained by transforming $\mathcal N$ into a DFA $\mathcal
D$ by removing transitions. The associated strategy $\rho_{\mathcal
  D}$ is the one that only uses the transitions of $\calD$. We prove that $\calD$ can be chosen such that $\rho_{\mathcal
  D}$ is weakly dominant. 

\begin{lemma}\label{lem:OnlineNFA}
	For each NFA $\mathcal N$, there exists a DFA $\mathcal D$
        obtained by removing transitions from $\mathcal N$ such that
        $\rho_{\mathcal D}$ is a weakly dominant strategy for \online.
\end{lemma}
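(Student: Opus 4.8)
The plan is to build $\mathcal D$ from a family of finite-horizon optimal values and then prove optimality against arbitrary (history-dependent) strategies by a level-by-level comparison. For a state $q$ and $k\ge 0$, define languages $V_k(q)\subseteq\Sigma^{\le k}$ by the recursion
\[
V_0(q)=E_q,\qquad
V_{k+1}(q)=E_q\cup\bigcup_{a\in\Sigma} a\cdot M_k(q,a),\qquad
M_k(q,a)=\max_{\lesl}\{\,V_k(q')\mid (q,a,q')\in\delta\,\},
\]
where $E_q=\{\epsilon\}$ if $q\in F$ and $E_q=\emptyset$ otherwise, and the maximum is over the finitely many $a$-successors of $q$ (which exist by the standing assumption that $\calN$ has a transition for every symbol from every state). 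These are designed so that $V_k(q)$ is the $\lesl$-maximal set of the form $W_\calN(\rho)\cap\Sigma^{\le k}$ over all strategies $\rho$ for the online problem on $\calN^q$, which I will verify below.

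The technical heart is an interleaving lemma for the shortlex order on languages: if $V,W$ agree on membership of $\epsilon$, then $W\lesl V$ \emph{if and only if} $a^{-1}W\lesl a^{-1}V$ for every $a\in\Sigma$, where $a^{-1}X=\{v\mid av\in X\}$. This is proved by locating the $\lsl$-minimal string in the symmetric difference $V\mathbin{\triangle}W$: it is never $\epsilon$, so it has the form $av^*$, and within the block of strings starting with $a$ the shortlex order is exactly the shortlex order on the continuations, so $v^*$ is simultaneously the minimal witness comparing $a^{-1}W$ with $a^{-1}V$. Read in one direction this lemma justifies the recursion, since the blocks indexed by different first symbols are disjoint and may be optimised independently; read in the other direction it drives the optimality proof. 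Using it I prove by induction on $k$ that $V_k(q)$ is the finite-horizon optimum: any strategy $\rho$, after its first move from $q$ on $a$, sits in some successor $q'$ and continues as a horizon-$(k-1)$ strategy there, so $a^{-1}\!\big(W_\calN(\rho)\cap\Sigma^{\le k}\big)\lesl V_{k-1}(q')\lesl M_{k-1}(q,a)$ for every $a$, whence $W_\calN(\rho)\cap\Sigma^{\le k}\lesl V_k(q)$ by the lemma.

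Next I pass to the limit. A short induction gives the consistency $V_{k+1}(q)\cap\Sigma^{\le k}=V_k(q)$, so the $V_k(q)$ form an increasing chain and $V(q)\df\bigcup_k V_k(q)$ is a well-defined language; consistency also yields $\bigcup_k M_k(q,a)=\max_{\lesl}\{V(q')\mid (q,a,q')\in\delta\}$ and that this maximum over the finitely many successors is attained by a fixed successor. I define $\mathcal D$ by keeping, for each pair $(q,a)$, exactly one transition $(q,a,q^*)$ to such a $\lesl$-maximal-value successor $q^*$; this is a DFA obtained from $\calN$ by removing transitions. By construction $V$ satisfies the guarded equation $V(q)=E_q\cup\bigcup_a a\,V(\delta_{\mathcal D}(q,a))$, and the language of $\mathcal D$ read from $q$ satisfies the same equation; since such equations determine membership of each word by induction on its length, they have a unique solution, so $L(\mathcal D)=W_\calN(\rho_{\mathcal D})=V(s)$.

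Finally, optimality against \emph{all} strategies reduces to the finite case. If some strategy $\rho$ satisfied $V(s)\lsl W_\calN(\rho)$, the $\lsl$-minimal string $w^*$ of the symmetric difference would lie in $W_\calN(\rho)\setminus V(s)$; taking $k=\abs{w^*}$ and using $V(s)\cap\Sigma^{\le k}=V_k(s)$, the same $w^*$ would witness $V_k(s)\lsl W_\calN(\rho)\cap\Sigma^{\le k}$, contradicting finite-horizon optimality. Hence $\rho_{\mathcal D}$ is weakly dominant. I expect the two main obstacles to be getting the shortlex bookkeeping of the interleaving lemma exactly right, and, because of cycles in $\calN$, the limit step: showing that the per-symbol argmax of the values stabilises, so that a single time-independent transition choice realises $V$.
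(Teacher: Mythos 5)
Your proof is correct in its essential structure, but it takes a genuinely different route from the paper's. The paper never writes down value languages explicitly: it iteratively prunes the NFA itself, defining $\calN_0,\calN_1,\dots$ where $\calN_{n+1}$ keeps, for each pair $(q,a)$, only transitions to successors maximising the length-$\le n$ language of $\calN_n$; stationarity is free because transition sets decrease, and the two key claims are that every short word accepted by the limit automaton $\widetilde{\calN}$ is \emph{universally} accepted by it (so that \emph{any} deterministic restriction $\calD$ of $\widetilde{\calN}$ satisfies $L(\calD)=L(\widetilde{\calN})$), and a minimal-counterexample argument showing no online strategy beats $L(\widetilde{\calN})$. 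You instead run a Bellman-style recursion for the finite-horizon optima $V_k(q)$ in the \emph{original} automaton, pass to the limit, and extract one argmax transition per $(q,a)$. This trades the paper's universal-acceptance claim for two obligations you correctly identify: argmax stabilisation and uniqueness of solutions of the guarded equations. The stabilisation does follow from your consistency property: if $V_k(q_1)\lsl V_k(q_2)$, the minimal witness has length $\le k$ and is preserved at every larger horizon, so strict comparisons persist, the argmax sets are nested finite sets, and a fixed maximising successor exists. Both proofs share the same engine (finite-horizon optimisation, stabilisation, reduction of global optimality to finite horizons); yours is more explicitly computational, the paper's yields the extra robustness that any pruning of $\widetilde{\calN}$ to a DFA works.

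One statement in your write-up is false as stated, though luckily it is not load-bearing. Your interleaving lemma claims an equivalence: $W\lesl V$ \emph{iff} $a^{-1}W\lesl a^{-1}V$ for every $a\in\Sigma$. The ``only if'' direction fails: take $\Sigma=\{a,b\}$ with $a<b$, $V=\{a\}$, $W=\{b\}$; the minimal witness $a$ lies in $V$, so $W\lsl V$, yet $b^{-1}W=\{\epsilon\}\gsl\emptyset=b^{-1}V$. Global comparison only constrains the block of the minimal witness, not every block. Fortunately your argument only ever invokes the ``if'' direction — in the induction step you derive $a^{-1}\bigl(W_\calN(\rho)\cap\Sigma^{\le k}\bigr)\lesl a^{-1}V_k(q)$ for \emph{every} $a$ and conclude $W_\calN(\rho)\cap\Sigma^{\le k}\lesl V_k(q)$ — and that direction is exactly what your minimal-witness argument proves: the minimal string of $V\mathbin{\triangle}W$ has the form $av^*$, $v^*$ is then minimal in $a^{-1}V\mathbin{\triangle}\,a^{-1}W$, and the hypothesis for that particular $a$ forces $v^*\in a^{-1}V$, hence $W\lsl V$. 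The achievability half of your recursion (``blocks may be optimised independently'') does not need the false direction either: it is a direct computation that the strategy moving to an argmax successor and then following an optimal horizon-$(k-1)$ strategy has restricted winning set exactly $E_q\cup\bigcup_a a\,M_{k-1}(q,a)$. So restate the lemma as the one-directional implication and your proof goes through.
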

\full{
\begin{proof}	
	We define a sequence $\mathcal N_0,\mathcal N_1,\dots$
        of NFAs, starting from $\mathcal N_0=\mathcal N$. Each
        $\calN_{n+1}$ results from $\calN_n$ by removing 
        transitions in the following way. For a state $q$, let
        $L^q_n$ be the set of words of length at most $n$ that are
        non-deterministically accepted by $\mathcal N^q_n$. For each state $q$ and symbol $a$, we only keep
        the transitions to those successor states $p$ that maximise
        $L^{p}_n$ with respect to $\lesl$. 
	
	Since there are only finitely many transitions in $\calN$, the
        sequence $\mathcal N_0,\mathcal N_1,\dots$ becomes stationary
        after finitely many steps. By $\widetilde{\calN}$ we denote the last NFA
        of the sequence.

	\begin{myclaim}\label{cl:LSnAlsoUniv}
	For each $q$ and $n$, all words in $L^q_n$ are
        non-deterministically and universally accepted by $\widetilde{\calN}^q$.
	\end{myclaim}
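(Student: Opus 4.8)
I would prove the claim by induction on $n$, and along the way strengthen ``non-deterministically accepted'' to the apparently stronger ``universally accepted'': the plan is to show that every word in $L^q_n$ is universally accepted by $\widetilde{\calN}^q$, for every state $q$, and then obtain non-deterministic acceptance for free. The point is that each $\calN_n$ keeps at least one $a$-transition from every state (the set of $\lesl$-maximisers of a nonempty finite family of languages is nonempty, since $\lesl$ is a total order), and this completeness is inherited by $\widetilde{\calN}$. Hence every word has at least one run from $q$ in $\widetilde{\calN}^q$, so ``all runs are accepting'' already entails ``some run is accepting''. Throughout I would also use the trivial monotonicity $\calN_{n+1}\subseteq\calN_n$ at the level of transition sets: every run of $\calN_{n+1}$ is a run of $\calN_n$, so $L(\calN^q_{n+1})\subseteq L(\calN^q_n)$.

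For the base case $n=0$ one has $L^q_0=\{\epsilon\}$ if $q\in F$ and $L^q_0=\emptyset$ otherwise; since the removal steps never alter the accepting set $F$, the word $\epsilon$ is universally accepted by $\widetilde{\calN}^q$ exactly when $q\in F$, settling $n=0$. For the inductive step I assume the statement for $n$ and take $w\in L^q_{n+1}$. If $\abs{w}\le n$, then $w$ is accepted by $\calN^q_{n+1}\subseteq\calN^q_n$ with $\abs{w}\le n$, so $w\in L^q_n$ and the induction hypothesis applies directly.

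The interesting case is $\abs{w}=n+1$, say $w=au$ with $\abs{u}=n$. Since $w$ is accepted by $\calN^q_{n+1}$, there is a transition $q\xrightarrow{a}p$ in $\calN_{n+1}$ with $u$ accepted by $\calN^p_{n+1}$; by monotonicity $u$ is accepted by $\calN^p_n$, and as $\abs{u}=n$ this gives $u\in L^p_n$. Crucially, because this transition survived the passage from $\calN_n$ to $\calN_{n+1}$, the state $p$ is a $\lesl$-maximiser of the languages $L^{p'}_n$ over the $a$-successors $p'$ of $q$ in $\calN_n$. Now take an arbitrary run of $\widetilde{\calN}^q$ on $w$; its first step is some transition $q\xrightarrow{a}p_0$ of $\widetilde{\calN}$, and its remainder is a run of $\widetilde{\calN}^{p_0}$ on $u$. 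Since $\widetilde{\calN}\subseteq\calN_{n+1}$, this transition also survived the step from $\calN_n$ to $\calN_{n+1}$, so $p_0$ is likewise a $\lesl$-maximiser of the $L^{p'}_n$. As $\lesl$ is a \emph{total} order the maximal language is unique, whence $L^{p_0}_n=L^p_n$, and in particular $u\in L^{p_0}_n$. The induction hypothesis, applied at state $p_0$, shows $u$ is universally accepted by $\widetilde{\calN}^{p_0}$, so the remainder of our run is accepting. As the run was arbitrary, $w$ is universally accepted by $\widetilde{\calN}^q$, completing the induction.

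The main obstacle is the interplay between the intermediate automaton $\calN_n$, which defines $L^q_n$, and the limit $\widetilde{\calN}$, whose transition set may be strictly smaller than that of $\calN_{n+1}$: the witness successor $p$ extracted from acceptance of $w$ need not survive into $\widetilde{\calN}$, so one cannot simply ``follow $p$''. The resolution is the observation that \emph{every} $a$-successor of $q$ surviving into $\widetilde{\calN}$ was already a maximiser at step $n$, and therefore attains the same unique maximal language $L^p_n$ as the witness. It is essential to argue via equality of this maximal language rather than via set inclusion, since a $\lesl$-maximal language need not contain the competing smaller ones (\eg $\{aa\}\lsl\{b\}$ although $\{b\}\not\supseteq\{aa\}$); the transfer $u\in L^p_n\Rightarrow u\in L^{p_0}_n$ genuinely rests on $p$ and $p_0$ realising the same maximum.
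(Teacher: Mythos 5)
Your proof is correct and follows essentially the same route as the paper's: induction on $n$, decomposition of a maximal-length word as $au$ with a witness successor $p$, and the crucial observation that every $a$-successor surviving the pruning step from $\calN_n$ to $\calN_{n+1}$ attains the unique $\lesl$-maximal language $L^p_n$ (uniqueness by totality of $\lesl$), so the induction hypothesis applies along every branch of an arbitrary run. The only difference is bookkeeping: the paper strengthens its induction hypothesis to universal acceptance by every intermediate automaton $\calN^q_m$ with $m\ge n$ and transfers to $\widetilde{\calN}$ at the end, whereas you state the hypothesis for $\widetilde{\calN}$ directly and exploit the transition inclusions $\widetilde{\calN}\subseteq\calN_{n+1}\subseteq\calN_n$ --- a slightly cleaner formulation of the same argument.
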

	\begin{proof}
		We show a slightly stronger claim by induction on $n$:
                for each $q$, $n$ and $m\ge n$, all words in $L^q_n$ are
        non-deterministically and universally accepted by $\calN^q_m$.
                 For $n=0$ the statement is trivial.
		
		Let $aw\in L^q_{n+1}$ with $a\in\Sigma$ and
                $w\in\Sigma^{\le n}$. Let $p$ be a state reached
                after reading the initial $a$ in some accepting run of
                $\calN^q_{n+1}$ on
                $aw$. Then
                $w\in L^p_n$. By definition of $\mathcal N_{n+1}$,
                $L^r_n=L^p_n$ for every state $r$ reachable
                from $q$ by reading $a$ in $\mathcal
                N_{n+1}$. Therefore, by the induction
                hypothesis, $w$
                is universally accepted by $\calN^r_n$, for any such $r$. Thus,
                $aw$ is universally accepted by $\calN^q_{n+1}$.

	Since each $\calN_{i+1}$ results from $\calN_{i}$ by removing
        transitions, every word that is universally accepted by
        $\calN_{i}$ is also universally accepted by
        $\calN_{i+1}$. On the other hand, since each state has a
        transition for each symbol, every universally accepted word is
        also nondeterministically accepted. 
Therefore, for each $q$, $n$, all words in $L^q_n$ are
non-deterministically and universally accepted by
$\widetilde{\calN}^q_m$, for every $m\ge n$, and therefore also by $\calN^q$. 
	\end{proof}

	\begin{myclaim}\label{cl:LSnMax}
For each strategy $\rho$ for \online, $W_\calN(\rho)\lesl  L(\widetilde{\calN})$.
	\end{myclaim}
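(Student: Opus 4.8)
The plan is to prove, by induction on $n$, the uniform truncated statement
\[
(\star_n)\colon\qquad W_{\calN^q}(\rho)\cap\Sigma^{\le n}\ \lesl\ L(\widetilde{\calN}^q)\cap\Sigma^{\le n}
\]
for every state $q$, every strategy $\rho$ with $\rho(\epsilon)=q$, and every $n\ge 0$; Claim~\ref{cl:LSnMax} is then the instance $q=s$. Two elementary facts about $\lesl$ make this reduction sound, both stemming from the fact that $\lesl$ compares sets length-class by length-class. First, $(\star_n)$ for all $n$ yields $W_\calN(\rho)\lesl L(\widetilde{\calN})$: if $d$ is the $\lesl$-minimal word of the symmetric difference of the two full sets, then $(\star_{|d|})$ already forces $d\in L(\widetilde{\calN})$. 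Second, $A\lesl B$ on full sets implies $A\cap\Sigma^{\le n}\lesl B\cap\Sigma^{\le n}$ for every $n$; I will use this silently. Write $\widetilde L^q\df L(\widetilde{\calN}^q)$ for brevity.

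For the induction step I would decompose along the first symbol. A nonempty word lies in $W_{\calN^q}(\rho)$ exactly if, after $\rho$ moves from $q$ to $p_a\df\rho(a)$ on the first letter $a$, the residual strategy $\rho^a\df(v\mapsto\rho(av))$ wins the rest from $p_a$; moreover $a^{-1}\widetilde L^q=\bigcup_{(q,a,r)\in\widetilde\delta}\widetilde L^r$, where $\widetilde\delta$ is the transition relation of $\widetilde{\calN}$. Hence both sides of $(\star_{n+1})$ have the shape $E\cup\bigcup_a a\,S_a$ and $E\cup\bigcup_a a\,T_a$, with $E\in\{\{\epsilon\},\emptyset\}$ identical on both sides (according to whether $q\in F$), $S_a=W_{\calN^{p_a}}(\rho^a)\cap\Sigma^{\le n}$ and $T_a=(a^{-1}\widetilde L^q)\cap\Sigma^{\le n}$. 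A short shortlex computation then gives a monotonicity lemma: if $S_a\lesl T_a$ for every $a$, then $\bigcup_a a\,S_a\lesl\bigcup_a a\,T_a$. Indeed the $\lesl$-minimal word of the symmetric difference has the form $a_0 d$ where $d$ is the $\lesl$-minimal word of $S_{a_0}\triangle T_{a_0}$, and $S_{a_0}\lesl T_{a_0}$ places $d\in T_{a_0}$. So it suffices to establish $S_a\lesl T_a$ for each $a$.

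By the induction hypothesis applied to $p_a$ and $\rho^a$ we have $S_a\lesl\widetilde L^{p_a}\cap\Sigma^{\le n}$, so $S_a\lesl T_a$ reduces to the automata-theoretic statement that surviving transitions are $\lesl$-optimal:
\[
\text{(Key)}\qquad \widetilde L^{p}\cap\Sigma^{\le n}\ \lesl\ \Big(\bigcup_{(q,a,r)\in\widetilde\delta}\widetilde L^{r}\Big)\cap\Sigma^{\le n}
\]
for every $\calN$-successor $p$ of $q$ under $a$. For a surviving $p$ this is a mere subset inclusion. The hard part, and the main obstacle, is a \emph{removed} successor: the construction compared successors only through their languages $L^{\cdot}_m$ inside the \emph{intermediate} automaton $\calN_m$ at the removal stage $m$, whereas (Key) speaks about their languages $\widetilde L^{\cdot}$ in the \emph{final} automaton.

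I would bridge this gap using Claim~\ref{cl:LSnAlsoUniv}, which yields $L^q_m\subseteq\widetilde L^q$; combined with the obvious monotonicity of recognised languages under transition removal (which gives $\widetilde L^q\cap\Sigma^{\le m}\subseteq L^q_m$) this upgrades to the clean identity $L^q_m=\widetilde L^q\cap\Sigma^{\le m}$ for all $q$ and $m$. Now if $p$ is removed at stage $m$ in favour of a maximal successor $p^\ast$, then $L^{p}_m\lsl L^{p^\ast}_m$, which via the identity reads $\widetilde L^{p}\cap\Sigma^{\le m}\lsl\widetilde L^{p^\ast}\cap\Sigma^{\le m}$; since the distinguishing word then has length $\le m$, this even gives $\widetilde L^{p}\lsl\widetilde L^{p^\ast}$ on full sets. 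As $p^\ast$ survives at least one stage longer than $p$, a strong induction on $N-m$ (with $N$ the index at which the sequence stabilises, and surviving successors as base case) produces a surviving $r_0$ with $\widetilde L^{p^\ast}\lesl\widetilde L^{r_0}$; transitivity of $\lesl$ then gives $\widetilde L^{p}\lesl\widetilde L^{r_0}$, and since $\widetilde L^{r_0}\subseteq\bigcup_{(q,a,r)\in\widetilde\delta}\widetilde L^{r}$ this establishes (Key). Feeding this back through the induction on $n$ and specialising to $q=s$ proves Claim~\ref{cl:LSnMax}.
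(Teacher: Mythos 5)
Your proof is correct, and it establishes the same strengthened invariant as the paper does: the paper proves $W_{\calN^q}(\rho)\cap\Sigma^{\le n}\lesl L^q_n$ for all $q$, $n$ and $\rho$, which is exactly your $(\star_n)$ once one knows $L^q_n=L(\widetilde{\calN}^q)\cap\Sigma^{\le n}$ --- an identity you prove explicitly (from Claim~\ref{cl:LSnAlsoUniv} plus monotonicity under transition removal) and the paper uses only implicitly. The shared core is the first-symbol decomposition with the residual strategy $\rho^a$ and the fact that a removed transition leads to a $\lesl$-smaller language. The differences are architectural. The paper argues by contradiction from a doubly minimal counterexample (minimal $n$, then $\lesl$-minimal $w$); the minimality of $w$ does precisely the work of your monotonicity lemma for letter-prefixed unions, transferring a distinguishing word at level $n-1$ up to a smaller distinguishing word at level $n$. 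More substantially, the paper keeps its invariant in terms of the intermediate automata $\calN_n$: there, handling the removed transition $(q,a,p)$ takes a single comparison, because any transition $(q,a,r)$ still present in $\calN_n$ was in particular kept at the stage $m$ where $(q,a,p)$ was discarded, so $L^p_m\lsl L^r_m$ directly. Because you phrase everything with respect to the final automaton $\widetilde{\calN}$, the dominating successor $p^\ast$ at $p$'s removal stage may itself be removed later, and you correctly identified that this forces your extra inner induction chaining through removal stages until a surviving successor is reached --- a lemma the paper never needs. What your route buys is a fully modular direct induction whose statement never mentions the intermediate automata; what the paper's route buys is brevity, at the price of threading the level-indexed languages $L^q_n$ (and the transfer between levels, i.e.\ your identity) through the whole argument.
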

	\begin{proof}
We show a more precise statement: for each $q$,
each $n\ge 0$,  and
each strategy $\rho$ for \online[\calN^q],
$W_{\calN^q}(\rho)\cap\Sigma^{\le n}\lesl  L^q_n$.
Claim~\ref{cl:LSnMax} easily follows from this statement and Claim~\ref{cl:LSnAlsoUniv}.

Towards a contradiction, assume that there are $n,q,\rho$
with $W_{\calN^q}(\rho)\cap\Sigma^{\le n}\gsl L^q_n$. Let $n$ be minimal
with this property and let $w\in (W_{\calN^q}(\rho)\cap\Sigma^{\le
  n})\setminus L^q_n$ be chosen minimal with respect to
$\lesl$. Clearly, for all $w'\lsl w$ it holds $w'\in (W_{\calN^q}(\rho)\cap\Sigma^{\le
  n}) \Leftrightarrow w'\in L^q_n$

Clearly $n>0$, so we
can write $w=au$ for $a\in\Sigma$ and $u\in\Sigma^{n-1}$. Let $p=\rho(a)$. Since $n$ is minimal, for the
strategy $\rho'$ induced by $\rho$ for $\calN^p$, defined by $\rho'(v)=\rho(av)$, it holds
\begin{align}\label{eq:prefixfreeTyp2UndomExists1}
W_{\calN^p}(\rho')\cap\Sigma^{\le n-1}\lesl  L^p_{n-1}. 
\end{align}

Let $r$ be some state for which $(q,a,r)$ is a transition in
$\calN_n$. We claim
\begin{align}\label{eq:prefixfreeTyp2UndomExists2}
W_{\calN^p}(\rho')\cap\Sigma^{\le n-1}\gsl  L^r_{n-1}. 
\end{align}
Towards a contradiction assume otherwise. Since $u\in W_{\calN^p}(\rho')\cap\Sigma^{\le n-1}$ and $u\not\in
L^r_{n-1}$, there must be some  $u'\lsl u$ with $u'\not\in W_{\calN^p}(\rho')\cap\Sigma^{\le n-1}$ and $u'\in
L^r_{n-1}$. However, then $au'\not\in W_{\calN^q}(\rho)\cap\Sigma^{\le
  n}$ and $au'\in L^q_n$, a contradiction, since $au'\lsl au=w$.

(\ref{eq:prefixfreeTyp2UndomExists1}) and
(\ref{eq:prefixfreeTyp2UndomExists2}) imply that $(q,a,p)$ is not a transition in
$\calN_n$. 
	Let $m\le n-1$ be maximal such that $(q,a,p)$ is a transition in
        $\mathcal N_m$. Then $L^p_m\lsl L^r_m$ and hence
        $L^p_{n-1}\lsl L^r_{n-1}$, contradicting (\ref{eq:prefixfreeTyp2UndomExists1}) and
(\ref{eq:prefixfreeTyp2UndomExists2}). 
	\end{proof}

	 Let
        $\calD$ be any DFA resulting from $\widetilde{\calN}$ by removing some
        further transitions. By Claim~\ref{cl:LSnAlsoUniv},
        $L(\calD)=L(\widetilde{\calN})$. Therefore, by Claim~\ref{cl:LSnMax},
        $\rho_\calD$ is weakly dominant, concluding the proof of Lemma~\ref{lem:OnlineNFA}.
\end{proof}
} % \full

\paragraph*{Game composition and game effects.}

Let in the following, $G=(\Sigma,R,T)$ be a prefix-free game with
        $T=(Q,\Sigma,\delta,s,F)$. Let furthermore, for every
        $a\in\Sigma$, 
        $\calA_a=(Q_a,\Sigma,\delta_a,s_a,\{f_a\})$ be a minimal DFA for the replacement language $L_a$ of $a$. Since $L_a$ is
        prefix-free, $\calA_a$ has a unique accepting state $f_a$.

For a strategy $\sigma$ (for Juliet or Romeo) and a string $\alpha\in\hist{\Sigma}^*$, we define the \emph{substrategy} $\sigma^\alpha$ of $\sigma$ by $\sigma^\alpha(\beta,a)=\sigma(\alpha\beta, a)$.
 
In the following,
        $\states{q}{w,\sigma}$ denotes the set of $T$-states that can be
        reached at the end of a play of $\sigma$ on $w$ if the initial
        state of $T$ were $q$. More precisely, it is the
set of states of the form
        $\delta^*(q,\natural \alpha)$ where $\alpha$ is a final history string of a play of $\sigma$ on $w$. 

An \emph{effect triple} $(p,a,S)$ consists of a state $p\in Q$, a
symbol $a\in \Sigma$ and a set $S\subseteq Q$. We say that $(p,a,S)$
is an effect triple of $\sigma$ if $\states{p}{a,\sigma}\subseteq S$. We call $(p,a,S)$ \emph{trivial} if $\delta(p,a)\in S$, \ie if it is an effect triple of a strategy that plays Read on $a$.
The \emph{single-symbol effect} $\sse(\sigma)$ of a strategy $\sigma$
is the set of all its effect triples. Finally, we define the
\emph{effect set} $E(\sigma)$ of a
strategy $\sigma$ as $\displaystyle E(\sigma)\df \bigcup_{\alpha\in\hist{\Sigma}^*} \sse(\sigma^\alpha)$. That is, $E(\sigma)$
contains all effect triples that are induced by
substrategies of $\sigma$.

With a set $E$ of effect triples we associate an NFA
$\calN_E=(\calP(Q),\Sigma,\delta_E,\{s\},\calP(F))$, where
$\delta_E$ is defined as follows. For sets $S,S'\subseteq Q$ and
$a\in\Sigma$, $(S,a,S')\in\delta_E$ if, for each $p\in S$, there is
some $S''\subseteq S'$ such that $(p,a,S'')\in E$. 

\begin{proposition}\label{prop:GameToOnline}
Let $G=(\Sigma,R,T)$ be a prefix-free game, $E$ a set of effect triples, and $\sigma$ a
terminating strategy such that $E(\sigma)\subseteq E$. Then there is a strategy
$\rho$ for $\online[\calN_E]$ such that $W_G(\sigma)=W_{\calN_E}(\rho)$.
\end{proposition}
%\full{
\begin{proof}
  It is straightforward to verify that $\rho(w)\df \states{s}{w,\sigma}$ yields a well-defined strategy $\rho$ for $\online[\calN_E]$. The proposition follows, since $w\in W_{\calN_E}(\rho)$ if and only if $\rho(w)\subseteq F$, and $w\in W_G(\sigma)$ if and only if $\states{s}{w,\sigma}\subseteq F$.
\end{proof}
%} % \full

We say that a strategy automaton $\calA=(Q_\calA,\hist{\Sigma},\delta_\calA,s_\calA,F_\calA)$ is
\emph{$(p,a,S)$-inducing} if $\sigma_\calA$ is terminating, $a\in \Sigma_f$, and the following conditions hold.
\begin{itemize}
\item For each $u\in L_a$, $\states{p}{u,\sigma_\calA}\subseteq S$.
\item There are disjoint subsets $Q_{\calA,q}\subseteq Q_\calA$, for $q\in S$,
  such that for every play of $\sigma$ on some $u\in L_a$ with final history string $\alpha$,
  it holds $\delta_\calA^*(s_\calA,\alpha)\in Q_{\calA,q} \Leftrightarrow \delta^*(p,\natural\alpha)=q$.\\ 
Furthermore, there is no proper
  prefix $\beta$ of $\alpha$ for which $\delta_\calA^*(s_\calA,\beta)\in
  Q_{\calA,r}$, for any $r$.
\end{itemize}

\begin{proposition}\label{prop:OnlineToGame}
Let $G=(\Sigma,R,T)$ be a prefix-free game and $E$ a set of effect
triples such that for each non-trivial $t\in E$, there exists a $t$-inducing
strategy automaton $\calA_t$. Then there is a strategy automaton $\calA$ for
$G$ such that, for each strategy
$\rho$ for $\online[\calN_E]$, $ W_{\calN_E}(\rho) \lesl W_G(\sigma_\calA)$.
\end{proposition}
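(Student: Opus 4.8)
The plan is to let \calA realise, inside the game $G$, the weakly dominant online strategy for $\online[\calN_E]$ supplied by Lemma~\ref{lem:OnlineNFA}. First I would apply Lemma~\ref{lem:OnlineNFA} to $\calN_E$ to obtain a DFA \calD (a sub-automaton of $\calN_E$) whose associated strategy $\rho_\calD$ is weakly dominant, so that $W_{\calN_E}(\rho)\lesl W_{\calN_E}(\rho_\calD)=L(\calD)$ for every online strategy $\rho$. Since $\subseteq$ implies \lesl (the observation that $V\subsetneq W$ gives $V\lsl W$), it then suffices to construct a strategy automaton \calA with $W_{\calN_E}(\rho_\calD)\subseteq W_G(\sigma_\calA)$; the proposition follows by transitivity of the total order \lesl.

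The construction of \calA exploits the hierarchical view of prefix-free games as a sequence of single-symbol subgames on $a_1,\dots,a_n$ for the input $w=a_1\cdots a_n$. On the top level, \calA simulates the deterministic run $S_0=\{s\},S_1,\dots$ of \calD, reading each fresh input symbol $a_i$ as the current symbol at the moment the subgame on $a_{i-1}$ finishes, and setting $S_i=\delta_\calD(S_{i-1},a_i)$. In parallel, \calA tracks the \emph{exact} current $T$-state $p_i$ of the ongoing play, not merely the set $S_i$. Given $p_i\in S_{i-1}$ and the transition $(S_{i-1},a_i,S_i)$ of $\calD\subseteq\calN_E$, the definition of $\delta_E$ supplies a triple $(p_i,a_i,S''_{p_i})\in E$ with $S''_{p_i}\subseteq S_i$. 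If this triple is trivial, \calA plays Read on $a_i$, moving the $T$-state to $\delta(p_i,a_i)\in S''_{p_i}\subseteq S_i$. Otherwise \calA hands control to (a disjoint copy of) the $t$-inducing automaton $\calA_t$ for $t=(p_i,a_i,S''_{p_i})$, which plays the call on $a_i$ and the whole induced subgame.

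The crucial service provided by the $t$-inducing automata is signalling: by their second defining property, $\calA_t$ enters one of the disjoint blocks $Q_{\calA_t,q}$ exactly when the subgame on $a_i$ terminates, and the block index $q$ equals the actual resulting $T$-state $\delta^*(p_i,\natural\alpha)=p_{i+1}$; the ``no proper prefix'' clause guarantees this happens precisely at the subgame boundary, so, using prefix-freeness, \calA can detect when one subgame ends and the next current symbol $a_{i+1}$ appears. Reading off the block index lets \calA recover $p_{i+1}$ exactly and thus select the correct $\calA_t$ for the next subgame. I would then prove, by induction on $i$, the invariant that along every play the true $T$-state after the first $i$ subgames lies in $S_i$ and coincides with the state \calA has recorded; the inductive step is immediate from $\states{p_i}{u,\sigma_{\calA_t}}\subseteq S''_{p_i}\subseteq S_i$ (first $t$-inducing property) in the Call case and from $\delta(p_i,a_i)\in S_i$ in the Read case. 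Termination of $\sigma_\calA$ follows from termination of each $\calA_t$ together with the finiteness of $n$. Applying the invariant at $i=n$ gives $\states{s}{w,\sigma_\calA}\subseteq S_n=\rho_\calD(w)$, whence $S_n\subseteq F$ implies $\states{s}{w,\sigma_\calA}\subseteq F$; that is, $W_{\calN_E}(\rho_\calD)\subseteq W_G(\sigma_\calA)$, as required.

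The main obstacle is the gluing itself: a single regular strategy must cope with every possible starting $T$-state of a subgame, yet each $t$-inducing automaton is tailored to one fixed state $p$. The resolution, and the heart of the argument, is that exact-state tracking via the signalling blocks lets \calA always know the true current $T$-state when a subgame begins, so that it can commit to the $\calA_t$ matching that state; prefix-freeness is exactly what makes the subgame boundaries, and hence the handover points between the component automata, recognisable. Care is needed to formalise the product automaton whose states bundle the \calD-state $S_i$, the exact $T$-state $p_i$, and the state of the currently active $\calA_t$, and to verify that it is a well-defined DFA over $\hist{\Sigma}$ whose acceptance condition defines the intended strategy $\sigma_\calA$.
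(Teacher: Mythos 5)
Your proposal is correct and follows essentially the same route as the paper's proof: invoke Lemma~\ref{lem:OnlineNFA} to reduce to showing $W_{\calN_E}(\rho_\calD)\subseteq W_G(\sigma_\calA)$, then build $\calA$ as a product that tracks the $\calD$-subset together with the exact $T$-state between subgames, delegates each non-trivial subgame to the corresponding $t$-inducing automaton $\calA_t$, and uses the signalling blocks $Q_{\calA_t,q}$ (with the no-proper-prefix clause) to detect subgame boundaries and recover the exact $T$-state, concluding by induction on the number of completed subgames. The minor variation in your Read/Call criterion (triviality of the chosen triple versus the paper's test $\delta(p,a)\in S'$) is immaterial, since either choice preserves the invariant.
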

\full{
\begin{proof}
Let $\calD$ be the DFA guaranteed by Lemma~\ref{lem:OnlineNFA}, giving the weakly dominant strategy $\rho_\calD$ for
\online[\calN_E].
	It suffices to construct $\calA$ such that $ W_{\calN_E}(\rho_\calD) \subseteq W_G(\sigma_\calA)$.
	
	The strategy automaton $\mathcal A$ has a non-accepting state $(p,S)$ for
        each $ S\subseteq Q$ and $p\in S$, further states of the form
        $(S,r)$, where $r$ is a state of some
        $\calA_t$, and $(S,r)$ is accepting if and only if $r$ is
        accepting in $\calA_t$, and an accepting state \Call. The
        initial state is $(s,\{s\})$. We construct $\calA$ such that,
        if $\alpha$ is the final history string of a play of
        $\sigma_\calA$ on $w$, then
        $\delta_\calA^*(s_\calA,\alpha)=(p,S)$, where
        $S=\rho_\calD(w)$, $p=\delta^*(s,\natural\alpha)$, and $p\in
        S$. 
Thus, whenever a subplay on a prefix of the input word is completed, $\calA$
knows the state $p$ that has been reached, and the state $S$ that $\rho$
has reached in $\calN$. 

To this end, $\delta_\calA$ is defined as follows. Let
$(p,S)$ be the current state and $a$ the next symbol and let
$S'=\delta_\calD(S,a)$. 
If  $\delta(p,a)\in S'$ then from $p$ a state in $S'$ can be
reached by playing \Read on $a$ and thus
$\delta_\calA((p,S),a)=(\delta(p,a),S')$. Otherwise,
$\delta_\calA((p,S),a)=\Call$. In the latter case, we can choose some $t=(p,a,S'')\in E$ with $S''\subseteq S'$, corresponding to a $t$-inducing automaton $\calA_t$. 
To continue the play after playing \Call on $a$, let $\delta_\calA((p,S),\call{a})=(S',s_t)$, where $s_t$ is the initial state of $\calA_t$. That is,
a subrun is started in which $\calA_t$ is simulated on the replacement word of $a$, and $S'$
is laid aside. 

To carry out the simulation of $\calA_t$, we simply define
$\delta_\calA((S',x),b)=(S',\delta_t(x,b))$ for every state $x$ of $\calA_t$ and every $b\in\hist{\Sigma}$, unless $\delta_t(x,b)\in Q_{\calA_t,q}$ for some $q$. If $\delta_t(x,b)\in Q_{\calA_t,q}$, then
$\delta_\calA((S',x),b)=(q,S')$, ending the
subgame. 

By induction on the length of $w$, it is straightforward to show that $\calA$ meets
the above specification. In particular, whenever $\rho_\calD$ wins for
a word $w$ in $\online[\calN_E]$, $\calA$ yields a strategy that ends
in a state of the form $(p,S)$ with $S\subseteq F$ and $p\in S$ and it
holds $\delta^*(s,\natural\alpha)=p\in F$, for the final history
string $\alpha$. Therefore, $W_{\calN_E}(\rho_\calD) \subseteq W_G(\sigma)$ indeed
holds.
\end{proof}
} % \full

\full{
	The next lemma gives a crucial property of prefix-free games. Intuitively, it says that Juliet always knows the maximal prefix $u$ of the input word that is completely processed, the history string $\alpha$ corresponding to the subgame on $u$, and the next symbol $b$ of the input word that is currently processing.
\begin{lemma}\label{lem:pfree}
	Let $(\gamma,c,v)$ be a configuration in a play $\Pi(\sigma,\tau,w)$ of a prefix-free game $G=(\Sigma,R,T)$. Then $\gamma$ and $c$ uniquely determine $u\in\Sigma^*$, $b\in\Sigma$, and a prefix $\alpha$ of $\gamma$ such that $ub$ is a prefix of $w$, $\alpha$ is the final history string of $\Pi(\sigma,\tau,u)$, and $u$ has maximal length.
\end{lemma}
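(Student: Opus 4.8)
The plan is to exploit the one-pass nature of $\sigma$ and $\tau$ in order to embed the plays on prefixes of $w$ into $\Pi(\sigma,\tau,w)$, and then to use prefix-freeness to recover the relevant prefix from the history string alone. Write $w=a_1\cdots a_n$. First I would establish a \emph{prefix-independence} property: for every prefix $u$ of $w$, the standalone play $\Pi(\sigma,\tau,u)=(K_0',K_1',\dots)$ runs in lockstep with an initial segment of $\Pi(\sigma,\tau,w)=(K_0,K_1,\dots)$. By induction on $i$, as long as $K_i'=(\beta,x)$ is not the final configuration of the $u$-play, I would show $K_i=(\beta,\,x\,a_{|u|+1}\cdots a_n)$, i.e.\ the two configurations share history string and current symbol and differ only by the fixed suffix $a_{|u|+1}\cdots a_n$ appended to the remaining string. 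The step is immediate since $\sigma$ and $\tau$ read only the history string and the current symbol, both of which agree, so they play identically and leave the suffix untouched. In particular, if $\Pi(\sigma,\tau,u)$ is finite with final history $\alpha_u$ and length $N_u$, then $K_{N_u}=(\alpha_u,a_{|u|+1}\cdots a_n)$ occurs in $\Pi(\sigma,\tau,w)$, and there the play continues with the subgame on $a_{|u|+1}$.

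This already yields existence and within-play uniqueness. Since each move appends exactly one symbol to the history, the history strings along $\Pi(\sigma,\tau,w)$ form a chain under the prefix order; hence for the given configuration $K_m=(\gamma,c,v)$ and any prefix $u$, the history $\alpha_u$ is a prefix of $\gamma$ iff $N_u\le m$, i.e.\ iff the subgame on $u$ has completed by step $m$. Because processing a strictly longer prefix strictly extends the play, the prefixes completed by step $m$ form an initial segment of the prefixes of $w$ and thus have a unique longest element $u^*$; moreover $u^*\neq w$, as $K_m$ still carries a current symbol $c$ and so is not the final configuration. Taking $u=u^*$, $\alpha=\alpha_{u^*}$ and $b=a_{|u^*|+1}$ satisfies all four conditions, and since prefixes of $w$ are totally ordered by length, the maximality clause pins down a unique triple.

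The actual content of the lemma, and the place where prefix-freeness is indispensable, is that this triple depends only on $(\gamma,c)$, not on the hidden remaining string or on $\tau$. I would prove this by giving a parse of $\gamma$ forced by its structure. Call a string a \emph{completed subgame history} if it is a single symbol $d\in\Sigma$ (a Read), or has the form $\call{d}\,\delta_1\cdots\delta_\ell$ with $d\in\Sigma_f$, each $\delta_j$ a completed subgame history with head symbol $c_j$ (the de-hatted first letter of $\delta_j$), and $c_1\cdots c_\ell\in L_d$. I claim $\gamma$ has a unique factorisation $\gamma=\beta_1\cdots\beta_k\,\gamma'$ into completed subgame histories followed by a remainder $\gamma'$ that begins no completed subgame history: it is obtained greedily from the left, and at a $\call{d}$ one closes the subgame at the \emph{first} point where the accumulated head symbols form a word of $L_d$. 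Prefix-freeness of $L_d$ makes this stopping point unique, which is exactly what fails in general (e.g.\ for $L_d=\{b,bb\}$ the completed history $\call{d}b$ and the partial history of a subgame in which Romeo played $bb$ and one $b$ has been read are indistinguishable).

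Finally I would match this parse to the play. By prefix-independence, $\alpha_{u^*}=\beta_1\cdots\beta_k$ is precisely such a concatenation of completed subgame histories, one $\beta_i$ per symbol $a_i$ of $u^*$, while $\gamma'$ is the history of the still-incomplete subgame on $a_{|u^*|+1}$ (which, if nonempty, must begin with $\call{a_{|u^*|+1}}$, since a Read would have completed it and contradicted maximality of $u^*$). As that subgame's replacement word has not been fully processed, the head symbols collected inside $\gamma'$ form a proper prefix of a word of $L_{a_{|u^*|+1}}$ and, by prefix-freeness, never themselves constitute a word of that language, so the greedy parse correctly refuses to close it. Hence the parse recovers $\alpha=\alpha_{u^*}$ and $u^*$ (as the head symbols of $\beta_1,\dots,\beta_k$) from $\gamma$, and $b$ is read off as the de-hatted head of $\gamma'$ when $\gamma'\neq\epsilon$ and as $c$ otherwise. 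I expect this alignment to be the main obstacle: one must verify that the structural parse never closes the current subgame too early and never over-extends a completed one, and it is exactly prefix-freeness of every $L_a$ that guarantees this.
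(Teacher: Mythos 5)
Your proposal is correct, but it is organised quite differently from the paper's proof. The paper argues by a single backwards induction along the configurations $(K_0,\dots,K_n)$ of the play prefix ending at $(\gamma,c,v)$: for each $i$ it maintains the unique triple $(u_i,b_i,\alpha_i)$ describing the part of $c_iv_i$ ``touched'' by the play between $K_i$ and $K_n$, and prefix-freeness is invoked exactly once, in the Call case, to split $u_i=xx'$ with a unique $x\in L_{c_{i-1}}$, which forces $x=\tau(\gamma_{i-1},c_{i-1})$; since the backwards recursion consults only the symbols of $\gamma$ and the symbol $c$, the resulting triple $(u_0,b_0,\alpha_0)$ is a function of $(\gamma,c)$ alone. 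You instead split the statement into two independent ingredients: (i) existence and within-play uniqueness of the triple, via the lockstep embedding of $\Pi(\sigma,\tau,u)$ into $\Pi(\sigma,\tau,w)$ for prefixes $u$ of $w$ (the paper uses this only implicitly, through the substrategies $\sigma^{\gamma_i},\tau^{\gamma_i}$); and (ii) independence of the triple from $\tau$, $v$ and $w$, via a purely syntactic unique factorisation of $\gamma$ into ``completed subgame histories''. Your route buys a cleaner separation of concerns and a play-free, grammar-like characterisation of what Juliet can reconstruct from her history --- including an explicit counterexample ($L_d=\{b,bb\}$) showing where prefix-freeness is indispensable --- at the price of an additional nested induction that you flag but do not carry out: one must show that no completed subgame history is a proper prefix of another, and that no prefix of the partial history of an unfinished subgame is itself a completed subgame history. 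Both claims do go through by induction on length, using prefix-freeness of every $L_a$ at each nesting level (not only of the top-level $L_d$, since the factors $\delta_j$ are themselves recursive parses); with these two auxiliary claims supplied, your greedy parse is well defined, matches the play, and yields the lemma.
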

\begin{proof}
	Let $(K_0,K_1,\dots,K_n)$ be the prefix of $\Pi(\sigma,\tau,w)$ with
	$K_i=(\gamma_i,c_i,v_i)$, for each $i$, and
	$(\gamma_n,c_n)=(\gamma,c)$. We show by backwards induction that for
	$i=n,n-1,\dots,0$, there are unique $u_i,b_i,\alpha_i$ such that
	$\gamma_i\alpha_i$ is a prefix of $\gamma$, $u_ib_i$ is a prefix of
	$c_iv_i$, $\alpha_i$ is the final history string of
	$\Pi(\sigma^{\gamma_i},\tau^{\gamma_i},u_i)$, and $u_i$ has maximal
	length. In a nutshell, from $K_i$ onwards, the string to be
	played on is $c_iv_i$. The string $u_ib_i$ is the prefix of  $c_iv_i$
	that has been ``touched''  by the play from $K_i$ to
	$K_n$. The subplay on $u_i$ yielded the additional history string
	$\alpha_i$. The subplay on $b_i$ is ongoing at $K_n$.
	The lemma then follows by choosing
	$(u,b,\alpha)=(u_0,b_0,\alpha_0)$.
	
	The base case is $u_n=\alpha_n=\epsilon$ and $b_n=c_n$. For the
	induction step, either $\gamma_i=\gamma_{i-1}c_{i-1}$ or
	$\gamma_i=\gamma_{i-1}\call{c_{i-1}}$. In the former case,
	$u_{i-1}=c_{i-1}u_i$, $b_{i-1}=b_i$,
	$\alpha_{i-1}=c_{i-1}\alpha_i$. Otherwise, we consider two
	sub-cases. If $u_i=xx'$ for some $x\in L_{c_{i-1}}$ and $x'\in\Sigma^*$
	(here, $x$ is uniquely determined thanks to prefix-freeness of $ L_{c_{i-1}}$), then
	$\tau(\gamma_{i-1},c_{i-1})=x$. Thus, we have
	$K_i=(\gamma_{i-1}\call{c_{i-1}},\tau(\gamma_{i-1},c_{i-1})x'b_iz)$ for
	some $z$ and hence $K_{i-1}=(\gamma_{i-1},c_{i-1}x'b_iz)$. Then
	$u_{i-1}=c_{i-1}x'$, $b_{i-1}=b_i$,
	$\alpha_{i-1}=\call{c_{i-1}}\alpha_i$. Otherwise, the subgame for
	$c_{i-1}$ has not ended in $K_n$, and thus $u_{i-1}=\alpha_{i-1}=\epsilon$, $b_{i-1}=c_{i-1}$. It
	is easy to verify in all cases that $(u_{i-1},b_{i-1},\alpha_{i-1})$
	is the unique  solution for the required properties.
\end{proof}
}

A crucial ingredient is the following proposition, which will allow us to restrict our attention to strategies of finite depth. An almost identical proof can also be used to show that prefix-free games have the bounded depth property, and we could use this and Theorem~\ref{theo:right}~(a) to deduce immediately that they have weakly dominant strategies. However, we are aiming for the stronger result that they have \emph{regular} weakly dominant strategies.

\begin{proposition}\label{prop:prefixfreeBDP}
	In prefix-free games, each effect triple of a terminating strategy is also an effect triple of a strategy of bounded depth.
\end{proposition}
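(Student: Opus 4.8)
The plan is to characterise the effect triples realisable by strategies of \emph{bounded} depth as the least fixpoint of a monotone operator on the (finite) lattice of all effect triples, and then to show that this fixpoint already contains every effect triple of an arbitrary terminating strategy, using termination to forbid an infinite descent.

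First I would define a set $D$ of effect triples by a fixpoint iteration $D_0\subseteq D_1\subseteq\cdots$. Put all trivial triples into $D_0$ (they are realised by a depth-$0$ Read). Given $D_k$, add a triple $(p,a,S)$ with $a\in\Sigma_f$ to $D_{k+1}$ whenever there is a strategy $\rho$ for $\online[\calN_{D_k}]$, started in the state $\{p\}$, that drives every replacement word $x\in L_a$ into some subset of $S$; by Lemma~\ref{lem:OnlineNFA} such a $\rho$ may be taken to be induced by a sub-DFA of $\calN_{D_k}$, so it is a genuinely computable, finite-memory recipe for playing on all the words of $L_a$. Since there are only finitely many effect triples $(p,a,S)$ with $p\in Q$, $a\in\Sigma$ and $S\subseteq Q$, the chain stabilises after finitely many steps; set $D\df\bigcup_k D_k$. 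A routine induction on $k$ then shows that every $(p,a,S)\in D_k$ is an effect triple of a strategy all of whose plays have depth at most $k$: one calls $a$ and follows the sub-DFA of $\calN_{D_k}$ on the replacement word, where each chosen $\calN_{D_{k-1}}$-transition out of a tracked state $q$ on the next symbol $b$ is carried out by plugging in a depth-$\le(k{-}1)$ strategy witnessing the corresponding triple $(q,b,S'')\in D_{k-1}$. The prefix-free structure (Lemma~\ref{lem:pfree}) guarantees that Juliet recognises the end of each such subgame and of $x$ itself, so these ingredients compose into a single well-defined strategy whose Call-nesting is bounded by $k$. Hence every triple of $D$ is an effect triple of a strategy of a fixed bounded depth.

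It therefore suffices to prove that every effect triple of a terminating strategy lies in $D$. I would argue by contradiction: suppose $\sigma$ is terminating and $(p,a,S)$ is an effect triple of $\sigma$ with $(p,a,S)\notin D$. Had Juliet read $a$ first, the triple would be trivial and hence in $D_0$; so $\sigma$ calls $a$, and $a\in\Sigma_f$. Consider the substrategy $\sigma^{\call a}$. As in Proposition~\ref{prop:GameToOnline}, the map $\rho(w)\df\states{p}{w,\sigma^{\call a}}$ is a strategy for $\online[\calN_{E(\sigma)}]$ started in $\{p\}$, and since $(p,a,S)$ is an effect triple of $\sigma$ this $\rho$ drives every $x\in L_a$ into a subset of $S$. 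Were every single-symbol effect triple actually used along these plays a member of $D$, then $\rho$ would already be a valid $\online[\calN_D]$-strategy witnessing $(p,a,S)\in D$, a contradiction. Hence some reachable configuration of $\sigma^{\call a}$ realises a single-symbol effect triple $(q,b,S'')$ which is an effect triple of a substrategy $\sigma^\beta$ (for a strictly longer history $\beta$) but does not lie in $D$. This triple cannot be trivial, so $b\in\Sigma_f$ and $\sigma^\beta$ calls $b$: we have descended one Call deeper and obtained a new bad triple of exactly the same kind.

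Iterating this step produces an infinite sequence of ever more deeply nested Call moves, all consistent with $\sigma$ against a suitable choice of Romeo, i.e.\ an infinite play of $\sigma$ — contradicting that $\sigma$ is terminating. Thus $(p,a,S)\in D$, and by the first paragraph it is an effect triple of a bounded-depth strategy, which proves the proposition. I expect the main obstacle to be making the online composition precise in both directions: verifying that an $\online[\calN_{D_k}]$-strategy together with its witnessing subgame strategies really assembles into one game strategy of the claimed depth, and, in the descent, that the failure of $\rho$ to be a $\calN_D$-strategy genuinely isolates a single bad single-symbol triple realised by an honest substrategy that still makes a Call. Both points rest on the prefix-free decomposition of plays (Lemma~\ref{lem:pfree}), which lets Juliet localise each subgame, and on the effect-set/online-NFA correspondence of Proposition~\ref{prop:GameToOnline}.
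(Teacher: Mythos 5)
Your proof is correct, and its skeleton coincides with the paper's: both arguments rest on the same termination-based infinite descent (if a triple realised by a terminating strategy were not ``good'', Romeo could repeatedly force, one \Call-level deeper each time, a configuration whose current-symbol triple is again not good, yielding an infinite play), and both use the prefix-free decomposition (Lemma~\ref{lem:pfree}) to make the composed strategies well-defined. Where you genuinely differ is in the bookkeeping of ``good''. The paper takes $E$ to be, by definition, the set of effect triples of bounded-depth strategies, extracts a uniform bound $B$ by finiteness of $Q\times\Sigma\times\calP(Q)$, and carries out the contradiction step by hand: assuming all nested triples lie in $E$, it splices the chosen depth-$\le B$ witnesses along the skeleton of $\hat\sigma$ to produce a depth-$(B+1)$ strategy realising the parent triple. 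You instead define $D$ as the least fixpoint of an operator phrased through $\online[\calN_{D_k}]$, so that the closure step (``all nested triples in $D$ implies the parent is in $D$'') holds by definition of the fixpoint, and you prove bounded-depth realisability of $D$-triples in a separate induction on fixpoint stages. This buys modularity --- you reuse the $\calN_E$/Proposition~\ref{prop:GameToOnline} correspondence that the paper develops anyway for Proposition~\ref{prop:sse}, and the stabilisation stage of the fixpoint gives an explicit uniform depth bound --- at the cost of two routine obligations the paper's direct construction avoids: the map $\rho(w)\df\states{p}{w,\sigma^{\call a}}$ must be patched into a \emph{total} online strategy on words that are not prefixes of $L_a$-words (this works because the trivial triples in $D_0$ make $\calN_D$ total), and the failed transition in the contrapositive must be located on a genuine prefix $wb$ of some $x\in L_a$, so that the offending configuration $(\call a\alpha,b,v)$ is actually reachable by Romeo in a play on $a$. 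Both points are exactly the obstacles you flag at the end, and both go through; so your argument is a sound, somewhat more structured rendering of the paper's proof rather than a different one in substance.
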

\short{
\begin{proof}[Proof idea]
	Let $E$ denote the set effect triples of bounded depth strategies. Consider an effect triple $(p,a,S)\notin E$ of a strategy $\hat\sigma$. If the effect triples of $\hat\sigma$'s substrategies on the symbols of $a$'s replacement word were all in $E$, then these substrategies could be replaced so as to obtain a bounded depth strategy $\sigma$ for $(p,a,S)$, contradicting $(p,a,S)\notin E$. Thus, for each $(p,a,S)\notin E$, Romeo can force a configuration in the play against $\hat\sigma$ on $a$ where the effect triple of the substrategy is again not in $E$. But Romeo can do this repeatedly, so $\hat\sigma$ is not terminating.
\end{proof}
}
\full{
	\begin{proof}
		Let $E$ denote the set effect triples of all strategies of bounded depth. For $t\in E$ fix such a bounded depth strategy $\sigma_t$ with effect triple $t$. Since $E\subseteq Q\times\Sigma\times \mathcal P(Q)$ is finite, there exists $B$ such that for each $t\in E$, the depth of plays of $\sigma_t$ is at most $B$.

We claim that if $(p,a,S)\notin E$ is an effect triple of strategy $\hat\sigma$, then there is a Romeo strategy $\tau$ such that $\Pi(\hat\sigma,\tau,a)$ contains a configuration $(\call a\alpha,b,w)$ with $(q,b,S')\notin E$ for $q=\delta^*(p,\natural\alpha)$ and $S'=\states{q,b}{\hat\sigma^{\call a\alpha}}$. But this means that Romeo can repeatedly force a configuration where $\hat\sigma$'s substrategy on the current symbol has an effect triple outside of $E$, and hence $\hat\sigma$ would not be terminating. Thus, each effect triple of a terminating strategy is in $E$.

It remains to show the claim. Let $(p,a,S)\notin E$ be an effect triple of a strategy $\hat\sigma$. Suppose for the sake of contradiction that for each configuration of the form $(\call a\alpha,b,w)$ occurring in some play of $\hat\sigma$ on $a$ it holds that $(q,b,S')\in E$, where $q=\delta^*(p,\natural\alpha)$ and $S'=\states{q,b}{\hat\sigma^{\call a\alpha}}$. We claim that in this case, $(p,a,S)$ is an effect triple of a strategy of depth at most $B+1$, in contradiction to $(p,a,S)\notin E$.

We first define a strategy $\sigma$ that plays on input words from $L_a$. We define $\sigma(\gamma,c)$ for $(\gamma,c)\in\hist\Sigma^*\times\Sigma_f$ such that a configuration of the form $(\gamma,c,v)$ occurs in some play of $\sigma$ on an input word from $L_a$ (given the definition of $\sigma$ for previous configurations). Consider $u,b,\alpha$ determined by $(\gamma,c)$ as per Lemma~\ref{lem:pfree}. The definition of $\sigma$ will ensure by induction on the length of $u$ that $\states{p,u}{\sigma}\subseteq\states{p,u}{\hat\sigma^{\call a}}$. Clearly this holds for $u=\epsilon$.

Let $q=\delta^*(p,\natural\alpha)$. Since $q\in\states{p,u}{\sigma}\subseteq\states{p,u}{\hat\sigma^{\call a}}$, there is a history string $\alpha'$ that is the final history string of a play of $\hat\sigma^{\call a}$ on $u$ such that $q=\delta^*(p,\natural\alpha')$. Choose such $\alpha'$ lexicographically minimal. Since $\hat\sigma(\epsilon,a)=\Call$ (because $(p,a,S)\notin E$) and $ubw\in L_a$ for some $w$, the configuration $(\call a\alpha',b,w)$ occurs in some play of $\hat\sigma$ on $a$. Let $S'=\states{q,b}{\sigma^{\call a\alpha'}}$. By assumption, $(q,b,S')\in E$. For each non-final configuration $(\beta,c,z)$ occurring in some play of $\sigma_{q,b,S'}$, let $\sigma(\alpha\beta,c)=\sigma_{q,b,S'}(\beta,c)$. Note that $\gamma=\alpha\beta$ for some such $\beta$. This defines $\sigma$ until $ub$ is completely processed. Since $S'\subseteq \states{p,ub}{\sigma^{\call a}}$, we have $\states{p,ub}{\sigma}\subseteq\states{p,ub}{\hat\sigma^{\call a}}$, which completes the induction step.

Consider now the strategy $\sigma'$ given by $\sigma'(\epsilon,a)=\Call$, $\sigma'(\call a\gamma,c)=\sigma(\gamma,c)$ if the latter is defined, and defined as $\Read$ everywhere else. Since $$\states{p,u}{\sigma}\subseteq\states{p,u}{\hat\sigma^{\call a}}\subseteq\states{p,a}{\hat\sigma}=S$$ for all $u\in L_a$, it follows $\states{p,a}{\sigma'}\subseteq S$ and hence $(p,a,S)$ is an effect triple of $\sigma'$. But the depth of $\sigma'$ is at most $B+1$, contradicting $(p,a,S)\notin E$.
\end{proof}
} %\full

The finite depth allows us to construct $t$-inducing strategy automata by induction on the depth of a strategy with effect triple $t$.

\begin{proposition}\label{prop:sse}
 Let $G=(\Sigma,R,T)$ be a prefix-free game and $(p,a,S)$ a non-trivial effect triple of some terminating strategy $\sigma$. Then there exists a $(p,a,S)$-inducing strategy
 automaton.
\end{proposition}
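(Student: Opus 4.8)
The plan is to induct on the \emph{depth} of a strategy that realizes the effect triple. By Proposition~\ref{prop:prefixfreeBDP}, the triple $(p,a,S)$ --- being an effect triple of the terminating strategy $\sigma$ --- is also an effect triple of some strategy of finite depth, so I may assume $\sigma$ has finite depth $d$ and prove, by induction on $d$, that every non-trivial effect triple of a depth-$d$ strategy admits an inducing automaton. Since $(p,a,S)$ is non-trivial, $\delta(p,a)\notin S$, so $\sigma(\epsilon,a)=\Call$ (otherwise the only play on $a$ would reach the $T$-state $\delta(p,a)\notin S$, contradicting $\states{p}{a,\sigma}\subseteq S$); consequently the effect of $\sigma$ on $a$ equals $\bigcup_{u\in L_a}\states{p}{u,\sigma^{\call a}}$, which is contained in $S$. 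Note $\sigma^{\call a}$ has depth at most $d-1$, and each of its nested \Call{}s on a symbol $b$ at a reached $T$-state $q$ is witnessed by a strategy of depth at most $d-1$ whose effect triple $(q,b,S')$ is therefore, when non-trivial, handled by the induction hypothesis, which supplies a $(q,b,S')$-inducing automaton.

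The heart of the argument is to replace the (history-dependent, possibly non-regular) substrategy $\sigma^{\call a}$ by a \emph{finite-memory} one that tracks only the pair $(q_a,q)$, where $q_a\in Q_a$ records the progress of $\calA_a$ on the replacement word read so far and $q\in Q$ is the current $T$-state started at $p$. I set this up as a finite two-player game on $Q_a\times Q$: from $(q_a,q)$ with $q_a\neq f_a$, Romeo first chooses a next symbol $b$ for which $\delta_a(q_a,b)$ can still reach $f_a$; Juliet then either plays \Read, moving to $(\delta_a(q_a,b),\delta(q,b))$, or plays \Call with a target $S'$ such that $(q,b,S')$ is an effect triple of a depth-$(\le d-1)$ strategy, whereupon Romeo moves to $(\delta_a(q_a,b),q')$ for an arbitrary $q'\in S'$; Juliet's objective is the safety condition of never reaching a pair $(f_a,q)$ with $q\notin S$. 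Because the admissible targets $S'$ are exactly the effects of genuine substrategies, the game-Romeo's moves are realizable by real replacement choices and conversely, so $\sigma^{\call a}$, read as a Juliet strategy, is safe from $(s_a,p)$: any Romeo play that completes a word $u\in L_a$ mirrors a real play of $\sigma^{\call a}$ on $u$, which ends in a $T$-state in $\states{p}{u,\sigma^{\call a}}\subseteq S$. Computing the safe region $\calW\ni(s_a,p)$ as a greatest fixpoint (equivalently, as the winning region of this finite safety game) yields a \emph{positional} safe strategy $f$; I may moreover assume every \Call it prescribes uses a non-trivial target, since whenever a trivial target is safe, \Read is safe as well.

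I then assemble $\calA$ from $f$ together with the inducing automata from the induction hypothesis. Its top-level states are the pairs $(q_a,q)\in\calW$, initial state $(s_a,p)$; reading $b$ there, $\calA$ follows $f$: on a \Read it advances to $(\delta_a(q_a,b),\delta(q,b))$, and on a \Call of $b$ with target $S'$ it reads $\call b$ and hands control to a copy $\calB$ of the $(q,b,S')$-inducing automaton, tagged with the return progress $\delta_a(q_a,b)$. When $\calB$ enters one of its marking states $Q_{\calB,q'}$ --- which, by the inducing property of $\calB$ and the prefix-freeness of $L_b$, happens exactly when the replacement word of $b$ has just been completely processed and $q'$ is the reached $T$-state --- control returns to $(\delta_a(q_a,b),q')$; elsewhere $\calA$ plays \Read, so $\sigma_\calA$ is terminating. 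Declaring $Q_{\calA,q}\df\{(f_a,q)\}$ for $q\in S$, the safety invariant gives $\states{p}{u,\sigma_\calA}\subseteq S$ for every $u\in L_a$, while a direct induction on the history $\alpha$ shows the tracked $T$-state always equals $\delta^*(p,\natural\alpha)$, so $\delta_\calA^*(s_\calA,\alpha)\in Q_{\calA,q}\Leftrightarrow\delta^*(p,\natural\alpha)=q$. Finally, prefix-freeness of $L_a$ forces $\calA_a$ to reach $f_a$ only once the whole replacement word is read, so no proper prefix of $\alpha$ visits a top-level marking state (the nested $Q_{\calB,q'}$ are internal and excluded); both inducing conditions follow.

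The hard part is precisely the passage from a merely terminating, history-dependent substrategy to a finite-memory one. Mimicking $\sigma^{\call a}$ only on the pairs it happens to visit fails, because the positional automaton, driven by Romeo, can reach a pair along a replacement word that $\sigma^{\call a}$ would never produce and then meet a next symbol for which no move was recorded; phrasing the choice as the winning region of a safety game is what closes this gap, since that region is by construction closed under \emph{all} of Romeo's symbol choices and memoryless determinacy turns it into a finite automaton. The two points needing care are (i) that the abstract game grants Romeo neither more nor less power than reality, which hinges on using the \emph{exact} effects of depth-$(\le d-1)$ substrategies as \Call targets and on Proposition~\ref{prop:prefixfreeBDP} to guarantee bounded-depth witnesses exist, and (ii) the bookkeeping by which the nested inducing automata report, through their marking states and thanks to prefix-freeness, exactly the $T$-state at which each subgame ends, so that the top-level $T$-state is maintained correctly.
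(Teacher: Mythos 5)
Your proof is correct, and although it follows the paper's skeleton (reduce to finite depth via Proposition~\ref{prop:prefixfreeBDP}, note that non-triviality forces $\sigma(\epsilon,a)=\Call$, induct on depth, and assemble a top-level finite controller that delegates \Call moves to nested inducing automata supplied by the induction hypothesis), the central device is genuinely different. The paper forms an auxiliary game $G'$ whose target automaton is the product of $T^p$ and $\calA_a$ with accepting set $S\times\{f_a\}$, and obtains the top-level controller by passing through the online-NFA machinery (Propositions~\ref{prop:GameToOnline} and~\ref{prop:OnlineToGame} together with the weakly dominant DFA of Lemma~\ref{lem:OnlineNFA}), concluding from $L_a\lesl W_{G'}(\sigma_\calA)$ and $W_{G'}(\sigma_\calA)\subseteq L_a$ that $W_{G'}(\sigma_\calA)=L_a$. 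You instead solve a bespoke finite safety game on $Q_a\times Q$ and appeal to memoryless determinacy. The paper's route buys economy: the cited propositions are needed anyway for Theorem~\ref{theo:right}~(b), so Proposition~\ref{prop:sse} becomes a few lines. Your route buys self-containedness (only a standard determinacy fact is imported), a smaller controller (a single pair $(q_a,q)$ rather than a $T'$-state plus a subset of $T'$-states), and it uses only safety rather than shortlex-maximality, which is all the statement needs. It also has a genuine technical advantage: you advance the $\calA_a$-component on the \emph{input} symbols of the replacement word while tracking the $T$-state of the evolving \emph{final} word through effect triples, whereas the paper's product $T'$ runs both components on the final word. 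Read literally, this makes the paper's intermediate claims $L_a\subseteq W_{G'}(\sigma')$ and $W_{G'}(\sigma_\calA)\subseteq L_a$ problematic, since the final word of a subplay need not lie in $L_a$ (with rules $a\to bc$ and $b\to d$, a substrategy that calls $b$ turns the input $bc\in L_a$ into the final word $dc\notin L_a$, so it cannot win in $G'$ even though its effect is contained in $S$); your formulation is immune to this conflation. Two spots in your write-up are terse but sound and deserve spelling out: showing $(s_a,p)\in\calW$ requires the mimicking strategy to carry along an explicit history of a real play of $\sigma^{\call a}$ and, after each abstract \Call answered by some $q'\in S'$, to select a real sub-history realising $q'$, which is possible precisely because you take $S'$ to be the \emph{exact} effect set of the real substrategy; and your replacement of trivial-target \Call{}s by \Read{}s is exactly what makes the induction hypothesis, which covers only non-trivial triples, sufficient.
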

\full{
\begin{proof}
Due to Proposition~\ref{prop:prefixfreeBDP}, we can assume that the depth of $\sigma$ is bounded by some finite $k$. We proceed by induction on $k$.

Since $(p,a,S)$ is non-trivial, $\sigma(\epsilon,a)=\Call$.  Let
$\sigma'$ be the strategy defined by $\sigma'(\alpha,b)=\sigma(\call{a}\alpha,b)$ if a configuration $(\call a\alpha,b,u)$ can occur in a play of $\sigma$ on $a$, for some $u$,
and $\sigma'(\alpha,b)=\Read$ otherwise. Note that the depth of $\sigma'$ is
at most $k-1$. Moreover, for every $u\in L_a$, it holds
$\states{p}{u,\sigma'}\subseteq S$, since $\states{p}{a,\sigma}\subseteq S$.

Let $T'$ be the target automaton $(Q\times
Q_a,\Sigma,\delta\times\delta_a,(p,s_a),S\times \{f_a\})$, that is, the
product automaton of $T^p$ and $\calA_a$ whose accepting states are
chosen to require $T^p$ to end in
$S$ and $\calA_a$ to end in $f_a$. Let $G'$ be the game
$(\Sigma,R,T')$. We consider $\sigma'$ as a strategy for
$G'$. Since $\states{p}{u,\sigma'}\subseteq S$ for each $u\in L_a$, we have $L_a\subseteq W_{G'}(\sigma')$.

Let $E$ be the set of effect triples $(q,b,S')$ of strategies for $G'$ with
depth at most $k-1$. For each non-trivial $t\in E$, a
$t$-inducing strategy automaton exists by induction.
Let
$\calA$ be the strategy automaton for the set $E$, defined in the proof of Proposition~\ref{prop:OnlineToGame}.

By Proposition~\ref{prop:GameToOnline},  there is a strategy
$\rho$ for $\online[\calN_E]$ such that $W_{G'}(\sigma')=
W_{\calN_E}(\rho)$. By Proposition~\ref{prop:OnlineToGame},
$W_{\calN_E}(\rho)\lesl W_{G'}(\sigma_\calA)$. Altogether, we have
$L_a\lesl W_{G'}(\sigma_\calA)$ and, by definition of $G'$,
$W_{G'}(\sigma_\calA)\subseteq L_a$,  hence $W_{G'}(\sigma_\calA)= L_a$. Therefore, $\calA$ is $(p,a,S)$-inducing if we choose $Q_q=\{((q,f_a),S')\mid S'\ni(q,f_a)\}$ for every $q\in S$.
  \end{proof}
} % \full
Now we are ready to prove Theorem~\ref{theo:right}~(b).

\begin{proofof}{Theorem~\ref{theo:right}~(b)}
   Let $G=(\Sigma,R,T)$ be prefix-free. Let $E$ be the set of effect triples of terminating strategies. By
   Proposition~\ref{prop:sse} there is a $t$-inducing automaton for
   each non-trivial $t\in E$. Let $\sigma_\calA$ be
the regular strategy as guaranteed by Proposition~\ref{prop:OnlineToGame}. We
show that $\sigma_\calA$ is weakly dominant.

To this end, let $\sigma$ be any terminating strategy for $G$. Since $E(\sigma)\subseteq
E$, Proposition~\ref{prop:GameToOnline} guarantees a strategy $\rho$
for $\online[\calN_E]$ such that $W_G(\sigma)=
W_{\calN_E}(\rho)$. By Proposition~\ref{prop:OnlineToGame}, $
W_{\calN_E}(\rho) \lesl W_G(\sigma_\calA)$, and therefore, altogether
$W_G(\sigma)  \lesl W_G(\sigma_\calA)$ as required.
\end{proofof}

\subsubsection{Games with finite target language}\label{sec:ex-nr-finite}
\short{
	\begin{proof}[Proof idea of Theorem~\ref{theo:right}~(c)]
		Let $\sigma$ be a dominant strategy. We say that $\sigma$ has a \term{$(q,a)$-conflict}, for $q\in Q$ and $a\in\Sigma_f$, if there are configurations $(\alpha_1,a,u_1)$ and $(\alpha_2,a,u_2)$ in plays on words from $W(\sigma)$ such that $\delta^*(s,\natural\alpha_1)=\delta^*(s,\natural\alpha_2)=q$ and
		$\sigma(\alpha_1,a)\ne \sigma(\alpha_2,a)$.
		
		If $\sigma$ has no conflicts, then changing it to a strongly regular strategy requires modification only on configurations that do not occur in plays on words from $W(\sigma)$. 
		
		Thus we only need to show that a dominant strategy $\sigma$ with some
		conflicts can be transformed into a dominant strategy $\sigma'$ with less conflicts. By repeating this process at most $|Q||\Sigma_f|$ times, we reach a strategy without conflicts.
		
		If $\sigma$ only plays $\Read$ on some $v_1\in\Sigma^*$ with $\delta^*(s,v_1)=q$, we say that $\sigma$ is \emph{$(q,a)$-conflict-free after $v_1$} if either $\sigma(v_1,a)=\Read$ and there is no $(q,a)$-conflicting
		\Call-move in plays of $\sigma$ on any word $v_1au\in W(\sigma)$, for some $u$, or vice versa with \Read and \Call reversed. In both cases, a strategy without any $(q,a)$-conflicts can be obtained by ``copying'' the substrategy of the subgame starting from $(v_1,a,u)$ to any conflicting configuration.
		
		However, we need to show how to guarantee $(q,a)$-conflict-freeness after $v_1$.
		Actually, thanks to the finiteness of $L(T)$
		there can be no such conflict if $\sigma(v_1,a)=\Read$, since $T$ can not return to state $q$ after taking a
		transition from $q$. But  in the case where
		$\sigma(v_1,a)=\Call$, there could be a conflicting $\Read$-move at
		some configuration $(\alpha,a,u')$ after $(v_1,a,u)$. However, again thanks to the finiteness of
		$L(T)$, there can not be any \Read moves between $(v_1,a,u)$ and
		$(\alpha,a,u')$ and therefore, in particular, $u$ must be a suffix of $u'$. 
		In this case we show that actually $u'=u$ must hold. By copying the strategy of $\sigma$ of the
		subgame on  $(\alpha,a,u)$ to the subgame on $(v_1,a,u)$, a strategy that matches the first
		case above can be obtained.
	\end{proof}
}
\full{
\begin{proofof}{Theorem~\ref{theo:right}~(c)}
Let $G = (\Sigma,R,T)$ with $T=(Q,\Sigma,\delta,s,F)$ such that $L(T)$ is finite, and let $\sigma$ be a dominant strategy for $G$. Thanks to Lemma~\ref{lem:alwaysTerm} we can assume that all plays of $\sigma$ are finite.

We say that $\sigma$ has a \term{$(q,a)$-conflict}, for $q\in Q$ and $a\in\Sigma_f$, if there are words $w_1,w_2\in W(\sigma)$ and strategies $\tau_1,\tau_2$ of Romeo such that $\Pi(\sigma,\tau_1,w_1)$ reaches some configuration $(\alpha_1,a,u_1)$ with $\delta^*(s,\natural\alpha_1)=q$ and $\sigma(\alpha_1,a)=\Read$,  and 
$\Pi(\sigma,\tau_2,w_2)$ reaches some configuration $(\alpha_2,a,u_2)$
with $\delta^*(s,\natural\alpha_2)=q$ and
$\sigma(\alpha_2,a)=\Call$. By $C(\sigma)$ we denote the set of pairs
$(q,a)$, for which $\sigma$ has a $(q,a)$-conflict. 

If $C(\sigma)=\emptyset$, then $\sigma$ is already almost strongly
regular. The only exceptions might be configurations that are not
reached by plays on strings from  $W(\sigma)$ and
we can easily change $\sigma$ on those configurations to make it
entirely  strongly regular without affecting $W(\sigma)$. 

Thus we only need to show that a dominant strategy $\sigma$ with some
conflicts can be transformed into a dominant strategy $\sigma'$ with
$C(\sigma')\subsetneq C(\sigma)$. By repeating this process at most
$|Q||\Sigma_f|$ times, we reach a strategy without conflicts. 

To this end, let $\sigma$ have a $(q,a)$-conflict, for some $q$ and
$a$, with $\tau_1,\tau_2,\alpha_1,\alpha_2,u_1,u_2$ as above such that
$\alpha_1$ has minimal possible length. Let $v_1=\natural\alpha_1$.

We first claim that $\sigma$ only plays Read moves on the input word $v_1$. 
As a first step towards this claim, we show $v_1au_1\in
W(\sigma)$. Indeed, the strategy $\widetilde{\sigma}$ that plays Read
on $v_1$ and is defined by
$\widetilde{\sigma}(v_1\beta,x)=\sigma(\alpha_1\beta,x)$ afterwards
(and arbitrarily on other configurations) clearly wins on $v_1au_1$,
since $\sigma$ wins from $(\alpha_1,a,u_1)$ on. As $\sigma$ is
dominant, therefore $v_1au_1\in W(\sigma)$ holds. 
As $\sigma$ plays Read on all $\Sigma$-symbols of $\alpha_1$, the assumption that $\sigma$ plays Call on some prefix of $v_1$ would yield a smaller instance of a conflict (with respect to the length of $\alpha_1$), and therefore, the claim that   $\sigma$ only plays Read moves on the prefix of $v_1$ is shown.

We say that $\sigma$ is \emph{$(q,a)$-conflict-free after $v_1$} if either $\sigma(v_1,a)=\Read$ and there is no $(q,a)$-conflicting
\Call-move in plays of $\sigma$ on any word $v_1au\in W(\sigma)$, for some $u$, or vice versa with \Read and \Call reversed. In both cases, a strategy without any $(q,a)$-conflicts can be obtained by ``copying'' the substrategy of the subgame starting from $(v_1,a,u)$ to any conflicting configuration.

However, we need to show how to guarantee $(q,a)$-conflict-freeness after $v_1$.
Actually, thanks to the finiteness of $L(T)$
there can be no such conflict if $\sigma(v_1,a)=\Read$, since $T$ can not return to state $q$ after taking a
transition from $q$. But  in the case where
$\sigma(v_1,a)=\Call$, there could be a conflicting $\Read$-move at
some configuration $(\alpha,a,u')$ after $(v_1,a,u)$. However, again thanks to the finiteness of
$L(T)$, there can not be any \Read moves between $(v_1,a,u)$ and
$(\alpha,a,u')$ and therefore, in particular, $u$ must be a suffix of $u'$. 
In this case we show that actually $u'=u$ must hold. By copying the strategy of $\sigma$ of the
subgame on  $(\alpha,a,u)$ to the subgame on $(v_1,a,u)$, a strategy that matches the first
case above can be obtained. 

In all cases, it can be concluded from the dominance of $\sigma$ that the new strategies obtained by the described modifications are
again dominant.

We now turn to a more detailed description of the proof.

We first show how to get rid of $(q,a)$-conflicts after $v_1$. Let us thus assume that $\sigma(v_1,a)=\Call$, $v_1au\in W(\sigma)$, and for some strategy $\tau$ of Romeo,
$\Pi(\sigma,\tau,v_1au)$ contains a configuration $(\alpha,a,u')$ with
$\delta^*(s,\natural\alpha)=q$ and $\sigma(\alpha,a)=\Read$, for some $u'$. As already argued above, there can
not be any \Read moves between  $(v_1,a,u)$ and $(\alpha,a,u')$ and
therefore $u$ must be a suffix of $u'$. 
We claim that $u'=u$.

Towards a contradiction, let us assume that $u'=u''u$ for some $u''\not=\epsilon$. 
We show by induction on $k$ that
$v_1a(u'')^ku\in W(\sigma)$ for each $k$. This is true for $k=0$ by
assumption. Suppose now that $v_1a(u'')^ku\in
W(\sigma)$. Similarly as
before, the play $\Pi(\sigma,\tau,v_1a(u'')^ku)$ reaches configuration
$(v_1,a,(u'')^ku)$ and later on $(\alpha,a,(u'')^{k+1}u)$. From this
configuration on, $\sigma$ is winning. 
But then the strategy that plays from $(v_1,a,(u'')^{k+1}u)$ on like
$\sigma$ plays from $(\alpha,a,(u'')^{k+1}u)$ on wins on the string
$v_1a(u'')^{k+1}u$. By dominance of $\sigma$, it follows that $v_1a(u'')^{k+1}u\in W(\sigma)$, completing the inductive step. 

However, now we can choose $k$  larger than the length $\ell$ of the longest string
in $L(T)$ to get the desired
contradiction: $v_1a(u'')^{k}u$ has more than $\ell$ symbols and
each of them contributes at least one symbol from $\Sigma$ to the
final history string, which therefore can not yield a win for Juliet.
 Altogether, we have established that $u'=u$.

We now define
$\widetilde{\sigma}$ by  
\[
\widetilde{\sigma}(\beta,x)=
\begin{cases}
  \sigma(\alpha\beta',x) & \text{if $\beta=v_1\beta'$, and $\beta'x$ begins with $a$,}\\
\sigma(\beta,x) & \text{otherwise.}
\end{cases}
\]
That $W(\sigma)\subseteq W(\widetilde{\sigma})$ is easy to show: if the
difference between $\sigma$ and $\widetilde{\sigma}$ actually matters for some
word $w\in W(\sigma)$, then $w$ must be of the form $v_1aw'$ and the
first configuration in which $\sigma$ and $\widetilde{\sigma}$ differ is
$(v_1,a,w')$. However, since Romeo can enforce the configuration
$(\alpha,a,w')$, $\sigma$ is winning from that configuration on. Since
$\widetilde{\sigma}$ plays from $(v_1,a,w')$ on like $\sigma$ plays from
$(\alpha,a,w')$ on, we can conclude that $\widetilde{\sigma}$ is winning from
$(v_1,a,w')$, as desired. 

Therefore, in the following we can assume that $\sigma$ is $(q,a)$-conflict-free after $v_1$.

Now we are ready to get rid of the $(q,a)$-conflict altogether. We define a strategy $\sigma'$ that copies the substrategy of $\sigma$ from configurations of the form $(v_1,a,u)$ onwards to any conflicting configuration. Formally,
\[
\sigma'(\beta,x)=
\begin{cases}
  \sigma(v_1\beta'',x) & \text{
    \begin{minipage}[t]{8cm}
      if $\beta=\beta'\beta''$,
      $\delta^*(s,\natural\beta')=q$,
      $\beta''x$ begins
      with $a$ or $\call{a}$, and $\sigma(\beta',a)\ne \sigma(v_1,a)$, where $\beta'$ is minimal
    \end{minipage}
}\\
\sigma(\beta,x) & \text{otherwise.}
\end{cases}
\]

Since the sub-plays of $\sigma$ from $(v_1,a,u)$ on do not contain any configurations that can cause a $(q,a)$-conflict, and since clearly no new conflicts are induced, we can conclude that $C(\sigma')\subsetneq C(\sigma)$.

To show that $\sigma'$ is again dominant it suffices to show
$W(\sigma)\subseteq W(\sigma')$. If the
difference between $\sigma$ and $\sigma'$ matters for some
word $w\in W(\sigma)$, then the
first configuration in which $\sigma$ and $\sigma'$ differ is of the form
$(\beta',a,u)$, with
$\beta'$ as in the definition of $\sigma'$ and $u\in\Sigma^*$. Since $\sigma$ is winning from $(\beta',a,u)$ on, a
strategy that plays \Read on $v_1$ and plays from
$(v_1,a,u)$ on like $\sigma$ from $(\beta',a,u)$ on wins on the word
$v_1au$. Since $\sigma$ is dominant, $\sigma$ wins on $v_1au$ as well. Since
$\sigma'$ plays from $(\beta',a,u)$ on like $\sigma$ from
$(v_1,a,u)$, it also wins, as desired.
\end{proofof}
}%\full

\subsubsection{Non-recursive games}\label{sec:ex-nr-nonrecursive}
\short{
	\begin{proof}[Proof idea of Theorem~\ref{theo:right}~(d)]
		Starting with some dominant strategy $\sigma_1$, we inductively construct a
		sequence $(\sigma_1,\sigma_2,\dots)$ of dominant strategies that is
		either finite and ends with a strategy that has no $(q,a)$-conflict, or is infinite and converges to such a strategy. In the convergent case, since each strategy in a non-recursive game is terminating, the limit strategy is also dominant by Lemma~\ref{lemma:conv}. To construct a strategy $\sigma_{k+1}$, the idea is to modify $\sigma_k$ so as to move the earliest $(q,a)$-conflict to a later time in the future.
	\end{proof}
}
\full{
\begin{proofof}{Theorem~\ref{theo:right}~(d)}
Let $G=(\Sigma,R,T)$ with $T=(Q,\Sigma,\delta,s,F)$. Starting with some dominant strategy $\sigma_1$, we inductively construct a
sequence $(\sigma_1,\sigma_2,\dots)$ of dominant strategies that is
either finite and ends with a strategy that has no $(q,a)$-conflict, or is infinite and converges to such a strategy. To construct a strategy $\sigma_{k+1}$, the idea is to modify $\sigma_k$ so as to move the earliest $(q,a)$-conflict to a later time in the future.

We describe next how $\sigma_{k+1}$ is defined from $\sigma_k$. We distinguish two cases.

The first case is that for all $q\in Q$, $a\in\Sigma_f$ and all
configurations $(\alpha_1,a,u_1)$ and $(\alpha_2,a,u_2)$ with
$q=\delta^*(s,\natural\alpha_1)=\delta^*(s,\natural\alpha_2)$ that
occur in plays of $\sigma_k$ on strings from $W(\sigma_k)$, it holds that
$\sigma_k(\alpha_1,a)=\sigma_k(\alpha_2,a)$. In this case
$\sigma_k$ is the last strategy of the sequence.

In the other case,  let $w_1,w_2\in W(\sigma)$ and strategies $\tau_1,\tau_2$ of Romeo such that $\Pi(\sigma_k,\tau_1,w_1)$ reaches some configuration $(\alpha_1,a,u_1)$ with $\delta^*(s,\natural\alpha_1)=q$,
$\Pi(\sigma_k,\tau_2,w_2)$ reaches some configuration $(\alpha_2,a,u_2)$
with $\delta^*(s,\natural\alpha_2)=q$ and
$\sigma_k(\alpha_1,a)\not=\sigma_k(\alpha_2,a)$, such that $|\alpha_2|\ge|\alpha_1|$ and $(|\alpha_2|,|\alpha_1|)$ is lexicographically minimal.

As in the proof of
Theorem~\ref{theo:right}~(c), it can be shown that $\sigma_k$ only
plays \Read moves on $\natural\alpha_1$ and that it wins on
$\natural\alpha_1au_1$. By minimality of $(|\alpha_2|,|\alpha_1|)$ with $|\alpha_2|\ge|\alpha_1|$, we can conclude that actually $\alpha_1=\natural\alpha_1$.

Now we are ready to define $\sigma_{k+1}$. It plays like $\sigma_k$
except that if it reaches a configuration with
$T$-state $q$ and current symbol $a$ after precisely $|\alpha_2|$ moves, then it continues to play like
$\sigma_k$ for the history string $\alpha_1$. Formally,
\[
\sigma_{k+1}(\beta,x)=
\begin{cases}
  \sigma_k(\alpha_1\beta'',x) & \text{
    \begin{minipage}[t]{7.8cm}
      if $\beta=\beta'\beta''$, $|\beta'|=|\alpha_2|$, $\delta^*(s,\natural\beta')=q$, and
      $\beta''x$ begins with $a$ or $\call{a}$,
    \end{minipage}
}\\
\sigma_k(\beta,x) & \text{otherwise.}
\end{cases}
\]

We need to show that $\sigma_{k+1}$ is dominant. Towards a
contradiction, suppose it is 
not.  
Since $\sigma_k$ is dominant by the induction hypothesis, there exist $w\in
W(\sigma_k)$ and a Romeo strategy $\tau$ such
that Juliet does not win $\Pi(\sigma_{k+1},\tau,w)$. Thus, the plays of
$\sigma_{k+1}$ and $\sigma_k$ against $\tau$ on $w$ reach a
configuration $(\beta,av)$ with $\beta\in\hs$ and
$v\in\Sigma^*$ such that $\delta^*(s,\natural\beta)=q$. Furthermore,
$\sigma_k$ wins from $(\beta,av)$ on, but $\sigma_{k+1}$ does not, and
therefore $\sigma_k$
does not win on $\alpha_1av$. Since $\sigma_k$ is dominant, no
strategy wins on $\alpha_1av$. However, consider the strategy $\widetilde{\sigma}$
defined by 
\begin{align*}
\widetilde{\sigma}(\alpha,b)=
\begin{cases}
\sigma_k(\alpha,b),&\text{if $\alpha_1$ is not a prefix of $\alpha$,}\\
\sigma_k(\beta\alpha',b),&\text{if $\alpha=\alpha_1\alpha'$ for some $\alpha'\in\hs$.}
\end{cases}
\end{align*}
 On the input word $\alpha_1av$, $\widetilde{\sigma}$ initially plays only Read moves on the prefix $\alpha_1$. Upon reaching the configuration $(\alpha_1,av)$, it starts playing like $\sigma_k$ would play after reaching the configuration $(\beta,av)$.
Since $\sigma_k$ wins from $(\beta,av)$ on, $\widetilde{\sigma}$ wins on $\alpha_1av$, the desired contradiction. 

Altogether, $(\sigma_1,\sigma_2,\dots)$ is indeed a sequence of dominant
strategies. If it is finite, then its last strategy $\sigma_k$ has no two conflicting
configurations that can occur in plays on words from $W(\sigma_k)$. Thus, it can be turned into a strongly regular dominant strategy by adapting it only for configurations that cannot occur in plays on words where it does not win.
 
If $(\sigma_1,\sigma_2,\dots)$ is infinite, it converges towards a strategy
$\sigma'$, because, for each $n$, $\alpha_2$ can be of length $n$  at most $|Q||\Sigma_f|$
times in the above construction. Since $G$ is non-recursive, every play terminates, and
therefore Lemma~\ref{lemma:conv} guarantees that $\sigma'$ is
dominant as well. Finally, $\sigma'$ can be transformed into a
strongly regular strategy just as in the finite case.
\end{proofof}
We note that the proof uses non-recursiveness only to guarantee that
$\sigma'$ has the same winning set as every $\sigma_k$. 

\subsubsection{Unary games}\label{sec:unary}

To complete the proof of Theorem~\ref{theo:right} we show in this subsection that every unary game has an undominated strategy. Before we turn to this proof, we first present some observations and a helpful lemma. 
Let in the following always $\Sigma=\{a\}$ and 

We first observe that, since the DFA $T$ for the unary target language is minimal, we can assume that it is of the following form: $Q=\{0, 1, \dots, m-1\}$, for some $m$, with $1$ as initial state; the transitions are precisely the edges $(i-1, i)$ for $i=1,\dots,m-1$ and the edge $(m-1,j)$ for some state $j\in Q$. 

We can assume that $a\not\in R_a$, since otherwise always playing Read is a dominant strategy. Indeed, playing Call cannot yield an advantage in this case, since Romeo can choose to never change the current word. Thus, we can restrict our attention to the case where the replacement language contains only words of length at least $2$.

We call a strategy $\sigma$ \term{almost undominated} if there are only finitely many words on which $\sigma$ does not win but some strategy dominating $\sigma$ does win.

\begin{lemma}\label{lem:almost}
  If a game has an almost undominated strategy then it also has an undominated strategy.
\end{lemma}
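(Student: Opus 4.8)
The plan is to turn the finiteness hidden in the definition of \emph{almost undominated} into a bound on how much any dominating strategy can improve, and then to pick a strategy that realizes the best possible improvement. Let $\sigma$ be an almost undominated strategy and set
\[
I \df \{\, w \mid w\notin W(\sigma)\ \text{and}\ w\in W(\sigma')\ \text{for some strategy}\ \sigma'\ \text{dominating}\ \sigma \,\},
\]
which is finite by hypothesis. First I would record two elementary facts: $W(\sigma)\cap I=\emptyset$ (every element of $I$ is by definition a non-winning word of $\sigma$), and for \emph{every} strategy $\sigma'$ dominating $\sigma$ one has $W(\sigma')\subseteq W(\sigma)\cup I$, since any $w\in W(\sigma')\setminus W(\sigma)$ witnesses its own membership in $I$. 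Thus all the ``room for improvement'' over $\sigma$ is confined to the finite set $I$.

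Next I would single out a concrete undominated candidate. Among all strategies $\sigma'$ that dominate $\sigma$ --- a non-empty family, since $\sigma$ dominates itself --- consider the integer $m(\sigma')\df \abs{W(\sigma')\cap I}$. As $I$ is finite we have $m(\sigma')\in\{0,1,\dots,\abs{I}\}$, so the maximum is attained; fix a strategy $\sigma^*$ dominating $\sigma$ for which $m(\sigma^*)$ is maximal. I claim $\sigma^*$ is undominated, which is exactly what the lemma asserts.

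To establish the claim I would argue by contradiction: suppose some $\sigma^{**}$ strictly dominates $\sigma^*$, so $W(\sigma^*)\subsetneq W(\sigma^{**})$. The crucial step --- and the one that needs care --- is to show that every new word is still charged against the \emph{original} set $I$. Since domination is transitive, $W(\sigma)\subseteq W(\sigma^*)\subseteq W(\sigma^{**})$ shows that $\sigma^{**}$ also dominates $\sigma$; hence any $w\in W(\sigma^{**})\setminus W(\sigma^*)$ satisfies $w\notin W(\sigma)$ (as $W(\sigma)\subseteq W(\sigma^*)$) and $w\in W(\sigma^{**})$ with $\sigma^{**}$ dominating $\sigma$, so $w\in I$. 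Combined with $W(\sigma^*)\subseteq W(\sigma^{**})$ this gives $W(\sigma^*)\cap I\subsetneq W(\sigma^{**})\cap I$, whence $m(\sigma^{**})>m(\sigma^*)$, contradicting maximality. The main obstacle is precisely this transitivity bookkeeping: one must resist charging the improvement against a possibly larger improvement set of $\sigma^*$ and instead keep it inside the fixed finite $I=I(\sigma)$; once that is in place, the argument is just counting in a finite set of integers.
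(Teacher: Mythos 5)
Your proof is correct, and while it rests on the same two ingredients as the paper's proof---the finiteness of the improvement set and the transitivity of domination, which confines every possible gain over $\sigma$ to that fixed finite set---it assembles them by an extremal principle rather than by iteration. The paper proceeds greedily: starting from $\sigma_0=\sigma$ with improvement set $W_0$, it repeatedly replaces $\sigma_{i-1}$ (if still dominated) by a strategy that dominates it and wins on the smallest word of the current improvement set $W_{i-1}$, and observes that these sets shrink strictly, so after at most $n=|W_0|$ steps one reaches a strategy whose improvement set is empty, i.e.\ an undominated one. You instead make a single choice: among all strategies dominating $\sigma$, fix one maximizing the bounded integer $m(\sigma')=|W(\sigma')\cap I|$, and show that a strict dominator of the maximizer would have to win a further word of $I$, contradicting maximality. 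The two arguments are equally non-constructive, but yours dispenses with the induction and with the nested sets $W_i$ (whose monotone shrinking the paper compresses into a single ``Clearly''), and it spells out precisely the transitivity bookkeeping---that improvements must be charged against the \emph{original} set $I$---which is the step the paper's write-up leaves implicit. What the paper's chain formulation buys in exchange is a small amount of extra information, namely a bound of $|W_0|$ on the number of successive strict improvements needed, which is not required for the statement itself.
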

\begin{proof}
  Let $\sigma_0$ be an almost undominated strategy, let $W_0$ be the finite set of words  on which $\sigma_0$ does not win but some strategy dominating $\sigma_0$ does win, and let $n=|W_0|$. For each $i>0$, we define $\sigma_i$ as $\sigma_{i-1}$ if  $\sigma_{i-1}$ is undominated or as some strategy that dominates $\sigma_{i-1}$ and wins on the smallest word in $W_{i-1}$. We define $W_i$ as the subset of $W_{i-1}$ of words on which $\sigma_i$ does not win but some other strategy dominating $\sigma$ does win. Clearly, $W_n$ is empty and thus $\sigma_n$ is undominated.
\end{proof}

\begin{proofof}{Theorem~\ref{theo:right}~(e)}
Thanks to Lemma~\ref{lem:almost} it suffices to show that any unary game has an almost undominated strategy.

	We may assume that the target language $L(T)$ is $m$-periodic,
        meaning that $\delta(m-1,a)=0$. The theorem for the general
        case where $\delta(m-1,a)>0$ is then implied as follows: the
        target language of such a general automaton differs from some
        $(m-j)$-periodic language by at most the $j$ shortest
        words. Hence, if a strategy is almost undominated for the
        associated $(m-j)$-periodic language, then it is also almost
        undominated for the actual target language. We can also assume
        that $|R_a|>1$, since otherwise $G$ is prefix-free.

In the following, we identify a string $a^\ell$ with the number $\ell$.	 Let $n, k\ge 1$ be such that $1+n$ and $1+n+k$ are the shortest words in $R_a$. 

	Since every Call extends the remaining string by at least $n$ symbols, for each strategy there is a strategy with (at least) the same winning set, which always plays at least $n$ Reads after each Call. We denote such a combined move by CR$^{n}$. We therefore assume in the following that all strategies of Juliet that we consider are of this normal form and we only consider configurations and history strings where every CR$^n$ is complete. Moreover, whenever the successor state of the $T$-state of the current configuration is accepting, Juliet plays Read (because if this Read reaches the end of the word, then Juliet wins, and otherwise she does not hurt her chances to win by delaying Calls until later).

With each configuration $K$ we associate a profile $(c,r,\ell)$
consisting of the number $c$ of CR$^n$s that were played so far, the
number $r$ of additional Reads played so far (outside CR$^n$ moves)\footnote{Thus the total number of Reads before $K$ is $cn+r$.} and the length $\ell$ of the remaining string.
We note that history strings for games with unary alphabet are very uniform. Indeed, for arbitrary strategies $\sigma$ of Juliet, strategies $\tau,\tau'$ of Romeo, strings $w,w'$ and $i\ge 0$ it holds: if the plays $\Pi(\sigma,\tau,w)$ and $\Pi(\sigma,\tau',w')$ last for at least $i$ moves, then their history strings after $i$ moves are identical. In particular, for every fixed $\sigma$, each profile determines uniquely its underlying configuration.

By $\sigma_0$ we denote Juliet's strategy that always reads, and by $\tau_0$ and $\tau_1$ we denote the strategies of Romeo that always play $1+n$ and always play $1+n+k$, respectively.
We say that a strategy $\sigma$ is \emph{patient} if there is a string $w$ such that $\Pi(\sigma,\tau_0,w)$ has at least $m$ Call-moves.

Let $L$ be the union of $L(T)$ and the set of all strings of the form $w+ik$ with $w\in L(T)$, $w\ge m$, and $i\in \{1,\ldots,m-1\}$. We note that $L$ is ultimately $k$-periodic. More precisely, for all words $w$ of length at least $(k+1)|Q|$ it holds that $w\in L$ if and only if $w+k\in L$.

The existence of an almost undominated strategy can be concluded from the following three claims:
	\begin{enumerate}[(a)]
        \item If $\sigma_0$ is not undominated then there is a patient strategy that dominates $\sigma_0$. \label{it:unaryPatient}
		\item If there is a patient strategy that dominates $\sigma_0$, then there is a strategy that wins on $L$. \label{it:unaryInducedGood}
		\item Every strategy that wins on $L$ is almost undominated. \label{it:unaryNoneBetter}
	\end{enumerate}
	
	\ \\
	\noindent \textit{Proof of (\ref{it:unaryPatient}):}
	
	\noindent The proof is by contradiction. We assume that $\sigma_0$ is not undominated but that every strategy $\sigma$ that dominates it, uses at most $C$ Call-moves, for some $C<m$.
Let $\sigma$ be a strategy dominating $\sigma_0$ that has maximally $C$ Call moves.
Let $w$ be sufficiently large and $K$ be a configuration in the play $\Pi(\sigma,\tau_0,w)$ with $T$-state $0$ with $C$ Call moves in its history string $\alpha$. Let $\sigma'$ be the strategy that plays $\sigma'(\beta,a)=\sigma(\beta,a)$ for strict prefixes $\beta$ of $\alpha$ and $\sigma'(\alpha\beta,a)=\sigma(\beta,a)$, otherwise.
It is easy to see that $\sigma'$ has more than $C$ Call moves and dominates $\sigma_0$, the desired contradiction.
	
	\ \\
	\noindent \textit{Proof of (\ref{it:unaryInducedGood}):}
	
	\noindent 

Let $\sigma$ be a patient strategy that dominates $\sigma_0$.
Let $K$ be some configuration with profile $(m,r,\ell)$ and for which Read was played in the last move. We claim that $\sigma$ wins from $K$ on, if $r+\ell-ik\in L(T)$, for some $i\in\{0,\ldots,m-1\}$ with $ik<\ell$.

To this end, let $\tau^j$ denote the Romeo strategy that plays $1+n+k$ during its first $j$ moves and $1+n$ afterwards. 

Let thus $w=r+\ell-ik$. 
Since $w\in L(T)$, $\sigma$ wins on $w$. In particular, $\sigma$ wins against strategy $\tau^i$. It is easy to see that the play $\Pi(\sigma,\tau^i,w)$ reaches a configuration with the same profile as $K$ and thus, as explained before, $K$. 
Since $\sigma$ wins on $w$, it thus wins all plays from $K$ on. 

Next, we are going to construct from $\sigma$ a strategy $\sigma'$ that wins on $L$. To this end, let $w\in L(T)$ be such that $\Pi(\sigma,\tau_0,w)$ has at least $m$ Call-moves. Let $(m,r,\ell)$ be the profile of the configuration $K$ that is reached in $\Pi(\sigma,\tau_0,w)$ after the $m$-th Call move.  By the above claim, $\sigma$ wins from every configuration $(m,r,\ell')$, for which $r+\ell'-ik\in L(T)$, for some $i\in\{0,\ldots,m-1\}$ with $ik<\ell'$. The $T$-state of $K$ is $r'=r \mod m$.

Strategy $\sigma'$ is defined as follows. 
It first does $r'$ Read moves, yielding some configuration $K'$ with $T$-state $r'$.
 From there on, it copies the strategy of $\sigma$ in plays that start from $K$. More precisely, let $\alpha$ be the history string of $K$ and $\alpha'=r'$ be the history string of $K'$. Then, for every history string $\beta$, $\sigma'$ plays on $\alpha'\cdot\beta$ like $\sigma$ on  $\alpha\cdot\beta$. We can conclude that $\sigma'$ not only wins on every string $v\in L(T)$ but also on every string of the form $v+ik$ with $v\in L(T)$, $v\ge r'$ and $i\in\{0,\ldots,m-1\}$. 

Therefore, $\sigma'$ is a winning strategy for $L$, and \ref{it:unaryInducedGood} holds.
	
	\ \\
	\noindent \textit{Proof of (\ref{it:unaryNoneBetter}):}

\noindent We show that there is no strategy for Juliet that wins on $L$ and some word $w\not\in L$ of length at least $(k+1)m$.

Let $d$ be the ultimate number of words in $L$ per $k$ words, that is let $d:=|L\cap\{km+1,\ldots,km+k\}|$ be the number of words in $L$ per every $k$ adjacent words. Suppose some strategy $\sigma'$ wins on $L$ and on some word $w\notin L$ of length at least $km$, and let $L'=L\cup\{w\}$. In particular, the play $\Pi(\sigma',\tau,v)$ terminates, for every $v\in L'$ and every strategy of Romeo. Towards a contradiction, we prove by induction that, for every $t\ge 1$, $\Pi(\sigma',\tau_1,w)$ reaches\footnote{We recall that $\tau_1$ is the strategy of Romeo that always chooses the answer string $1+n+k$.}  some configuration $K_m$ with profile $(t,r,\ell)$, for some $r,\ell$. Clearly, this contradicts the assumption that $\Pi(\sigma',\tau_1,w)$ terminates.

For $m=1$, the claim follows since $w\not\in L(T)$.  

For the inductive step, let $K$ be the first configuration  in $\Pi(\sigma',\tau_1,w)$ with profile $(m,r,\ell)$, for some $r,\ell$. Clearly, the move that leads to $K$ is a $CR^n$-move and $\ell> k$, since Romeo played $1+n+k$. If $\Pi(\sigma',\tau_1,w)$ makes a further $CR^n$-move after $K$, we are done with the inductive step, thus we assume it does not. Since $\ell> k$, the play must therefore reach a configuration $K''$ with profile $(m,r',k)$, for some $r'$. Let  $\alpha$ be its history string.

By $w_1,\ldots,w_d$ we denote the minimal $d$ words that are longer than $w$, and for which $\sigma'$ is a winning strategy. For every $i\le d$, let $c_i=|w_i|-|w|$. By the choice of $d$ and since $L$ is $k$-periodic beyond $km$, it follows that $c_d\le k-1$. 

Let $\tau'$ denote the strategy that deviates from $\tau_1$ (only) by answering the $CR^n$-move at configuration $K'$ by $1+n$. 
 It is easy to see that, for each $i\le d$, the play $\Pi(\sigma',\tau',w_i)$ reaches, following the same history string $\alpha$,  a configuration $K''_i$ with profile $(t,r',c_i)$. Since we assume that $\sigma'$ wins from all these configurations, and there can be at most $d$ accepting states within the next $k$ states from the state of $T$ for $K''$, $\sigma'$ has to play $CR^n$ at least once during the next $k$ steps. The play  $\Pi(\sigma',\tau_1,w)$ thus reaches a configuration with profile $(t+1,r'',\ell')$ with some $r''$ and $\ell'$, completing the inductive step.
\end{proofof}
} % \full
\subsection{Negative results}

In the previous subsection we showed the strongly positive result that, on non-recursive games or games with finite target languages, the existence of a dominant strategy always implies that there exists a \emph{strongly regular} dominant strategy. We now turn to the negative results of this section and show that the above implication does not generalise to arbitrary games or to undominated strategies; in fact, the following theorem states that even much weaker statements do not hold in general.

\begin{theorem}\label{theo:wrong}
  \begin{enumerate}[(a)]%[label=(\alph*)]
  \item 	There exists a game $G_1$ with a regular dominant strategy but no forgetful dominant strategy.\label{it:negA}
  \item There exists a game $G_2$ with a forgetful regular dominant strategy but no strongly regular dominant strategy.\label{it:negB}
    \item The statements (a) and (b) hold also when ``dominant'' is replaced by ``undominated''. In this case, $G_1$ and $G_2$ can even be chosen as non-recursive games with a finite target language.
  \end{enumerate}
\end{theorem}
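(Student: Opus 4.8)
The plan is to construct two concrete games and show the required separations. For the separation results, the strategy is to build games where the history information that a strategy is "allowed" to use (regular vs. forgetful vs. strongly regular) is provably insufficient to play optimally, while the richer class of strategies can win. Since parts (a) and (b) concern dominant strategies and part (c) asks to lift these to undominated strategies and to non-recursive finite-target games, I would design the games carefully so that the same construction witnesses both versions.

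\textbf{Game $G_1$ (regular but no forgetful dominant strategy).} The key idea is to exploit that a forgetful strategy cannot distinguish two history strings $\alpha,\beta$ with $\natural\alpha=\natural\beta$, whereas a regular strategy can. I would construct a game where, after Romeo's replacement of some symbol, the correct move on a later symbol depends on \emph{whether} an earlier symbol was Called (recorded by a $\call{\cdot}$ marker) rather than merely on which $\Sigma$-symbols appear. Concretely, I would arrange two input words $w_1, w_2$ that lead to history strings with identical $\natural$-projection but that require different future decisions for a win; a regular strategy can remember the earlier Call (it is encoded in the history string and a DFA can track it), but a forgetful strategy, whose decisions depend only on $\natural\alpha$, is forced into the same move on both and therefore loses on at least one word on which winning is possible. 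To show that the regular strategy is genuinely \emph{dominant}, I must verify that it wins on \emph{all} words for which Juliet has any winning strategy, which requires analysing Romeo's optimal replacements and confirming there is no word where the regular strategy loses but some other strategy wins.

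\textbf{Game $G_2$ (forgetful regular but no strongly regular dominant strategy).} Here the obstruction is that a strongly regular strategy's decision depends only on the current $T$-state and current symbol, whereas a forgetful regular strategy may use a richer automaton state. I would construct a game where two distinct configurations share the same $T$-state and current symbol but require opposite moves (Read vs.\ Call) for a win. The forgetful regular strategy automaton can separate these via an auxiliary state, but the strongly regular automaton, being a \emph{sub}automaton of the minimal target DFA $T$ (its states are $Q\cup\{\Call\}$ and transitions must follow $\delta$ or jump to $\Call$), cannot distinguish them. Minimality of $T$ is crucial: it forces the collision of $T$-states to be unavoidable.

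\textbf{Lifting to undominated strategies and non-recursive finite-target games (part (c)).} For this I would refine both games so that their target languages are finite and no symbol is recursively derivable. The simplest route is to make all replacement rules ``terminal'' (replacement words contain only non-function symbols or strictly decrease in some well-founded measure), ensuring non-recursiveness, and to make $L(T)$ a finite set of short words. For the undominated versions, I would replace the phrase ``dominant strategy'' in the above analysis by the statement that the forgetful (resp.\ strongly regular) class contains \emph{no undominated strategy}: I would exhibit, for every forgetful (resp.\ strongly regular) strategy, another strategy in the larger class strictly dominating it, so no member of the restricted class can be maximal. The main obstacle throughout will be the dominance verifications: constructing the witnessing games is guided by the intuitions above, but proving that the richer-class strategy dominates \emph{every} strategy (part (a),(b)) or that \emph{every} restricted-class strategy is strictly dominated (part (c)) requires a careful, complete case analysis of all plays and all of Romeo's responses, and this is where the real work lies. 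I expect the finite-target non-recursive refinement to be delicate precisely because shrinking $L(T)$ and removing recursion tends to collapse the distinctions the separations rely on, so the games must be tuned so the required collisions of $T$-states or $\natural$-projections survive.
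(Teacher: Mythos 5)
Your proposal is a plan rather than a proof: the entire content of this theorem lies in exhibiting concrete games and carrying out the dominance/undominatedness verifications, and you explicitly defer both ("this is where the real work lies"). Beyond that, the mechanism you propose for $G_1$ is the wrong one. You want two winnable words whose plays reach configurations with equal $\natural$-projections but different Call markers, \emph{requiring different decisions}. The trouble is that the winning continuations from a configuration $(\alpha,x,v)$ depend only on $\natural\alpha$, $x$, $v$ (the final string is $\natural\alpha$ concatenated with whatever the subplay on $xv$ produces), so such a conflict is equally realized at the marker-free configurations $(u,x,v_1)$ and $(u,x,v_2)$ with $u=\natural\alpha_1=\natural\alpha_2$ — and a conflict at literally identical histories defeats \emph{every} one-pass strategy, regular or not (this is exactly the sandbox phenomenon with target $\{ab,bc\}$ cited in the introduction), destroying the dominant strategy you need to exist. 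The paper's actual obstruction is not conflicting winning requirements but \emph{forced non-termination}: in the paper's $G_1$ (unary alphabet, single rule $a\to aa$, $L(T)=\{a^k\mid k\ge 2\}$), the regular strategy "Call once, then Read forever" wins on \emph{all} words — making dominance trivial by design, a simplification you never identify — whereas a forgetful strategy cannot count Calls (the $\natural$-history stays $\epsilon$ under repeated Calls), so it either never Calls (losing on $a$) or Calls forever (infinite play, winning nothing). The paper's $G_2$ for part (b) works the same way: the forgetful regular strategy wins on all of $\Sigma^*$, while any strongly regular strategy is forced into an infinite play on $ac$ or on $ad$.

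There is a second structural error. By Theorem~\ref{theo:right}~(c) and (d), any game with a dominant strategy that is non-recursive \emph{or} has a finite target language also has a strongly regular (hence forgetful) dominant strategy. So the games witnessing (a) and (b) \emph{must} be recursive with infinite target languages — your mechanism, which nowhere uses recursion, cannot be implemented for these parts — and your plan for (c), to "refine both games" to be non-recursive with finite $L(T)$, is impossible for the dominant versions. The paper instead constructs two \emph{new} games for part (c), in which the exhibited regular (resp.\ forgetful regular) strategies are undominated but not dominant, and the separations rest on explicit trade-off analyses (e.g.\ any strategy winning on $a$ must play differently at two configurations sharing $T$-state and current symbol, so no strongly regular strategy wins on $a$, and any such strategy can then be strictly improved by adding $a$ to its winning set). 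Your high-level intuition for part (b) — state collisions in the minimal target DFA — does match the paper's part (c) argument, but without the concrete games, winning-set computations, and the recursion/termination insight, the proposal does not establish any part of the theorem.
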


\full{
\begin{proof}%[of Theorem~\ref{theo:wrong}]
  \begin{enumerate}[(a)]%[label=(\alph*)]
  \item The game $G_1=(\Sigma,R,T)$ with $\Sigma=\{a\}$, the only replacement rule being $a\to aa$ and $L(T)=\{a^k\mid k\ge 2\}$ has the stated property. The strategy plays Call exactly if it has not seen any symbol $\call{a}$. Since this strategy wins on every word, it is dominant. However, a forgetful strategy that plays Call on the first symbol $a$ is bound to play Call forever, and therefore does not win on any word. On the other hand, a forgetful strategy that plays Read on the first symbol $a$ does not win on the word $a$ and is therefore not dominant, either.
  \item Let $G_2=(\Sigma,R,T)$ with $\Sigma=\{a,b,c,d\}$, rule set $R$ given by 
\begin{align*}
&a\to b\\
&c\to ac\\
&d\to bad
\end{align*}
and the target language automaton $T=(Q,\Sigma,\delta_T,q_0,F)$ depicted in Figure~\ref{fig:dominant45wrongT}. We claim that the regular forgetful strategy~$\sigma_{\mathcal A}$ based on the automaton $\mathcal A$ shown in Figure~\ref{fig:dominant45wrongA} fulfils $W(\sigma_{\mathcal A} )=\Sigma^*$ and is thus dominant.
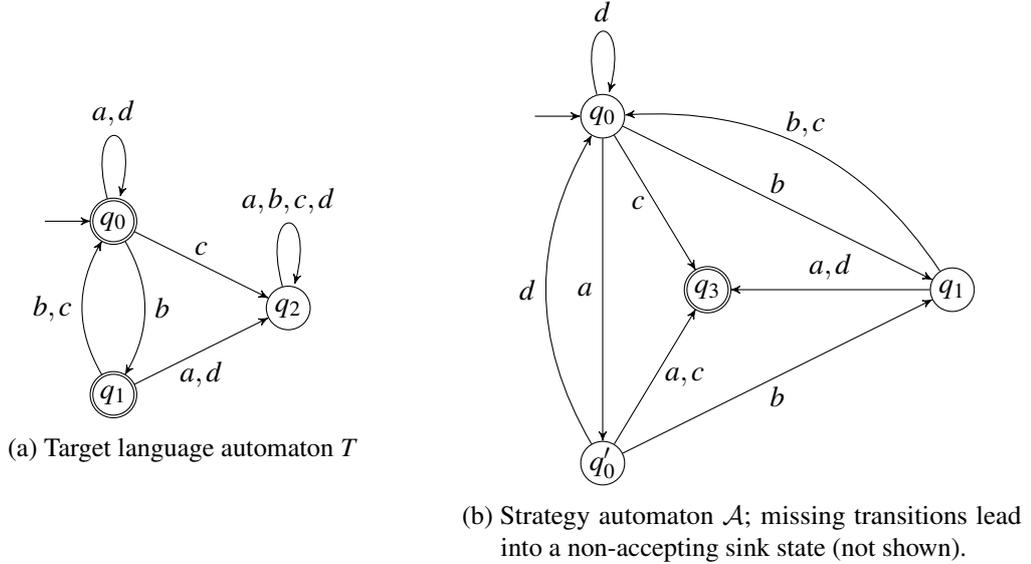
\begin{figure}[ht]
\centering
\begin{subfigure}{.5\textwidth}
\centering
\begin{tikzpicture}[xscale=2.3,yscale=2.3,>=stealth',initial text=,initial distance=1.5ex,every state/.style={inner sep=0pt,minimum size=1.5em}]
\node[state,initial,accepting] at (0,1) (q0) {$q_0$};
\node[state,accepting] at (0,0) (q1) {$q_1$};
\node[state] at (1,0.5) (q2) {$q_2$};
\path[->] (q0)	edge[loop above]	node[above] {$a,d$} ()
				edge[bend left]		node[right] {$b$} (q1)
				edge				node[above] {$c$} (q2)
		  (q1)	edge[bend left]		node[left]	{$b,c$} (q0)
				edge				node[below] {$a,d$} (q2)
		  (q2)	edge[loop above]	node[above] {$a,b,c,d$} ();
\end{tikzpicture}
\caption{Target language automaton $T$}\label{fig:dominant45wrongT}
\end{subfigure}% Do not remove this comment as otherwise it will cause a spurious blank space to be added, the total length will surpass \textwidth and the figures will end up not side-by-side
\begin{subfigure}{.5\textwidth}
\centering
\begin{tikzpicture}[xscale=2.3,yscale=2.3,>=stealth',initial text=,initial distance=1.5ex,every state/.style={inner sep=0pt,minimum size=1.5em}]
\node[state,initial] at (0,2) (q0) {$q_0$};
\node[state] at (0,0) (q0') {$q'_0$};
\node[state] at (2,1) (q1) {$q_1$};
\node[state,accepting] at (0.6,1) (q3) {$q_3$};
\path[->] (q0)	edge[loop above]	node[above] {$d$} ()
				edge		node[left] {$a$} (q0')
				edge				node[above] {$b$} (q1)
				edge	node[left] {$c$} (q3)
		  (q0')	edge[bend left]		node[left]	{$d$} (q0)
				edge				node[below] {$b$} (q1)
				edge				node[right] {$a,c$} (q3)
		  (q1)	edge[bend right]		node[above] {$b,c$} (q0)
				edge				node[above] {$a,d$} (q3);

\end{tikzpicture}
\caption{Strategy automaton $\mathcal A$; missing transitions lead into a non-accepting sink state (not shown).} \label{fig:dominant45wrongA}
\end{subfigure}
\caption{Automata used in the proof of Theorem~\ref{theo:wrong}~(b)}
\end{figure}
Indeed, by induction on the length of $w$, the following is easy to show: for each input word $w$, the strategy $\sigma_\calA$ yields a terminating play with a final string $u$ such that: 
$\delta_T(q_0,u)=\delta_\calA(q_0,u)$, if $u$ does not end with $a$, and $\delta_T(q_0,u)=q_0$ and $\delta_\calA(q_0,u)=q_0'$, otherwise.

However, for a strategy automaton $\mathcal B=(Q\cup\{\Call\},\Sigma,\delta_{\mathcal B},q_0,\{\Call\})$ of a strongly regular strategy $\sigma_{\mathcal B}$, it holds that $W(\sigma_{\mathcal B})\subsetneq\Sigma^*$, and thus no such $\sigma_{\mathcal B}$ is dominant.
For a proof of this claim, we can assume that $\delta_{\mathcal B}(q_0,c)=\delta_{\mathcal B}(q_1,a)=\delta_{\mathcal B}(q_1,d)=\Call$ since otherwise $\sigma_{\mathcal B}$ would lose on $c$, $ba$ or $bd$. If $\delta_{\mathcal B}(q_0,a)=q_0$, then the play of $\sigma_{\mathcal B}$ on $ac$ is infinite. On the other hand, if $\delta_{\mathcal B}(q_0,a)=\Call$, then play of $\sigma_{\mathcal B}$ on $ad$ is infinite.
\item For the analogon of (a), let $G_1=(\Sigma,R,T)$ over the alphabet $\Sigma=\{a,b,c,d,e\}$ with the set of rules $R$ given by
	\begin{align*}
		&a\to b+c\\
		&b\to cd\\
		&c\to e
	\end{align*}
	and target language $L(T)=\{e,cd\}$. It is not hard to see that the strategy that plays Call on $(\epsilon,a)$, $(\epsilon,b)$, $(\epsilon,c)$, $(\call{a},b)$ and $(\call{a},c)$ and Read otherwise is regular, wins on $\{a,b,c,e\}$, and is undominated. 

Towards a contradiction, let us assume that $\sigma$ is a forgetful strategy for $G_1$ that is undominated. We show first that $\sigma$ does not win on $a$. Suppose $\sigma$ does win on $a$. Then $\sigma(\epsilon,a)=\sigma(\epsilon,b)=\Call$. If $\sigma(\epsilon,c)=\Read$, then the play
$(\epsilon,a),(\call a,c),(\call ac,\epsilon)$
of $\sigma$ on $a$ is losing -- contradiction. If $\sigma(\epsilon,c)=\Call$, then the play
%\begin{align*}
$$(\epsilon,a),(\call a,b),(\call a\call b,cd),(\call a\call b\call c,ed),(\call a\call b\call ce,d),(\call a\call b\call ced,\epsilon)$$
%\end{align*}
of $\sigma$ on $a$ is losing -- contradiction.

It is also easy to see that no strategy can win on a word that starts with $a$ and has length at least $2$. A one-pass strategy $\sigma'$ that wins on all words from $W(\sigma)$ and also on $a$ can be defined by
\[\sigma'(\alpha, f)=
\begin{cases}
\Call,&\text{if $\alpha f\in\{a, \call a b, \call a c\}$,}\\
\Read,&\text{if $\alpha f=\call a\call b c$,}\\
\sigma(\alpha, f),&\text{otherwise}
\end{cases}\]
for $\alpha\in\hs$, $f\in\Sigma_f=\{a,b,c\}$. Since $W(\sigma)\subsetneq W(\sigma)\dotCup\{a\}=W(\sigma')$, it follows that $\sigma$ is not undominated.

For the analogon of (b), consider the game $G_2=(\Sigma,R,T)$ over the alphabet $\Sigma=\{a,b,c\}$ with rules given by
\begin{align*}
	&a\to bb+cbc\\
	&b\to cc
\end{align*}
and target language automaton $T$ as per Figure~\ref{fig:optimal45wrongT}. The finite target language is $L(T)=\{bbc,bcc,cbc,ccc\}$. 

\begin{figure}[ht]
	\centering
	\subcaptionbox{Target language automaton $T$\label{fig:optimal45wrongT}}[.5\textwidth]{
		\centering
		\begin{tikzpicture}[xscale=2.3,yscale=2.3,>=stealth',initial text=,initial distance=1.5ex,every state/.style={inner sep=0pt,minimum size=1.5em}]
		\node[state,initial] at (0,0) (q0) {$q_0$};
		\node[state] at (0.8,0) (q1) {$q_1$};
		\node[state] at (1.6,0) (q2) {$q_2$};
		\node[state,accepting] at (2.4,0) (q3) {$q_3$};
		\node[state] at (1.2,1) (q4) {$q_4$};
		\path[->] (q0)	edge node[above] {$b,c$} (q1)
		edge[bend left] node[left] {$a$} (q4)
		(q1)	edge node[above] {$b,c$} (q2)
		edge node[left] {$a$} (q4)
		(q2)	edge node[above] {$c$} (q3)
		edge node[right] {$a,b$} (q4)
		(q3)	edge[bend right] node[right] {$a,b,c$} (q4)
		(q4)	edge[out=135,in=45,loop] node[above] {$a,b,c$} ();
		\end{tikzpicture}}
			\subcaptionbox{Strategy automaton $\mathcal A$\label{fig:optimal45wrong4}}[.5\textwidth]{
		\centering
		\begin{tikzpicture}[xscale=2.3,yscale=2.3,>=stealth',initial text=,initial distance=1.5ex,every state/.style={inner sep=0pt,minimum size=1.5em}]
%		\clip (-0.44,-0.13) rectangle (2.63,1.7);
		\node[state,initial] at (0,0) (p0) {$q_0$};
		\node[state] at (0,2) (p1) {$q_1$};
		\node[state] at (3,0) (p2) {$q_2$};
		\node[state] at (3,2) (p3) {$q_3$};
		\node[state] at (2.2,1) (p4) {$q_4$};
		\node[state,accepting] at (1,1) (p5) {$q_5$};
		\path[->] (p0)	edge node[left] {$b$} (p1)
		edge node[below] {$c$} (p2)
		edge node[below] {$a$} (p5)
		(p1)	edge node[above] {$c$} (p3)
		edge node[right] {$a,b$} (p5)
		(p2)	edge node[below right] {$b,c$} (p3)
		edge node[below] {$a$} (p5)
		(p3)	edge node[above] {$c$} (p4)
		edge node[above] {$a,b$} (p5)
		(p4)	edge[out=-45,in=35,loop] node[right] {$c$} ()
		edge node[below right] {$a,b$} (p5);
		\end{tikzpicture}}
	\caption{Automata used in the proof of Theorem~\ref{theo:wrong}~(c)}
\end{figure}

We claim that the regular forgetful strategy $\sigma_{\mathcal A}$ based on the automaton $\mathcal A$ shown in Figure~\ref{fig:optimal45wrong4} is undominated. It can be checked easily that $W(\sigma_{\mathcal A})=\{a,bb,bcc,cbc,ccc\}$. The only words that are not in $W(\sigma_{\mathcal A})$ even though a strategy exists that wins on them are $bc$, $bbc$ and $cb$. However, a one-pass strategy that wins on $bc$ cannot win on $bcc$; a strategy that wins on $bbc$ cannot win on $bb$; a strategy that wins on $cb$ cannot win on $cbc$. Therefore, every one-pass strategy that wins on a word on which $\sigma_{\mathcal A}$ does not win also loses on a word on which $\sigma_{\mathcal A}$ wins. Hence, there exists no strategy $\sigma$ with $W(\sigma_{\mathcal A})\subsetneq W(\sigma)$, which means that $\sigma$ is undominated.

It remains to show that there exists no undominated strongly regular strategy for $G$. 
Let $B=(Q\cup\{\Call\},\Sigma,\delta_{\mathcal B},q_0,\{\Call\})$ be the strategy automaton of a strongly regular strategy $\sigma_{\mathcal B}$. Observe that $\sigma_{\mathcal B}$ cannot win on $a$: In order to win on $a$, a strategy must play Call in the configurations $(\epsilon,a,\epsilon)$ and $(\call ab,b, \epsilon)$ and it must play Read in the configurations $(\call a,b,b)$ and $(\call ac,b,c)$. But a strongly regular strategy must play the same in the configurations $(\call ab,b,\epsilon)$ and $(\call ac,b,c)$, because both have $T$-state $q_1$ and current symbol $b$. However, we can find a one-pass strategy $\sigma$ with $W(\sigma_{\mathcal B})\subsetneq W(\sigma_{\mathcal B})\dotCup\{a\}=W(\sigma)$. Indeed, such a strategy can be defined by
\begin{align*}
\sigma(\alpha,f)=
\begin{cases}
\Call,&\text{if $(\alpha,f)=(\epsilon,a)$ or $(\alpha,f)=(\call ab,b)$,}\\
\Read,&\text{if $(\alpha,f)=(\call a,b)$ or $(\alpha,f)=(\call ac,b)$,}\\
\sigma_{\mathcal B}(\alpha,f),&\text{otherwise}
\end{cases}
\end{align*}
for $\alpha\in\hs$ and $f\in\Sigma_f=\{a,b\}$. Hence, $\sigma_{\mathcal B}$ is not undominated.
\end{enumerate}
\vspace{-\baselineskip}
\end{proof}
 } % \full

\section{Complexity results for regular strategies}\label{cha:complexity}
For classes of games where (weakly) dominant or undominated strategies are not guaranteed to exist, it would be desirable to have algorithms determining, given a game $G$, whether a strategy of a certain type exists on $G$, or whether a given strategy is (weakly) dominant or undominated on $G$. Unfortunately, in general, these problems are not even known to be decidable.

In this section, we start the investigation of algorithmic problems related to one-pass strategies by studying the complexity of three somewhat more restricted types of problems. The first is to test for a given game, input word and strategy whether the strategy wins on the input word. The second is similar, but the strategy is not part of the input: The question in this case is whether a strategy exists that wins on a given input word in a given game. The third problem is relevant in the context of determining dominant or undominated strategies: Given a game and two strategies, is one strategy better than the other? That is, we examine the following decision problems for automata-based classes $\mathcal{S}$ of strategies: 

%FOR CORRECT DISPLAY IN scrartcl:
%Uncomment this block (and comment out the one after)

\newcommand{\myboxi}[3]{\node[action,fill=blue!10] at (#1) {\begin{minipage}{#2}%
#3\end{minipage}};}

\renewcommand{\algproblem}[5][0.3\textwidth]
{%
\myboxi{#2}{#1}{\small
\centerline{\textbf{#3}}
    \begin{description}
    \item[Input:]  #4
    \item[Question:] #5
    \end{description}
}}

\centerline{
\scalebox{0.9}{
\begin{tikzpicture}
\algproblem{0,1}{IsWinning}{Game $G=(\Sigma,R,T)$, word $w\in\Sigma^*$, strategy automaton $\mathcal A$ for $G$}{Is $w\in W(\sigma_{\mathcal A})$?}
\algproblem{5,0.8}{ExistsWinning($\mathcal S$)}{Game $G=(\Sigma,R,T)$, \\ word $w\in\Sigma^*$}{Does a strategy $\sigma\in \mathcal S$ for $G$ exist such that $w\in W(\sigma)$?}
\algproblem{10,1}{IsDominated}{Game $G$, strategy automata $\mathcal A_1$ and $\mathcal A_2$ for $G$}{\mbox{ }\newline Is $W(\sigma_{\mathcal A_1})\subseteq W(\sigma_{\mathcal A_2})$?}
\end{tikzpicture}
}
}

%%FOR CORRECT DISPLAY IN LIPICS:
%%Uncomment this block (and comment out the previous)
%
%\newcommand{\myboxi}[3]{\node[action,fill=blue!10] at (#1) {\begin{minipage}{#2}%
%#3\end{minipage}};}
%
%\renewcommand{\algproblem}[5][0.33\textwidth]
%{%
%%\setlist[description]{leftmargin=.5cm}%
%\myboxi{#2}{#1}{\small
%\centerline{\textbf{#3}}
%    \begin{description}
%    \item[Input:]  #4
%    \item[Question:] #5
%    \end{description}
%%\vspace{-5mm\\}
%}}
%
%\centerline{
%\scalebox{0.9}{
%\begin{tikzpicture}
%\algproblem{0,1}{IsWinning}{Game $G=(\Sigma,R,T)$, \\ word $w\in\Sigma^*$, \\ strategy automaton $\mathcal A$ for $G$}{Is $w\in W(\sigma_{\mathcal A})$?} 
%%\hfill%
%\algproblem{5,1}{ExistsWinning($\mathcal S$)}{Game $G=(\Sigma,R,T)$, \\ word $w\in\Sigma^*$}{Does a strategy $\sigma\in \mathcal S$ for $G$ exist such that $w\in W(\sigma)$?}
%%\end{tikzpicture}
%%\hfill%
%%
%%\begin{tikzpicture}
%\algproblem{10,1}{IsDominated}{Game $G$, strategy automata $\mathcal A_1$ and $\mathcal A_2$ for $G$}{\mbox{ }\newline Is $W(\sigma_{\mathcal A_1})\subseteq W(\sigma_{\mathcal A_2})$?}
%\end{tikzpicture}
%}
%}

For the \textsc{IsWinning} and \textsc{IsDominated} problems (and the more general problem whether a given strategy is dominant or undominated), we restrict our attention to regular strategies, since non-regular strategies do not in general have a finite encoding (as there are uncountably many of them). The \textsc{ExistsWinning} problem may be examined for several types of strategies (regular, forgetful, etc.); however, we mostly focus on the set \SReg of strongly regular strategies.% as the most relevant type.

The main results of this section are summarised in the following theorem.
\begin{theorem}\label{thm:complexity}
  \begin{enumerate}[(a)]%[label=(\alph*)]
  \item $\textsc{IsWinning} \in \P$. \label{thm:Pcomplete}
  \item \textsc{ExistsWinning($\tFive$)} is $\NP$-complete. The lower bound already holds for non-recursive games whose replacement languages are finite and prefix-free. \label{thm:NPcomplete}
  \item \textsc{IsDominated} is $\PSPACE$-complete. The lower bound already holds for strongly regular strategies and for non-recursive games whose plays all have depth at most $2$, and whose replacement words all have length $1$.
  \end{enumerate}
\end{theorem}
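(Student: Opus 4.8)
The plan is to route all three problems through one device: for a \emph{fixed} regular strategy $\sigma_{\mathcal A}$, a whole play on an input word is summarised by polynomially many \emph{effect summaries}, one for each symbol and each pair consisting of a strategy-automaton state and a $T$-state. For (a), abbreviate a situation by a pair $(p,q)\in Q_{\mathcal A}\times Q$ and, for every symbol $a$, compute by a single simultaneous least fixpoint the set $\mathrm{Exit}(a,p,q)\subseteq Q_{\mathcal A}\times Q$ of pairs reachable at the end of some \emph{finite} play that processes $a$ from $(p,q)$. If $\sigma_{\mathcal A}$ reads on $(p,a)$ the summary is immediate (advance $\mathcal A$ on $a$, advance $T$ on $a$); if it calls, the history gains $\call a$ and Romeo ranges over $L_a$, so I would run a reachability inside the product of an automaton for $L_a$ with the space of $(p,q)$-pairs, plugging in the already-computed summaries $\mathrm{Exit}(b,\cdot,\cdot)$ for each symbol $b$ read along the replacement. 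The lattice has polynomial height, so the fixpoint is reached in polynomial time; then $\mathrm{states}_s(w,\sigma_{\mathcal A})$ is obtained by composing $\mathrm{Exit}$ along $w$, and $w\in W(\sigma_{\mathcal A})$ iff no divergence occurs and every reachable final pair has its $T$-state in $F$. The one subtlety is detecting infinite plays: since finite depth implies termination, an infinite play is an infinite descending chain of call-situations, hence (the situation graph being finite) a reachable cycle in the dependency graph on call-situations $(a,p,q)$, which is a polynomial reachability test. Thus $\textsc{IsWinning}\in\P$.

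For (b), a strongly regular strategy is one bit per pair $(q,a)\in Q\times\Sigma_f$, hence of polynomial size, so I would guess it and verify $w\in W(\sigma)$ in polynomial time by (a); this gives $\textsc{ExistsWinning}(\tFive)\in\NP$. For hardness I reduce from \textsc{Sat}. Given $\varphi=C_1\wedge\dots\wedge C_k$ over $x_1,\dots,x_m$, the input word is a single start symbol $S$ with finite prefix-free replacement language $\{B_1,\dots,B_k\}$, one block per clause; a winning strategy must call $S$, whereupon Romeo selects the clause $C_j$ to be tested. Before the symbol $p_i$ for variable $x_i$ in each block, a non-function \emph{routing} symbol drives $T$ to a state $q_i$ used \emph{only} for $x_i$, so the strongly regular decision at $(q_i,p_i)$ is a single bit, the truth value of $x_i$, consistent across all blocks; a polarity marker placed just after $p_i$ lets $T$ record whether the literal is satisfied, and $T$ accepts a block iff some literal is. Since Juliet must win for \emph{every} clause Romeo may pick, a strongly regular winning strategy on $S$ exists iff some assignment satisfies all clauses. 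The game is non-recursive with finite prefix-free replacement languages, matching the claimed lower-bound restrictions.

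For (c), membership uses the same summaries. For each $\sigma_i$ I build a deterministic automaton $\mathcal M_i$ over $\Sigma$ whose states are subsets $R\subseteq Q_{\mathcal A_i}\times Q$ together with a rejecting sink: reading $a$ sends $R$ to $\bigcup_{(p,q)\in R}\mathrm{Exit}(a,p,q)$, or to the sink if some $(p,q)\in R$ can diverge on $a$, and $R$ is accepting iff all its $T$-states lie in $F$; then $L(\mathcal M_i)=W(\sigma_i)$. Non-domination $W(\sigma_1)\not\subseteq W(\sigma_2)$ is reachability of a state that is accepting in $\mathcal M_1$ and rejecting in $\mathcal M_2$ inside the product $\mathcal M_1\times\mathcal M_2$; a product state is a pair of subsets and so takes polynomial space, whence a nondeterministic walk puts non-domination in $\mathrm{NPSPACE}=\PSPACE$, and $\textsc{IsDominated}$ itself is in $\PSPACE$ by closure under complement.

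For the $\PSPACE$ lower bound I reduce from NFA universality. Given $\mathcal N$ I would build a non-recursive game with length-$1$ rules in which Juliet's strategy $\sigma_1$ calls each input symbol while Romeo's replacement guesses the next $\mathcal N$-state (using depth $1$, or depth $2$ when a symbol and a state are guessed in two steps), and the target $T$ threads the guessed run and accepts exactly when it is \emph{not} a valid accepting run; then $W(\sigma_1)=\overline{L(\mathcal N)}$, a reference strategy $\sigma_2$ is arranged to have empty winning set, and $W(\sigma_1)\subseteq W(\sigma_2)$ holds iff $\overline{L(\mathcal N)}=\emptyset$ iff $\mathcal N$ is universal; each input position is rewritten at most twice, so all plays have depth at most $2$. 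The main obstacle is that $W(\sigma_1)$ must equal $\overline{L(\mathcal N)}$ \emph{exactly}: because $T$ cannot tell whether a symbol arose from a \Call or was supplied directly in the input, a naive game lets malformed inputs pollute $W(\sigma_1)$. I would overcome this by having $T$ consistency-check each guessed transition against its own current state and by forcing the two strategies to agree on every non-well-formed input, so that only the clean inputs over $\mathcal N$'s alphabet — the ones faithfully encoding universality — can witness non-inclusion. This exactness argument, rather than the fixpoint or subset constructions, is where the real work lies.
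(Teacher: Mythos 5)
Your part (a) and both upper bounds are correct and essentially the paper's own argument: your per-symbol effect summaries $\mathrm{Exit}(a,p,q)$ computed by a least fixpoint, plus cycle detection among call-situations, are exactly the paper's relations $\Move$, $\Next$ and $\Inf$ (computed there by structural induction over the regular expressions $R_a$), and your on-the-fly subset walk for \textsc{IsDominated} is a determinised rephrasing of the paper's route (construct polynomial-size NFAs for $\Sigma^*\setminus W(\sigma_{\calA_i})$, then invoke the $\PSPACE$ algorithm for NFA containment). Your \textsc{Sat} reduction for (b) is genuinely different from, and simpler than, the paper's: because you encode $x_i$ in the single \Call/\Read bit at one pair $(q_i,p_i)$, there is no degenerate ``call both / read both'' assignment, which is precisely what forces the paper to prepend its assignment-forcing prefix $w_1=0CDa_2 0CD\cdots a_n0CD$. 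You still owe two details, though: once a literal is satisfied, $T$ must jump to an \emph{absorbing} accepting region rather than thread a ``satisfied'' bit through the shared states $q_i$ (a sat-bit would split $q_i$ into two copies and destroy the single-bit consistency your reduction rests on), and you must verify that your $T$ is a \emph{minimal} DFA, since games and strongly regular strategies are defined relative to the minimal target automaton (the paper proves minimality of its construction explicitly, and in part (c) even adds the symbols $(\$,p)$ for no other purpose).

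The genuine gap is the $\PSPACE$ lower bound in (c). You correctly name the obstacle --- malformed inputs pollute $W(\sigma_1)$ because $T$ cannot tell an input symbol from a \Call-produced one --- but your resolution cannot work. In any reduction of this shape, $\sigma_1$ must \emph{read} the guessed-transition symbols $(a,q)$ at the states of $\calN$ (calling them would destroy the run-threading), and $T$'s consistency check on a wrong guess, i.e.\ on $(a,q)$ read at $p$ with $(p,a,q)\notin\delta_{\calN}$, must lead to an \emph{accepting} region: if it led to a rejecting one, Romeo would simply cheat in every play and $W(\sigma_1)\cap\{0,1\}^*$ would be empty regardless of $\calN$. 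But then $W(\sigma_1)$ unavoidably contains malformed words: the one-symbol input $(a,q)$ with $(s,a,q)\notin\delta_{\calN}$ is won by reading alone, and so is any consistent $(a,q)$-sequence spelling a run of $\calN$ that ends in a non-final state of $\calN$ --- such runs exist even when $\calN$ is universal, since universality says \emph{some} run of each word accepts, not all. Consequently, with your ``reference strategy $\sigma_2$ with empty winning set'', the inclusion $W(\sigma_1)\subseteq W(\sigma_2)$ fails even for universal $\calN$, and the reduction always answers ``not universal''. Your fallback, forcing the two strategies to ``agree on every non-well-formed input'', is (given $W(\sigma_2)=\emptyset$) just the demand $W(\sigma_1)\cap(\Sigma^*\setminus\{0,1\}^*)=\emptyset$ in disguise, i.e.\ the same impossible purification.

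The working fix goes in the opposite direction, and is what the paper does: do not try to shrink $W(\sigma_1)$; instead enlarge $W(\sigma_2)$. The paper's $\sigma_2$ reads $0,1$ forever at the (non-accepting) initial state and, on the first symbol outside $\{0,1\}$, reads or calls its way to $\#$ and then into an absorbing accepting state $f$, so that $W(\sigma_2)=\Sigma^*\setminus\{0,1\}^*$ exactly. Then the pollution of $W(\sigma_1)$ by malformed words is harmless, and $W(\sigma_1)\subseteq W(\sigma_2)$ holds if and only if $W(\sigma_1)\cap\{0,1\}^*=\emptyset$, i.e.\ if and only if $L(\calN)=\{0,1\}^*$. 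With this change (and a minimality check for $T$), your construction coincides with the paper's and the claimed depth-$2$, length-$1$-replacement restrictions are met.
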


\full{
The remainder of this section is structured as follows. As an important tool for the upper bounds, we first show that relations describing Romeo's power against a given regular strategy can be computed in polynomial time (Proposition~\ref{prop:strategyAutomatonTests}).
Then we show that, from these relations, an NFA for those words for which Romeo has a winning strategy or can force infinite play can be computed, again in polynomial time (Propositon~\ref{prop:winningSetRegular}). Afterwards the upper bounds of Theorem~\ref{thm:complexity} are shown as Proposition~\ref{prop:complexity-upper}. The lower bounds are then shown as Propositions~\ref{prop:existswinning-lower} and \ref{prop:compare-lower}. The section is wrapped up by some additional results from \cite{Coester15}.

 In this section, we use some additional notation. Let $\mathcal{A} = (Q,\hist{\Sigma},\delta,s,F)$ be  a strategy automaton. For a configuration $(\alpha,u)$ arising in a play following strategy $\sigma_\calA$, we refer to $\delta^*(s,\alpha)$ as its $\calA$-state. 
For each $q \in Q$, we denote by $\mathcal{A}_q = (Q,\hist\Sigma,\delta,q,F)$ the automaton derived from $\mathcal{A}$ by setting its initial state to $q$. 
For a word $w$ and strategies $\sigma$ and $\tau$ of Juliet and Romeo, we write $h(\Pi(\sigma,\tau,w))$ for the final history string of $\Pi(\sigma,\tau,w)$. Whenever we do so we imply that $\Pi(\sigma,\tau,w)$ is finite. For a Romeo strategy $\tau$ and a history string $\alpha$, we denote by $\tau^\alpha$ the strategy given by $\tau^\alpha(\beta,a)=\tau(\alpha\beta,a)$.

\begin{proposition}\label{prop:strategyAutomatonTests}
For a strategy automaton $\mathcal A$ and a game $G=(\Sigma,R,T)$, the following two relations can be computed in polynomial time.
\begin{enumerate}[(a)]%[label=(\alph*)]
\item The set of triples $(q,a,q')$ for which Romeo has a strategy 
$\tau$ such that $\delta^*(q,\alpha) = q'$ holds for 
    $\alpha=h(\Pi(\sigma_{\mathcal A_q},\tau,a))$.
\item The set of pairs $(p,a)$ for which Romeo has a strategy $\tau$ such that $\Pi(\sigma_{\calA_p},\tau,a)$ is infinite.
\end{enumerate}
\end{proposition}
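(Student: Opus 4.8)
The plan is to compute both relations by a monotone fixpoint computation over the finite set $Q\times\Sigma\times Q$ (for part~(a)), followed by a cycle-detection on a finite graph built from it (for part~(b)). The key observation is that, since $\sigma_{\calA}$ is a regular strategy, Juliet's behaviour in the sub-play on a single symbol depends only on the current $\calA$-state: whenever a configuration with $\calA$-state $p$ and current symbol $b$ is reached, $\sigma_\calA$ plays $\Call$ precisely when $\delta(p,b)\in F$, and the continuation depends on $p$ and Romeo's replacements alone. Hence the sub-play on a symbol $a$ started in $\calA$-state $q$ decomposes recursively: if $\delta(q,a)\notin F$ it is a single $\Read$ ending in $\calA$-state $\delta(q,a)$; if $\delta(q,a)\in F$, Romeo picks some $x=x_1\cdots x_k\in L_a$, the $\calA$-state moves to $\delta(q,\call a)$, and then the sub-plays on $x_1,\dots,x_k$ are played in turn, each started in the $\calA$-state in which the previous one ended.

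For part~(a), I would compute the relation $\mathrm{Step}\subseteq Q\times\Sigma\times Q$ of triples $(q,a,q')$ for which Romeo can force a \emph{finite} sub-play on $a$ from $\calA$-state $q$ ending in $\calA$-state $q'$, as a least fixpoint. For each $a$, fix a trimmed automaton $\calC_a=(Q_a,\Sigma,\delta_a,s_a,F_a)$ for $L_a$ (keeping only reachable and co-reachable states). Initialise $\mathrm{Step}$ with $(q,a,\delta(q,a))$ whenever $\delta(q,a)\notin F$. Then repeatedly, for each Call situation $(q,a)$ with $\delta(q,a)\in F$, perform a reachability in the product graph on $Q_a\times Q$ with start node $(s_a,\delta(q,\call a))$ and an edge $(r,p)\to(r',p')$ whenever $(r,b,r')\in\delta_a$ and $(p,b,p')$ already lies in $\mathrm{Step}$; add $(q,a,p)$ for every accepting node $(f,p)$ with $f\in F_a$ that is reachable from the start. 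Since $\mathrm{Step}$ grows monotonically and is bounded by $|Q|^2|\Sigma|$, the iteration stabilises after polynomially many rounds, each needing only polynomially many graph reachabilities; a straightforward induction on the height of the derivation tree shows that the resulting least fixpoint is exactly the set of triples requested in~(a).

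For part~(b), I would reuse $\mathrm{Step}$ to build a finite \emph{descent graph} $H$ on the node set of Call situations $\{(p,a)\mid \delta(p,a)\in F\}$. There is an edge $(p,a)\to(p',b)$ in $H$ exactly when Romeo can choose a replacement $x\in L_a$ and finitely process a proper prefix $x_1\cdots x_{i-1}$ of $x$, moving the $\calA$-state from $\delta(p,\call a)$ to $p'$ by composing $\mathrm{Step}$ along $x_1,\dots,x_{i-1}$, so that $b=x_i$ is a $\Call$ symbol at $p'$, i.e. $\delta(p',b)\in F$. As in~(a), this edge relation is computed by product reachability in $\calC_a\times Q$; trimming $\calC_a$ guarantees that the prefix $x_1\cdots x_i$ extends to a genuine word of $L_a$ that Romeo may legally play. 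I would then set $\Inf=\{(p,a)\mid (p,a)\text{ reaches a cycle in }H\}$, computable by standard reachability together with strongly-connected-component analysis. Soundness: from any node on a cycle Romeo repeatedly descends one $\Call$ level deeper, never completing the sub-play, yielding an infinite play. Completeness: if Romeo forces an infinite play, then since each replacement word is finite and a sub-play is infinite only if one of its symbol sub-plays is, the play descends through an infinite sequence of Call situations linked by $H$-edges, and finiteness of $H$ forces this sequence into a cycle.

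The main obstacle is part~(a): the defining recursion is genuinely mutually recursive across symbols and may cycle in recursive games, so one must argue that the least fixpoint both terminates in polynomial time and captures precisely the finitely terminating plays — admitting no spurious triples arising from never-completing derivations and omitting none. The accompanying subtlety, shared with~(b), is that Romeo must commit to a \emph{complete} replacement word; this is exactly what trimming $\calC_a$ to its co-reachable part enforces, since otherwise one could spuriously advance along prefixes that can never be completed into a word of $L_a$.
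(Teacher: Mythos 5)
Your proposal is correct, and it reaches the result by a technically different route than the paper. Both arguments rest on the same key observation (since $\sigma_{\mathcal A}$ is given by a DFA on history strings, the sub-play on a single symbol depends only on the current $\calA$-state, so single-symbol effects form a least fixpoint, and infinite plays are exactly those with an infinite nested descent through Call situations), and your $\mathrm{Step}$ relation coincides semantically with the paper's relation $\Move$ restricted to symbols. The difference is the machinery: the paper never builds automata for the replacement languages; it works directly on the regular expressions $R_a$, defining $\Move$ over $Q\times(S\cup\Sigma)\times Q$ (where $S$ is the set of subexpressions of the $R_a$) by structural induction with cases for $\epsilon$, concatenation, union and star, and for part (b) it introduces an auxiliary relation $\Next$ (``the sub-play on a word of $L(r)$ can reach a configuration with $\calA$-state $q'$ and current symbol $a$''), again by regex induction, so that $\Inf$ bottoms out in the case where $\Next$ leads back to exactly the pair $(q,a)$ that was called. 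You instead convert each $R_a$ into a trimmed NFA $\mathcal{C}_a$ and replace both inductions by product-graph reachability plus explicit cycle detection in a descent graph $H$. Your route makes the pigeonhole argument for (b) very transparent (an infinite path in a finite graph must enter a cycle), and your trimming cleanly enforces that Romeo only descends along prefixes completable to genuine replacement words, a point the paper handles implicitly through the regex structure (its subexpressions denote nonempty languages); the paper's route, in exchange, needs no representation conversion and gives one uniform syntactic induction for all correctness claims. Two small points you should make explicit: (i) the standard polynomial regex-to-NFA translation, since the game specifies the $L_a$ by expressions; and (ii) the completeness half of your ``exactness'' claim needs an induction on the witnessing finite play (the paper uses the measure $|x|+|r|+\sum_{a} n_a|R_a|$), not on the derivation tree --- derivation-tree induction only yields soundness, i.e.\ the absence of spurious triples.
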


\begin{proof}
Let $\mathcal{A} = (Q,\hist{\Sigma},\delta,s,F)$. 

Towards (a), we inductively define a relation $\Move$ and show that it captures precisely the possible kinds of state-to-state behaviour of sub-plays in which Juliet follows $\sigma_{\mathcal A}$.

Let $S$ be the set of all subexpressions of the regular expressions $R_a$ for all $a\in\Sigma_f$. We will define a relation $\Move \subseteq Q\times (S\cup\Sigma) \times Q$, where a triple $(q,a,q')\in\Move$ with $a\in\Sigma$ shall denote that a play of $\sigma_{\mathcal A_q}$ on $a$ may terminate in state $q'$ (for some strategy of Romeo). We note that the union $S\cup\Sigma$ is not disjoint since replacement words are non-empty by definition, and therefore some alphabet symbols occur as atomic subexpressions.

For $(q,r,q')\in Q\times (S\cup\Sigma) \times Q$, we define inductively that $(q,r,q')\in\Move$ if any of the following conditions holds for some $a\in\Sigma$,  $r_1,r_2\in S$ and $q''\in Q$:

\begin{enumerate}[(a)]
\item $r=a$ and $q'=\delta(q,a) \notin F$ \label{it:winningSetRegularMoveRead}
\item $r=a$, $\delta(q,a) \in F$ and $(\delta(q,\call{a}),R_a,q')\in\Move$\label{it:winningSetRegularMoveCall}
\item $r=\epsilon$ and $q=q'$\label{it:winningSetRegularMoveEps}
\item $r=r_1r_2$ and $(q,r_1,q''), (q'',r_2,q')\in\Move$\label{it:winningSetRegularMoveConc}
\item $r=r_1+r_2$ and $(q,r_1,q')\in\Move$
\item $r=r_1+r_2$ and $(q,r_2,q')\in\Move$
\item $r=r_1^*$ and $q'=q$\label{it:winningSetRegularMoveKleeneEps}
\item $r=r_1^*$ and $(q,r_1,q''), (q'',r_1^*,q')\in\Move$\label{it:winningSetRegularMoveKleeneMain}
\end{enumerate}
The semantics of the $\Move$ relation is characterised by the following claim.

\begin{myclaim}
  \label{cl:winningSetRegularMove}
  For $q,q'\in Q$ and $r\in S\cup\Sigma$, the following are
  equivalent:
  \begin{itemize}
  \item $(q,r,q')\in\Move$
  \item There exist $w\in L(r)$ and a Romeo strategy $\tau$ such that  $\delta^*(q,\alpha) = q'$ holds for 
    $\alpha=h(\Pi(\sigma_{\mathcal A_q},\tau,w))$.
  \end{itemize}
\end{myclaim}
We show both implications separately by induction.
\begin{description}
\item[``$\Longrightarrow$'':] The proof of the implication from the first to the second statement is by structural induction on elements of $\Move$.

If $r=a\in\Sigma$ and $q'=\delta(q,a) \notin F$, then the second statement holds for $w=a$ and an arbitrary Romeo strategy $\tau$, since $\sigma_{\mathcal A_q}$ plays $\Read$ on $a$.

If $r=a\in\Sigma$, $\delta(q,a)\in F$ and $(\delta(q,\call{a}),R_a,q')\in\Move$, then by the induction hypothesis there exist $w'\in L(R_a)$ and a Romeo strategy $\tau'$ such that $h(\Pi(\sigma_{\mathcal A_{\delta(q,\call{a})}},\tau',w'))=\beta$ with $\delta^*(\delta(q,\call{a})),\beta) = q'$.
The second statement holds for $w=a$ and any Romeo strategy $\tau$ with $\tau(\epsilon,a)=w'$ and $\tau(\call a\alpha,b)=\tau'(\alpha,b)$ for $\alpha\in\hist{\Sigma}^*$ and $b\in\Sigma_f$. That is, if the input word is $a$ and Juliet's first move is $\Call$, then $\tau$ replaces $a$ by $w'$ and continues to play like $\tau'$ plays on the input word $w'$.

If $r=\epsilon$ and $q=q'$, then the second statement holds for $w=\epsilon$ and any $\tau$.

If $r=r_1r_2$ and $(q,r_1,q''), (q'',r_2,q')\in\Move$, then by the induction hypothesis there exist words $w_1\in L(r_1)$, $w_2\in L(r_2)$ and Romeo strategies $\tau_1$ and $\tau_2$ for which it holds that  \mbox{$h(\Pi(\sigma_{\mathcal A_q},\tau_1,w_1))=\beta$} with $\delta^*(q,\beta) = q''$
and $h(\Pi(\sigma_{\mathcal A_{q''}},\tau_2,w_2))=\beta'$ with $\delta^*(q'',\beta') = q'$.
The second statement holds for $w=w_1w_2$ and the Romeo strategy that plays like $\tau_1$ on $w_1$ and like $\tau_2$ thereafter.
	
The remaining cases are similar.

\item[``$\Longleftarrow$'':] Let $w\in L(r)$ and a Romeo strategy $\tau$ be such that $\delta^*(q,\alpha) = q'$ for $\alpha=h(\Pi(\sigma_{\mathcal A_q},\tau,w))$. 
Let $x = \natural \alpha \in\Sigma^*$ be the final string of this play. For $a\in\Sigma_f$, let $n_a$ be the number of Calls of $a$ occurring during this play. We prove the implication by induction on $\abs{x}+\abs{r}+\sum_{a\in\Sigma_f}n_a\abs{R_a}$. We distinguish several cases and deduce the first statement in each case either directly or by applying the induction hypothesis.

\begin{itemize}
\item If $r=\epsilon$, then $w=\epsilon$ and therefore $q'=q$. Hence, $(q,r,q')\in\Move$ follows directly from the definition, since condition~\eqref{it:winningSetRegularMoveEps} is satisfied.
\item If $r=a\in\Sigma$ and $\delta(q,a)\notin F$, then $w=a$ and $q'=\delta(q,a)$. Again $(q,r,q')\in\Move$ follows directly, since condition~\eqref{it:winningSetRegularMoveRead} of the definition is satisfied.
\item If $r=a\in\Sigma$ and $\delta(q,a)\in F$, then $w=a$ and for $\alpha'=h(\Pi(\sigma_{\mathcal A_{\delta(q,\call{a})}},\tau^{\call a},\tau(\epsilon,a)))$ it holds that $\delta^*(\delta(q,\call{a}),\alpha') = q'$.
We can apply the induction hypothesis and get $({\delta(q,\call{a})},R_a,q')\in\Move$. Thus, $(q,r,q')\in\Move$ holds by condition~\eqref{it:winningSetRegularMoveCall} of the definition.
\item If $r=r_1^*$ then we can write $w=w_1w_2\dots w_n$ with $n\in\mathbb N_0$ and $w_j\in L(r_1)\setminus\{\epsilon\}$. If $n=0$, then $w=\epsilon$ and $q=q'$, so $(q,r,q')\in\Move$ by condition~\eqref{it:winningSetRegularMoveKleeneEps}. If $n\ge1$, let $\alpha_1\df h(\Pi(\sigma_{\mathcal A_q},\tau,w_1))$ and let $q''=\delta(q,\alpha_1)$.
Since $\abs{r_1^*}=\abs{r_1}+1>\abs{r_1}$, we can apply the induction hypothesis, which yields $(q,r_1,q'')\in\Move$. Observe that $h(\Pi(\sigma_{\mathcal A_{q''}},\tau^{\alpha_1},w_2w_3\dots w_n))=\alpha_2$ with $\delta^*(q'',\alpha_2) = q'$,
and its final string is the string $x'$ with $\natural\alpha_1 x'=x$. From $\abs{\natural\alpha_1}\ge \abs{w_1}>0$ it follows that $\abs{x}>\abs{x'}$, so another application of the induction hypothesis yields $(q'',r_1^*,q')\in\Move$. We get $(q,r,q')\in\Move$ from condition~\eqref{it:winningSetRegularMoveKleeneMain} of the definition.
\item The cases $r=r_1r_2$ and $r=r_1+r_2$ are similar to and slightly easier than the previous case.
\end{itemize}
\end{description}
This concludes the proof of Claim~\ref{cl:winningSetRegularMove}.

The basic idea for the detection of infinite plays is as follows: for a play to be infinite there must be a function symbol $a\in\Sigma_f$ and a state $q\in Q$ such that $a$ is called in $q$ and in the ``sub-play'' on the replacement word of $a$ there occurs another configuration in which $\calA$ has state $q$ and the current symbol is $a$. To detect the possibility of such a play, we introduce a relation $\Next\subseteq Q\times(S\cup\Sigma)\times Q\times\Sigma$. A tuple $(q,r,q',a)$ shall be in $\Next$ if and only if there exists a play of $\sigma_{\mathcal A_q}$ on some word from $L(r)$ that contains a configuration with $\mathcal A_q$-state $q'$ and current symbol $a$. 

To this end, we define inductively, for $(q,r,q',a)\in Q\times(S\cup\Sigma)\times Q\times\Sigma$, that $(q,r,q',a)\in\Next$ if any of the following conditions holds for some $b\in\Sigma$, $r_1,r_2\in S$ and $q''\in Q$:

\begin{enumerate}[(a)]%[resume]
\item $r=a$ and $q=q'$ \label{it:winningSetRegularNextInitial}
\item $r=b$, $\delta(q,b) \in F$ and $(\delta(q,\call{b}),R_b,q',a)\in\Next$ \label{it:winningSetRegularNextCall}
\item $r=r_1r_2$ and $(q,r_1,q',a)\in\Next$ 
\item $r=r_1r_2$, $(q,r_1,q'')\in\Move$ and $(q'',r_2,q',a)\in\Next$
\item $r=r_1+r_2$ and $(q,r_1,q',a)\in\Next$
\item $r=r_1+r_2$ and $(q,r_2,q',a)\in\Next$
\item $r=r_1^*$, $(q,r_1^*,q'')\in\Move$ and $(q'',r_1,q',a)\in\Next$ \label{it:winningSetRegularNextKleene}
\end{enumerate}

\begin{myclaim}
  \label{cl:winningSetRegularNext}
  For $q,q'\in Q$, $r\in S\cup\Sigma$ and $a\in\Sigma$, the following
  are equivalent:
  \begin{itemize}
  \item $(q,r,q',a)\in\Next$
  \item There exist $w\in L(r)$ and a Romeo strategy $\tau$ such that
    $\Pi(\sigma_{\mathcal A_q},\tau,w)$ contains a configuration with
    $\mathcal A_q$-state $q'$ and current symbol $a$.
  \end{itemize}
\end{myclaim}

\begin{description}
\item[``$\Longrightarrow$'':] The proof is by structural induction on elements of $\Next$ and uses the same techniques that were already in the proof of the implication ``$\Longrightarrow$'' of Claim~\ref{cl:winningSetRegularMove}. 
\item[``$\Longleftarrow$'':] Let $w\in L(r)$ and $\tau$ be a Romeo strategy such that $\Pi(\sigma_{\mathcal A_q},\tau,w)$ contains a configuration with $\mathcal A_q$-state $q'$ and current symbol $a$ and $w$ is of minimal length with these properties. Let $\alpha$ be the history string of the first such configuration. For $b\in\Sigma_f$ let $n_b$ be the number of occurrences of $\call b$ in $\alpha$, \ie the number of Calls of $b$ before this configuration is reached. The proof is by induction on $\abs{r}+\sum_{b\in\Sigma_f}n_b\abs{R_b}$. Note that $w\ne\epsilon$, because the only configuration in a play on $\epsilon$ is $(\epsilon,\epsilon)$, which does not have current symbol $a$. So in particular $r\ne\epsilon$. There are several cases.
\begin{itemize}
\item If $r=a$ and $q=q'$, then $(q,r,q',a)\in\Next$ by condition \eqref{it:winningSetRegularNextInitial} of the definition.
\item If $r=b\in\Sigma$, but $b\ne a$ or $q\ne q'$, then $w=b$ and the initial configuration $(\epsilon,b)$ does not have both $\mathcal A_q$-state $q'$ and current symbol $a$. It is necessary that $\delta(q,b)\in F$ because otherwise also the second and last configuration $(b,\epsilon)$ does not have current symbol $a$. Thus, $\alpha=\call b\beta$ for some $\beta\in\hist{\Sigma}^*$ and $\Pi(\sigma_{\mathcal A_{\delta(q,\call b)}},\tau^{\call b},\tau(\epsilon,b))$ must have a configuration with $\mathcal A_{\delta(q,\call b)}$-state $q'$ and current symbol $a$. By the induction hypothesis, $(\delta(q,\call b),R_b,q',a)\in\Next$. Hence, $(q,r,q',a)\in\Next$ by condition~\eqref{it:winningSetRegularNextCall} of the definition.
\item If $r=r_1^*$, then, since $w\ne\epsilon$ by the argument above, $w=w_1\dots w_n$ for some $n\ge1$ and $w_i\in L(r_1)\setminus\{\epsilon\}$. By minimality of the length of $w$, the play $\Pi(\sigma_{\mathcal A_q},\tau,w_1\dots w_{n-1})$ must be finite. Let $\beta$ be its history string, and let $q'' = \delta^*(q,\beta)$.
By Claim~\ref{cl:winningSetRegularMove}, $(q,r_1^*,q'')\in\Move$. 
Moreover, $\Pi(\sigma_{\mathcal A_{q''}},\tau^{\beta},w_n)$ contains a configuration with $\mathcal A_{q''}$-state $q'$ and current symbol $a$. By the induction hypothesis, $(q'',r_1,q',a)\in\Next$. It follows from condition~\eqref{it:winningSetRegularNextKleene} of the definition that $(q,r,q',a)\in\Next$.
\item The cases $r=r_1r_2$ and $r=r_1+r_2$ are similar.
\end{itemize}
This concludes the proof of Claim \ref{cl:winningSetRegularNext}.
\end{description}

Finally, we define the relation $\Inf\in Q\times(\Sigma\cup E)$ that shall contain a pair $(q,r)$ if and only if an infinite play of $\sigma_{\mathcal A_q}$ on some word from $L(r)$ exists. For $q\in Q$ and $r\in S\cup\Sigma$, we define inductively that $(q,r)\in\Inf$ if any of the following conditions holds for some $a\in\Sigma$, $r_1,r_2\in S$ and $q\in Q$:

\begin{enumerate}[(a)]%[resume]
\item $r=a$, $\delta(q,a)\in F$ and $(\delta(q,\call{a}),R_a,q,a)\in\Next$ \label{it:winningSetRegularInfBase}
\item $r=a$, $\delta(q,a)\in F$ and $(\delta(q,\call{a}),R_a)\in\Inf$ \label{it:winningSetRegularInfCall}
\item $r=r_1r_2$ and $(q,r_1)\in\Inf$
\item $r=r_1r_2$, $(q,r_1,q')\in\Move$ and $(q',r_2)\in\Inf$
\item $r=r_1+r_2$ and $(q,r_1)\in\Inf$
\item $r=r_1+r_2$ and $(q,r_2)\in\Inf$
\item $r=r_1^*$, $(q,r_1^*,q')\in\Move$ and $(q',r_1)\in\Inf$
\end{enumerate}

\begin{myclaim}
  \label{cl:winningSetRegularInf}
  For $q\in Q$ and $r\in S\cup\Sigma$, the following are equivalent:
  \begin{itemize}
  \item $(q,r)\in\Inf$
  \item There exist $w\in L(r)$ and a Romeo strategy $\tau$ such that
    $\Pi(\sigma_{\mathcal A_q},\tau,w)$ is infinite.
  \end{itemize}
\end{myclaim}
\begin{description}
\item[``$\Longrightarrow$'':] The proof is by structural induction on elements of $\Inf$.

We first consider the case that $r=a\in\Sigma$, $\delta(q,a)\in F$ and $(\delta(q,\call{a}),R_a,q,a)\in\Next$. The idea of a strategy for Romeo that achieves infinite play against $\sigma_{\mathcal A_q}$ on $a$ is to reach a configuration with $\mathcal A_q$-state $q$ and current symbol $a$ (like the initial configuration), and then to repeat the moves from the beginning so that such a configuration occurs over and over again.

Let $w\in L(R_a)$ and a Romeo strategy $\tau$ be such that the play $\Pi(\sigma_{\mathcal A_{\delta(q,\call{a})}},\tau,w)=(K_0,K_1,\dots)$ contains a configuration $K_n$ with $\mathcal A_{\delta(q,\call{a})}$-state $q$ and current symbol $a$. Such $w$ and $\tau$ exist by Claim~\ref{cl:winningSetRegularNext}. For $i=0,\dots,n$, let $\alpha_i\in\hist{\Sigma}^*$ and $v_i\in\Sigma^+$ be such that $K_i=(\alpha_i,v_i)$. Thus, $\alpha_0=\epsilon$, $v_0=w$, $\delta^*(q,\call a\alpha_n)=q$ and $v_n=av$ for some $v\in\Sigma^*$. Let $\tau'$ be any Romeo strategy with $\tau'((\call a\alpha_n)^k,a)=w=v_0$ and $\tau'((\call a\alpha_n)^k\call a\alpha_i,b)=\tau(\alpha_i,b)$ for each $k\in\mathbb N_0$, $b\in\Sigma_f$ and $i=0,\dots,n-1$. We claim that $\Pi(\sigma_{\mathcal A_q},\tau',a)$ is the infinite sequence
\begin{align*}
&((\call a\alpha_n)^0,av^0),((\call a\alpha_n)^0\call a\alpha_0,v_0v^0),((\call a\alpha_n)^0\call a\alpha_1,v_1v^0),\dots,((\call a\alpha_n)^0\call a\alpha_{n-1},v_{n-1}v^0),\\
&((\call a\alpha_n)^1,av^1),((\call a\alpha_n)^1\call a\alpha_0,v_0v^1),((\call a\alpha_n)^1\call a\alpha_1,v_1v^1),\dots,((\call a\alpha_n)^1\call a\alpha_{n-1},v_{n-1}v^1),\\
&((\call a\alpha_n)^2,av^2),\dots.
\end{align*}
The initial configuration is correct, since $((\call a\alpha_n)^0,av^0)=(\epsilon,a)$. By induction on $k\in\mathbb N_0$ it is clear that $\delta^*(q,(\call a\alpha_n)^k)=q$. Therefore,
\begin{align*}
\delta^*(q,(\call a\alpha_n)^ka)=\delta(\delta^*(q,(\call a\alpha_n)^k),a)=\delta(q,a)\in F,
\end{align*}
so $\sigma_{\mathcal A_q}$ plays Call in a configuration $((\call a\alpha_n)^k,av^k)$. The next configuration is therefore
\begin{align*}
((\call a\alpha_n)^k\call a,\tau'((\call a\alpha_n)^k,a)v^k)=((\call a\alpha_n)^k\call a,wv^k)=((\call a\alpha_n)^k\call a\alpha_0,v_0v^k),
\end{align*}
which conforms to the sequence above.

Consider now a configuration $K=((\call a\alpha_n)^k\call a\alpha_i,v_iv^k)$ for $k\in\mathbb N_0$ and $i\in\{0,\dots,n-1\}$. We can write $v_i=b_iv_i'$ where $b_i\in\Sigma$ and $v_i'\in\Sigma^*$. The strategy $\sigma_{\mathcal A_q}$ plays Call if it holds that $\delta^*(q,(\call a\alpha_n)^k\call a\alpha_ib_i)\in F$, \ie if $\delta^*(\delta(q,\call a),\alpha_ib_i)\in F$, which is equivalent to $\sigma_{\mathcal A_{\delta(q,\call a)}}$ playing Call in the configuration $K_i=(\alpha_i,b_iv_i')$. In that case, $\alpha_{i+1}=\alpha_i\call{b_i}$ and $v_{i+1}=\tau(\alpha_i,b_i)v_i'=\tau'((\call a\alpha_n)^k\call a\alpha_i,b_i)v_i'$. The configuration after $K$ is then $((\call a\alpha_n)^k\call a\alpha_{i+1},v_{i+1}v^k)$, as above. Note that this also conforms to the sequence for $i=n-1$ since $v_n=av$. The other case is that $\sigma_{\mathcal A_q}$ plays Read and in $K$ and $\sigma_{\mathcal A_{\delta(q,\call a)}}$ plays Read in $K_i$. This case is analogous. We have shown that if condition \eqref{it:winningSetRegularInfBase} from the definition of $\Inf$ holds, then $\Pi(\sigma_{\mathcal A_q},\tau',a)$ is the infinite sequence from above. This finishes the base case of the structural induction.

The inductive cases are considerably easier. If $r=a\in\Sigma$, $\delta(q,a) \in F$ and $(\delta(q,\call a),R_a)\in\Inf$, then by the induction hypothesis there exist $w\in L(R_a)$ and a Romeo strategy $\tau$ such that $\Pi(\sigma_{\mathcal A_{\delta(q,\call a)}},\tau,w)$ is infinite. Consider a Romeo strategy $\tau'$ with $\tau'(\epsilon,a)=w$ and $\tau'(\call a\alpha,b)=\tau(\alpha,b)$ for $\alpha\in\hist{\Sigma}^*$ and $b\in\Sigma_f$. The play $\Pi(\sigma_{\mathcal A_q},\tau',a)$ is clearly infinite. In the other inductive cases, the second statement follows similarly from the induction hypothesis.
	
\item[``$\Longleftarrow$'':]
Let $w_1\in L(r)$ and a Romeo strategy $\tau_1$ be such that $\Pi(\sigma_{\mathcal A_q},\tau_1,w_1)$ is infinite. The idea is to make use of the fact that there must exist a pair $(q_m,a_m)\in Q\times\Sigma$ such that $\Pi(\sigma_{\mathcal A_q},\tau_1,w_1)$ contains infinitely many configurations with $\mathcal A_q$-state $q_m$ and current symbol $a_m$.

Let $u_1$ be the longest prefix of $w_1$ such that $\Pi(\sigma_{\mathcal A_q},\tau_1,u_1)$ is finite. Define $\alpha_1\in\hist{\Sigma}^*$ by $\alpha_1 \df h(\Pi(\sigma_{\mathcal A_q},\tau_1,u_1))$ and let $a_1\in\Sigma$ and $v_1\in\Sigma^*$ be such that $w_1=u_1a_1v_1$. This means that $\Pi(\sigma_{\mathcal A_q},\tau_1,w_1)$ contains the configuration $(\alpha_1,a_1v_1)$ with $\mathcal A_q$-state $q_1=\delta^*(q,\alpha_1)$ and, by maximality of $u_1$, the play $\Pi(\sigma_{\mathcal A_{q_1}},\tau_1^{\alpha_1},a_1)$ is infinite. Thus, $\delta(q_1,a_1) \in F$ and, letting $\tau_2=\tau_1^{\alpha_1\call{a_1}}$ and $w_{2}=\tau_1(\alpha_1,a_1)$, the play $\Pi(\sigma_{\mathcal A_{\delta(q_1,\call{a_1})}},\tau_2,w_{2})$ must also be infinite. By repeating the argument we obtain infinite sequences $(w_i),(u_i),(v_i)\subseteq\Sigma^*$, $(a_i)\subseteq\Sigma$, $(\alpha_i)\subseteq\hist{\Sigma}^*$, $(q_i)\subseteq Q$ and $(\tau_i)$ of Romeo strategies such that
\begin{itemize}
\item $w_{i+1}=\tau_i(\alpha_i,a_i)=u_{i+1}a_{i+1}v_{i+1}$,
\item $\tau_{i+1}=\tau_i^{\alpha_i\call{a_i}}$,
\item $h(\Pi(\sigma_{\mathcal A_{\delta(q_i,\call{a_i})}},\tau_{i+1},u_{i+1}))=\alpha_{i+1}$,
\item $q_{i+1}=\delta^*(q_i,\call{a_{i}}\alpha_{i+1})$,
\item $\Pi(\sigma_{\mathcal A_{\delta(q_i,\call{a_i})}},\tau_{i+1},w_{i+1})$ is infinite and contains, for each $j>i$, a configuration with $\mathcal A_{\delta(q_i,\call{a_i})}$-state $q_j$ and current symbol $a_j$,
\item $\delta(q_i,a_i) \in F$
\end{itemize}
for each $i\in\mathbb N$.

Since $Q$ and $\Sigma$ are finite but the sequences are infinite, there must exist $m<n$ with $(q_m,a_m)=(q_n,a_n)$. We choose these such that $(n,m)$ is lexicographically minimal and prove the implication by induction on $\abs{r}+\sum_{i=1}^{n-1}\abs{R_{a_i}}$. It is clear that $r\ne\epsilon$ since a play on $\epsilon$ is never infinite. There are several possible cases.

If $r=a\in\Sigma$, then $w_1=a=a_1$, $q_1=q$, $\delta(q,a) \in F$ and $\Pi(\sigma_{\mathcal A_{\delta(q,\call{a})}},\tau_2,w_2)$ is infinite and contains a configuration with $\mathcal A_{\delta(q,\call{a})}$-state $q_n$ and current symbol $a_n$. There are two subcases.
\begin{itemize}
\item If $m=1$ then $q_n=q$ and $a_n=a$. By Claim~\ref{cl:winningSetRegularNext} this means that $(\delta(q,\call{a}),R_{a},q,a)\in\Next$. Thus, condition~\eqref{it:winningSetRegularInfBase} of the definition of $\Inf$ is satisfied and $(q,r)\in\Inf$.
\item If $m>1$ then the induction hypothesis can be applied because the sequences $(a_i)$ and $(q_i)$ for the play $\Pi(\sigma_{\mathcal A_{\delta(q,\call{a})}},\tau_2,w_2)$ are $(a_2,a_3,\dots)$ and $(q_2,q_3,\dots)$. Thus, it holds that $(\delta(q,\call{a}),R_{a})\in\Inf$ and condition~\eqref{it:winningSetRegularInfCall} of the definition yields $(q,r)\in\Inf$.
\end{itemize}

If $r=r_1r_2$, $r=r_1+r_2$ or $r=r_1^*$, then $(q,r)\in\Inf$ follows similarly from the induction hypothesis.
\end{description}
This completes the proof of Claim~\ref{cl:winningSetRegularInf}.

From Claims~\ref{cl:winningSetRegularMove} and \ref{cl:winningSetRegularInf}  it follows that the desired relations can be obtained by specialising the relations $\Move$ and $\Inf$ to single symbol expressions. 
It is not hard to see that the relations $\Move$, $\Next$ and $\Inf$ can be constructed in polynomial time. The inductive definition of these relations can be implemented na\"{\i}vely with several nested loops. The outermost loop terminates when no new tuple was added during its last iteration. Since $Q$, $\Sigma$ and $S$ are all of polynomial size, the entire computation can be executed in polynomial time.
\end{proof}

Now we are prepared to start the proof of Theorem \ref{thm:complexity}. Its upper bounds mostly rely on the following insight on the winning set of regular strategies, which was already stated similarly for forgetful regular strategies in \cite[Proposition 3.1]{AbiteboulMB05}. We prove it here for the sake of completeness and extend the proof to include arbitrary regular strategies and take into account infinite plays.
}% \full

\short{
The upper bounds mostly rely on the following insight on the winning set of regular strategies, which was already stated similarly for forgetful regular strategies in \cite[Proposition 3.1]{AbiteboulMB05}. We extend the proof to include arbitrary regular strategies and take into account non-terminating plays.
}% \short

\begin{proposition}\label{prop:winningSetRegular}
  \begin{enumerate}[(a)]%[label=(\alph*)]
  \item For a game $G=(\Sigma,R,T)$ and a strategy automaton
    $\mathcal A$, the set $W(\sigma_{\mathcal A})$ is regular.
  \item  
  An NFA $\mathcal N$ recognising $\Sigma^*\setminus W(\sigma_{\mathcal A})$ can be
    constructed from $G$ and $\mathcal{A}$ in polynomial time.
  \end{enumerate}
\end{proposition}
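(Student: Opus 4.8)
The plan is to reduce everything to the two relations furnished by Proposition~\ref{prop:strategyAutomatonTests}, but the first thing to notice is a gap: those relations only track how the \emph{$\mathcal A$-state} evolves across a single-symbol subgame, whereas membership in $W(\sigma_{\mathcal A})$ is decided by the \emph{$T$-state} reached at the end of a play. To bridge this, I would run $T$ and $\mathcal A$ in parallel. Define the product strategy automaton $\widehat{\mathcal A}$ with state set $Q\times Q_{\mathcal A}$, initial state $(s,s_{\mathcal A})$, Call-states $Q\times F_{\mathcal A}$, and transition function that on $b\in\Sigma$ updates both components, $(p,q)\mapsto(\delta(p,b),\delta_{\mathcal A}(q,b))$, and on a marker $\call{a}\in\call{\Sigma_f}$ leaves the $T$-component fixed while updating the $\mathcal A$-component, $(p,q)\mapsto(p,\delta_{\mathcal A}(q,\call{a}))$. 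Since the Call-states of $\widehat{\mathcal A}$ depend only on the $\mathcal A$-component, we have $\sigma_{\widehat{\mathcal A}}=\sigma_{\mathcal A}$, and after any history string $\alpha$ the $\widehat{\mathcal A}$-state equals $(\delta^*(s,\natural\alpha),\,\delta_{\mathcal A}^*(s_{\mathcal A},\alpha))$, so it simultaneously records the current $T$-state and the current $\mathcal A$-state.

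Next I would recall that $w\notin W(\sigma_{\mathcal A})$ holds exactly when Romeo has a strategy $\tau$ for which $\Pi(\sigma_{\mathcal A},\tau,w)$ is \emph{not} winning, i.e.\ either finite with a non-accepting final $T$-state, or infinite. I would then apply Proposition~\ref{prop:strategyAutomatonTests} to $\widehat{\mathcal A}$ to obtain, in polynomial time, the relation $M$ of triples $\bigl((p,q),a,(p',q')\bigr)$ for which Romeo can drive a single-symbol subgame on $a$ from combined state $(p,q)$ to combined state $(p',q')$, and the relation $I$ of pairs $\bigl((p,q),a\bigr)$ from which Romeo can force that subgame to be infinite. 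The structural fact that makes these single-symbol relations composable is that, via substrategies, Juliet's behaviour in the subgame on $a_i$ depends on the history only through the current $\widehat{\mathcal A}$-state; hence a play on $w=a_1\cdots a_n$ is precisely a sequence of independently chosen subgames chained through their combined states.

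With this in hand I would build the NFA $\mathcal N$ over $\Sigma$ with states $(Q\times Q_{\mathcal A})\cup\{\top\}$, initial state $(s,s_{\mathcal A})$, and transitions: $(p,q)\xrightarrow{a}(p',q')$ for every $\bigl((p,q),a,(p',q')\bigr)\in M$; $(p,q)\xrightarrow{a}\top$ for every $\bigl((p,q),a\bigr)\in I$; and a self-loop $\top\xrightarrow{a}\top$ for every $a\in\Sigma$. The accepting states are $\top$ together with all $(p,q)$ with $p\notin F$. A nondeterministic run then encodes a choice of Romeo play: ending in an accepting $(p,q)$ with $p\notin F$ after scanning all of $w$ witnesses a finite losing play, while entering $\top$ on some symbol witnesses a subgame Romeo makes infinite (the self-loop merely lets $\mathcal N$ finish reading $w$). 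Conversely every non-winning play yields such a run, so $L(\mathcal N)=\Sigma^*\setminus W(\sigma_{\mathcal A})$, proving part (b); since $\widehat{\mathcal A}$, $M$, $I$, and hence $\mathcal N$ all have polynomial size and are computable in polynomial time, the construction is polynomial. Part (a) is then immediate, as $W(\sigma_{\mathcal A})=\Sigma^*\setminus L(\mathcal N)$ is the complement of a regular language.

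The main obstacle is conceptual rather than computational: Proposition~\ref{prop:strategyAutomatonTests} speaks only about the strategy automaton, so one must observe that the winning condition lives on $T$ and weave $T$ into the state space through the product $\widehat{\mathcal A}$. After that, the only points needing genuine care are the legitimacy of chaining the single-symbol relations (justified by the substrategy/current-state property) and the faithful modelling of infinite plays through the accepting sink $\top$; everything else is routine verification.
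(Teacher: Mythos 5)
Your proposal is correct and follows essentially the same route as the paper: the paper likewise forms the product of $\mathcal A$ with $T$ (so that the strategy automaton's state also tracks the current $T$-state, with the $T$-component frozen on symbols from $\call{\Sigma_f}$), applies the $\Move$ and $\Inf$ relations of Proposition~\ref{prop:strategyAutomatonTests} to this product, and builds the NFA with $\Move$-transitions, $\Inf$-transitions into an accepting sink state $q_\infty$ with self-loops, and accepting states given by the sink together with all product states whose $T$-component is non-accepting. Your observation that $\sigma_{\widehat{\mathcal A}}=\sigma_{\mathcal A}$ (not merely equality of winning sets) and your cleaner formulation of the $\Inf$-based transitions are fine refinements, but the argument is the same.
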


\short{
The lower bounds in Theorem~\ref{thm:complexity} are shown by reductions from \textsc{3SAT} and the universality problem for NFAs. Complete proofs are given in the full version of the paper.
}

\full{
\begin{proof}
Clearly (a) follows from (b) and therefore it suffices to show (b).

The idea behind the construction of the NFA $\mathcal{N}$ recognising $\Sigma^*\setminus W(\sigma_{\mathcal A})$ is basically that the nondeterminism in $\mathcal{N}$ will be used to guess a counter-strategy (consisting of choosing replacement words with associated sub-strategies) to the fixed strategy $\sigma_{\mathcal{A}}$. The groundwork for this was already laid in the form of the $\Move$ relation from the proof of Proposition \ref{prop:strategyAutomatonTests}. However, the target language $L(T)$ and the relation $\Inf$ from the proof of Proposition \ref{prop:strategyAutomatonTests} need to be taken into account, which requires some additional constructions.

First, we consider the product automaton $\mathcal{A}'$ of $\mathcal{A}$ with $T$. Let $\mathcal{A} = (Q_{\mathcal{A}}, \hist{\Sigma}, \delta_{\mathcal{A}}, s_{\mathcal{A}},F_{\mathcal{A}})$ and $T = (Q_T, \Sigma, \delta_T, s_T, F_T)$. We define $\mathcal{A}' = (Q_{\mathcal{A}'},\hist{\Sigma},\delta_{\mathcal{A}'},s_{\mathcal{A}'},F_{\mathcal{A}'})$, where $Q_{\mathcal{A}'}=Q_{\mathcal{A}} \times Q_T$ and $s_{\mathcal{A}'} = (s_{\mathcal{A}}, s_T)$ as per the standard product construction. Since $\mathcal{A}'$ will take the place of a strategy automaton in the construction from the proof of Proposition \ref{prop:strategyAutomatonTests}, we set its accepting states to $F_{\mathcal{A}'} = F_\mathcal{A} \times Q_T$ and define its transition function $\delta_{\mathcal{A}'}$ as follows for symbols $a \in \Sigma$, $\call{a} \in \call{\Sigma}$:
\begin{align*}
\delta_{\mathcal A'}((p,q),a)&= (\delta_{\mathcal A}(p,a),\delta_T(q,a))\text{, and}
\\
\delta_{\mathcal A'}((p,q),\call a)&=(\delta_{\mathcal A}(p,\call a),q).
\end{align*}

It is easy to verify that $\mathcal{A}'$ defines a strategy with the same winning set as $\sigma_{\mathcal{A}}$, and that, on an input $\alpha \in \hist{\Sigma}$, its second component simulates the DFA $T$ on $\natural \alpha$. Clearly, $\mathcal{A}'$ can also be computed from $G$ and $\mathcal{A}$ in polynomial time. Since $\mathcal{A}'$ is a strategy automaton of polynomial size, it is also possible to compute its $\Move$ and $\Inf$ relations from the proof of Proposition \ref{prop:strategyAutomatonTests} in polynomial time.

We now define the NFA $\mathcal N$ for $\Sigma^*\setminus W(\sigma_{\mathcal A})$ by $\mathcal N=(Q_{\mathcal{A}'}\cup\{q_\infty\},\Sigma,\delta_{\mathcal N},s_{\mathcal{A}'},F_{\mathcal{N}})$, where $F_{\mathcal{N}} = (Q_\mathcal{A} \times (Q_T \setminus F_T)) \cup\{q_\infty\}$ and the transition relation $\delta_{\mathcal N}$ is defined as the union of the sets 
$\delta_{\mathcal N} = \{((p,q),a,(p',q'))\in\Move\mid a\in\Sigma\}$, $\{((p,q),a,q_\infty) \mid (p,a)\in\Inf, a\in\Sigma\}$ and $\{(q_\infty,a,q_\infty)\mid a\in \Sigma\}$. We note that, whenever the $\mathcal A$-component reaches a situation in which Romeo can enforce an infinite play, $\mathcal N$ can enter the accepting state $q_\infty$ and remain there.
Clearly, $\mathcal N$ can be constructed in polynomial time. It thus only remains to show that $L(\mathcal N)=\Sigma^*\setminus W(\sigma_{\mathcal A})$. We prove both inclusions separately.

\begin{description}%[listparindent=\parindent]
\item[``$\subseteq$'':] Suppose $w=a_1\dots a_n\in L(\mathcal N)$ with $a_i\in\Sigma$. Let us assume first that there is an accepting run of $\calN$ on $w$ which ends in a state of $Q_\mathcal{A} \times (Q_T \setminus F_T)$. Then there exists a sequence $(p_0,q_0),\dots,(p_n,q_n)\in Q_{\mathcal{A}'}$ such that $(p_0,q_0)=(s_{\mathcal{A}},s_T)$, $q_n\in Q_T\setminus F_T$, and $((p_{i-1},q_{i-1}),a_i,(p_i,q_i))\in\delta_{\mathcal N}$ for each $i=1,\dots,n$. By construction, it holds for each $i=1,\dots,n$ that $((p_{i-1},q_{i-1}),a_i,(p_i,q_i))\in\Move$, and by Claim~\ref{cl:winningSetRegularMove} as well as the above considerations, there exist Romeo strategies $\tau_1,\dots,\tau_n$ such that $\Pi(\sigma_{\mathcal A_{p_{i-1}}},\tau_i,a_i)$ terminates with a final history string $\alpha_i$ such that $\delta^*_{\mathcal{A}}(p_{i-1},\alpha_i) = p_i$ and $\delta_T^*(q_{i-1}, \natural \alpha_i) = q_i$
for each $i=1,\dots,n$. Then the play of $\sigma_{\mathcal A}$ against the Romeo strategy that plays like $\tau_i$ on the $i$th symbol of the input word $w$ yields a final history string $\alpha = \alpha_1 \cdots \alpha_n$ with $\delta_T^*(s_T, \natural \alpha) = q_n \in Q_T \setminus F_T$.

In the other case there is an accepting run that ends in $q_\infty$. Let $(p,q)$ be the last state of $\calN$ in this run before $q_\infty$ is entered by reading a symbol $a$. By construction of $\calN$, $(p,a)\in\Inf$. Similarly as in the first case there exists a strategy of Romeo leading into a configuration in which  $\calA$ has state $p$ with $a$ as the next symbol. Romeo can enforce an infinite play from here and therefore $w\not\in W(\sigma_{\mathcal A})$.

\item[``$\supseteq$'':] Suppose $w\in \Sigma^*\setminus W(\sigma_{\mathcal A})$ for some $w=a_1\dots a_n$ with $a_i\in\Sigma$. Then there exists a Romeo strategy $\tau$ such that $\Pi(\sigma_{\mathcal A},\tau,w)$ is infinite or losing for Juliet. We first consider the case that it is losing for Juliet. For $i=1,\dots,n$, let $(\alpha_i,a_{i+1}\dots a_n)$ be the configuration at the time when $a_1\dots a_i$ is completely processed, let $p_i = \delta_{\mathcal{A}}^*(s_{\mathcal{A}},\alpha_i)$ be its $\mathcal A$-state and $q_i = \delta_T^*(s_T,\natural \alpha_i)$ its $T$-state, and let $\alpha_0=\epsilon$ (the history string of the initial configuration). Then, for each $i=0,\dots,n-1$, $h(\Pi(\sigma_{\mathcal A_{p_{i}}},\tau^{\alpha_{i}},a_{i+1}))=\beta_i$ such that $\alpha_{i+1} = \alpha_i \beta_i$, and therefore $\delta_{\mathcal{A}}(p_i, \beta_i) = p_{i+1}$ and $\delta_T(q_i, \natural \beta_i) = q_{i+1}$. Thus, by Claim~\ref{cl:winningSetRegularMove} and the above considerations, $((p_i,q_i),a_{i+1},(p_{i+1},q_{i+1}))\in\Move$ for each $i=0,\dots,n-1$, so there exists a run of $\mathcal N$ on $w$ that terminates in $(p_n,q_n)$. Since $\Pi(\sigma_{\mathcal A},\tau,w)$ is losing for Juliet, it holds that $q_n = \delta_T^*(s_T,\natural \alpha_n)\in Q_T\setminus F$, so $w\in L(\mathcal N)$.

In the other case, when $\Pi(\sigma_{\mathcal A},\tau,w)$ is infinite, let $m$ be maximal such that the play on $a_1\dots a_m$ is finite. Let the states $p_i$ and $q_i$ be defined as before, for $i=1,\ldots,m$. Since the play on the next symbol is infinite, $(p_m,a_{m+1})\in\Inf$, and therefore $\calN$ has a run that reaches $(p_m,q_m)$ and then enters $q_\infty$ and remains there, so $w\in L(\mathcal N)$. 
\end{description}
\vspace{-\baselineskip}
\end{proof}

Now we are ready to prove the upper bounds of Theorem \ref{thm:complexity}. 

\begin{proposition}\label{prop:complexity-upper}
  \begin{enumerate}[(a)]%[label=(\alph*)]
  \item $\textsc{IsWinning} \in \P$. 
  \item $\textsc{ExistsWinning($\tFive$)}\in\NP$.
  \item $\textsc{IsDominated} \in \PSPACE$.
  \end{enumerate}
\end{proposition}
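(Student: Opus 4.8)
The plan is to reduce all three parts to the NFA $\mathcal N$ recognising $\Sigma^*\setminus W(\sigma_{\mathcal A})$ that Proposition~\ref{prop:winningSetRegular}~(b) constructs from a game $G$ and a strategy automaton $\mathcal A$ in polynomial time. The key fact to use throughout is that, for every word $w$, we have $w\in W(\sigma_{\mathcal A})$ if and only if $w\notin L(\mathcal N)$; this already accounts for non-terminating plays, since $\mathcal N$ was built to accept precisely those words on which Romeo can force a losing \emph{or} an infinite play. In this way the genuinely hard work has been pushed into Propositions~\ref{prop:strategyAutomatonTests} and~\ref{prop:winningSetRegular}, and the three complexity bounds become short arguments on top of $\mathcal N$.

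For part~(a), given $G$, $w$ and $\mathcal A$, I would first invoke Proposition~\ref{prop:winningSetRegular}~(b) to build $\mathcal N$ in polynomial time, and then test whether $w\in L(\mathcal N)$ by the usual on-the-fly simulation of an NFA, tracking the set of states reachable on each prefix of $w$; this runs in polynomial time. Since $w\in W(\sigma_{\mathcal A})$ holds exactly when this membership test fails, we obtain $\textsc{IsWinning}\in\P$.

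For part~(b), the point is that a strongly regular strategy has a polynomial-size description: its automaton $\mathcal A=(Q\cup\{\Call\},\Sigma,\delta_{\mathcal A},s,\{\Call\})$ is completely determined once one fixes, for each state $q\in Q$ and symbol $a\in\Sigma$, whether $\delta_{\mathcal A}(q,a)=\delta(q,a)$ (Read) or $\delta_{\mathcal A}(q,a)=\Call$. There are only $|Q|\cdot|\Sigma|$ such binary choices, so a nondeterministic machine can guess a strategy $\sigma\in\SReg$ using polynomially many bits and then verify $w\in W(\sigma)$ in polynomial time via part~(a). Hence $\textsc{ExistsWinning}(\SReg)\in\NP$.

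For part~(c), I would apply Proposition~\ref{prop:winningSetRegular}~(b) twice to obtain, in polynomial time, NFAs $\mathcal N_1$ and $\mathcal N_2$ for $\Sigma^*\setminus W(\sigma_{\mathcal A_1})$ and $\Sigma^*\setminus W(\sigma_{\mathcal A_2})$. Complementing turns the question into an NFA inclusion test: $W(\sigma_{\mathcal A_1})\subseteq W(\sigma_{\mathcal A_2})$ holds if and only if $L(\mathcal N_2)\subseteq L(\mathcal N_1)$, since a witness $w$ for non-domination is precisely a word with $w\notin L(\mathcal N_1)$ (so $w\in W(\sigma_{\mathcal A_1})$) and $w\in L(\mathcal N_2)$ (so $w\notin W(\sigma_{\mathcal A_2})$). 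The main obstacle, and the only non-routine step, is that NFA inclusion is not known to be decidable in polynomial time; instead I would decide its complement, non-inclusion, in nondeterministic polynomial space by searching for such a $w$ symbol by symbol, maintaining on the fly a single guessed run of $\mathcal N_2$ together with the full subset of states of $\mathcal N_1$ reachable on the current prefix (the determinisation of $\mathcal N_1$), and declaring non-inclusion once the $\mathcal N_2$-run is accepting while the $\mathcal N_1$-subset contains no accepting state. This uses only polynomial space; since nondeterministic and deterministic polynomial space coincide by Savitch's theorem and $\PSPACE$ is closed under complementation, inclusion—and therefore $\textsc{IsDominated}$—lies in $\PSPACE$.
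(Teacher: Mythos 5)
Your proposal is correct and follows essentially the same route as the paper: all three parts rest on the polynomial-time NFA construction of Proposition~\ref{prop:winningSetRegular}~(b), with (a) decided by subset simulation, (b) by guessing the polynomially many Read/\Call choices of a strongly regular automaton and verifying via (a), and (c) by reducing domination to the NFA containment $L(\mathcal N_2)\subseteq L(\mathcal N_1)$. The only cosmetic difference is in (c), where the paper simply cites the known $\PSPACE$ algorithm for NFA containment \cite{HuntRS76} while you inline its standard proof (nondeterministic subset-tracking search for a counterexample word, plus Savitch and closure under complement).
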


\begin{proof}
  \begin{enumerate}[(a)]%[label=(\alph*)]
  \item 
 	By Proposition~\ref{prop:winningSetRegular}, an NFA $\mathcal N$ recognizing $\Sigma^*\setminus W(\sigma_{\mathcal A})$ can be constructed in polynomial time from $\mathcal A$ and $G$. Then, the set of states of $\mathcal N$ that can be reached at the end of a run on $w$ can be computed in polynomial time by simulating the power set automaton. We have $w\in W(\sigma_{\mathcal A})$ if and only if none of these states is accepting.
      \item \textsc{ExistsWinning($\tFive$)} can be solved non-deterministically in polynomial time as follows. Let $G=(\Sigma,R,T)$ and $w\in\Sigma^*$ be the input.
      First, guess a strongly regular strategy automaton $\mathcal A$ by adding a state $\Call$ to $T$ and nondeterministically guessing which transitions of $T$ to reroute to $\Call$. Second, call the algorithm for \textsc{IsWinning} from (a) with input $G$, $\mathcal A$ and $w$. 
      \item Let $G$ be a game and let $\mathcal A_1$ and $\mathcal A_2$ be strategy automata for $G$. By Proposition~\ref{prop:winningSetRegular}, NFAs $\mathcal B_1$ and $\mathcal B_2$ recognising $\Sigma^*\setminus W(\sigma_{\mathcal A_1})$ and $\Sigma^*\setminus W(\sigma_{\mathcal A_2})$, respectively, can be constructed in polynomial time.
Clearly, $W(\sigma_{\mathcal A_1})\subseteq W(\sigma_{\mathcal A_2})$ if and only if 
$L(\mathcal B_2)\subseteq L(\mathcal B_1)$. Therefore, the relative domination can be decided with the help of the polynomial space algorithm that tests containment for NFAs  \cite{HuntRS76}.
 \end{enumerate}
\end{proof}

We now turn to the corresponding lower bound results for Theorem~\ref{thm:complexity}.

\begin{proposition}\label{prop:existswinning-lower}
	The problem \textsc{ExistsWinning($\tFive$)} is $\NP$-hard, even for non-recursive games whose replacement languages are finite and prefix-free.
\end{proposition}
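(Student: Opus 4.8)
The plan is to reduce from \textsc{3SAT}; since membership in $\NP$ is established in Proposition~\ref{prop:complexity-upper}(b), only hardness remains. The guiding idea is that a strongly regular strategy is nothing but a choice, for every pair $(q,a)$ of a $T$-state and a symbol, of $\Read$ (i.e.\ following $\delta(q,a)$) or $\Call$. I will use one distinguished query symbol $x$ together with states $q_1,\dots,q_n$ so that the single decision at $(q_i,x)$ encodes the truth value of variable $x_i$: $\Call$ shall mean ``true'' and $\Read$ ``false''. Romeo's only genuine freedom will be to pick a clause to challenge, and Juliet answers a challenge by selecting, through further \emph{clause-local} decisions, a literal of that clause, which is then verified against the encoded assignment. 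Because a winning strategy must win against \emph{every} Romeo move, the encoded assignment is forced to satisfy all clauses simultaneously.

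Concretely, given $\phi=\bigwedge_{j=1}^m C_j$ with $C_j=\ell_{j,1}\vee\ell_{j,2}\vee\ell_{j,3}$, I take the input word $w=c$ for a start symbol $c$ with $L_c=\{w_1,\dots,w_m\}$, where $w_j=p_{j,1}p_{j,2}p_{j,3}$ is built from clause-local selector symbols. Reading $c$ leads to an absorbing non-accepting state, so a winning strategy must $\Call$ $c$, letting Romeo choose the clause $C_j$ by replacing $c$ with $w_j$; this is processed from the initial state $s$. In $w_j$, playing $\Read$ on a selector $p_{j,k}$ keeps $T$ at $s$ (passing literal $k$), while playing $\Call$ on $p_{j,k}$ --- with the singleton replacement $L_{p_{j,k}}=\{m_{j,k}\,x\,\mathrm{fin}_{j,k}\}$ --- commits to $\ell_{j,k}$. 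Here $m_{j,k}$ is a terminal symbol driving $T$ from $s$ to $q_i$, where $x_i$ is the variable of $\ell_{j,k}$; then $x$ is the query; and $\mathrm{fin}_{j,k}\in\{\mathrm{fin}^+,\mathrm{fin}^-\}$ is a finisher whose polarity matches the sign of $\ell_{j,k}$.

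The query is resolved through the shared decision point $(q_i,x)$: $\Read$ sends $T$ to a state $R$, while $\Call$ --- using $L_x=\{y\}$ with $y$ terminal and $\delta(q_i,y)=C$ --- sends it to a state $C$. Thus reaching $C$ witnesses $x_i$ true and reaching $R$ witnesses $x_i$ false. The finishers translate this into the correct verdict per polarity via $\delta(C,\mathrm{fin}^+)=\delta(R,\mathrm{fin}^-)=\mathrm{acc}$ and $\delta(R,\mathrm{fin}^+)=\delta(C,\mathrm{fin}^-)=\bot$, where $\mathrm{acc}$ is an absorbing accepting state and $\bot$ an absorbing rejecting one (so that the selectors trailing the committed literal, as well as any decisions on them, are harmless). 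Hence the sub-play for the challenged clause $C_j$ is won exactly when the committed literal is true under the assignment given by the decisions at the points $(q_i,x)$; passing all three selectors ends in $s\notin F$ and loses. Consequently $c\in W(\sigma)$ iff for every clause Juliet has committed to a true literal, i.e.\ iff her assignment satisfies $\phi$, giving both directions of the reduction.

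Finally, $G$ is non-recursive (the derivation order is $c\succ p_{j,k}\succ x\succ y$, with $m_{j,k},\mathrm{fin}^\pm,y$ terminal), every replacement language is finite, and each is prefix-free --- the $w_j$ begin with the pairwise distinct symbols $p_{j,1}$, and all remaining replacement languages are singletons --- and $G$ and $w$ are clearly polynomial-time computable. The point requiring care, and the main obstacle, is encoding a clause's disjunction so that \emph{Juliet}, not Romeo, selects the witnessing literal while that literal is nonetheless checked against the one global assignment, all within the strongly regular format where a decision may depend only on the current $T$-state and symbol. This is exactly what the single shared query point $(q_i,x)$ together with the two polarity finishers achieves: every occurrence of $x_i$, positive or negative and in any clause, is forced through the same binary decision, so Juliet cannot assign $x_i$ inconsistently, while Romeo's free choice of clause forces all clauses to hold at once.
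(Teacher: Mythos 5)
Your overall architecture is sound and in places cleaner than the paper's: Romeo selects the clause through the replacement language of the start symbol, Juliet commits to a literal through clause-local selectors, and consistency of the assignment is enforced by routing every occurrence of $x_i$ through the single decision point $(q_i,x)$. In particular, because you encode $\theta(x_i)$ directly in the one binary \Read/\Call decision at $(q_i,x)$, you avoid the validity gadget the paper needs (the paper encodes the assignment by which of two transitions is rerouted to \Call, and must spend the whole first half $w_1=0CDa_2 0CD\cdots a_n 0CD$ of its input word forcing exactly one of the two to be rerouted). However, there is a genuine gap: by the paper's definition a game must have a \emph{minimal} DFA $T$, and strongly regular strategies are defined relative to that minimal automaton. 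In your construction the states $q_1,\dots,q_n$ are pairwise equivalent --- each has exactly the transitions $\delta(q_i,x)=R$ and $\delta(q_i,y)=C$, with every other symbol leading into a sink --- so minimisation merges them into a single state $q$. After this collapse there is only one decision point $(q,x)$, so all variables are forced to receive the same truth value, and the reduction fails: for instance, $(x_1\vee x_1\vee x_1)\wedge(\neg x_2\vee\neg x_2\vee\neg x_2)$ is satisfiable, yet no strongly regular strategy wins on $w=c$ in your game, since Romeo's two clause choices require opposite decisions at $(q,x)$.

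The gap is repairable, but it is precisely the point where the paper takes care and you do not: both of the paper's hardness proofs explicitly verify that $T$ is minimal, and the $\PSPACE$-hardness proof even introduces the symbols $(\$,p)$ for no purpose other than making states pairwise inequivalent. You would need to do the same, e.g.\ add fresh non-function symbols $z_1,\dots,z_n$ with $\delta(q_i,z_i)=\mathrm{acc}$ and $\delta(q_j,z_i)=\bot$ for $j\neq i$; since these symbols are terminal and occur in no input or replacement word, they alter no play, but they keep the $n$ decision points $(q_i,x)$ distinct in the minimal automaton. You should then also confirm that no other states accidentally coincide or need to be identified (your absorbing non-accepting state reached on reading $c$ can simply be taken to be $\bot$, and $R$, $C$ are already separated by $\mathrm{fin}^+$). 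With that repair, your reduction is correct and constitutes a leaner alternative to the paper's construction.
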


\begin{proof}

We give a reduction from the $\NP$-complete problem \textsc{3SAT} of determining whether a given Boolean formula in CNF with three literals per clause is satisfiable. To this end, let $\varphi$ be an arbitrary formula in 3CNF over variables $x_1, \ldots, x_n$ with clauses $K_1, \ldots, K_m$. We describe how to construct from $\varphi$ a game $G = (\Sigma,R,T)$ and string $w \in \Sigma^*$ in polynomial time, such that Juliet has a strongly regular winning strategy in $G$ on $w$ if and only if $\varphi$ is satisfiable.

The basic idea behind this reduction is to encode variable assignments into strategy automata of a strongly regular strategy. To this end,  the input string $w$ is of the form $w_1w_2$, where the sub-game on $w_1$ is meant to choose a variable assignment and the sub-game on $w_2$ to verify that it is satisfying. We consider only certain \emph{valid} strategies $\sigma$ for Juliet, each of which will induce a variable assignment $\theta_\sigma$ for $\varphi$.

The string $w_1$ is chosen as $w_1 = 0CDa_{2}0CDa_{3}0CD\cdots a_{n}0CD$.

The DFA $T = (Q, \Sigma, \delta,s,F)$ consists of sub-automata $T_1, \ldots, T_n$, where for each $i\le n$, $T_i$ has states and transitions as specified by Figure~\ref{fig:NPcompleteTj}. Furthermore, $Q$ contains a non-accepting sink state $q_e$. The set of accepting states is $F=\{f_{1},\dots,f_n\}$, and the initial state is $s = s_1$. The construction of $T$ is completed by adding the following transitions:
\begin{itemize}
	\item $(t_{i-1},a_i,s_i)$ for $i=2,\dots,n$ (connecting $T_{i-1}$ and $T_i$),
	\item $(q,a_i,s_i)$ for each $q\in Q\setminus(F\cup \{q_e,t_{1},t_{2},\dots,t_{n-1}\})$ and $i=1,\dots,n$, and 	\item transitions leading to the error state $q_e$ for all remaining combinations of states and symbols.
\end{itemize}

\begin{figure}
\centering
\begin{tikzpicture}[xscale=2.3,yscale=2.3,>=stealth',initial text=,every state/.style={inner sep=0pt,minimum size=1.5em}]
\node[state] at (0,0) (j) {$s_i$};
\node[state] at (-1,0) (0) {$b_i$};
\node[state] at (-1,1) (C) {$c_i$};
\node[state,accepting] at (1,1) (fj) {$f_i$};
\node[state] at (1,0) (1) {$d_i$};
\node[state] at (2,0) (lj) {$t_i$};
\path[->] (j)   edge node[below] {$0$} (0)
                edge[bend left]  node[above] {$b$}   (fj)
                edge node[below] {$1$}   (1)
          (0)	edge node[left] {$C$}   (C)
          (C)	edge node[above] {$d$}   (j)
          (1)	edge node[below] {$D$}   (lj)
          (fj)  edge node[right] {$c$}   (j)
				edge[out=135,in=45,loop] node[above] {$b,a_{1},\dots,a_n$} ()
                edge             node[above] {$d$}   (lj);
\end{tikzpicture}
\caption{Component $T_i$ of the automaton $T$ in the proof of Theorem~\ref{thm:complexity}~(\ref{thm:NPcomplete})}\label{fig:NPcompleteTj}
\end{figure}
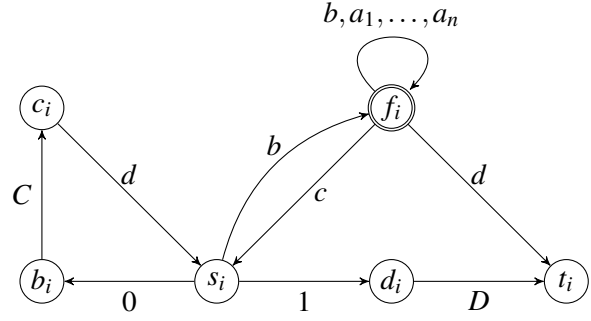
Roughly speaking, every sub-string $0CD$ of $w_1$ induces a sub-game whose strategy is determined by one sub-automaton $T_i$ and the symbols $a_i$ are used to get from $T_{i-1}$ to $T_i$.

Before we further describe the construction of $G$, we first show that $T$ is a minimal automaton. This is because all states are reachable from the initial state $s_1$ and the states are pairwise non-equivalent:
\begin{itemize}
\item The sink state is not equivalent to any other state because it is the only state from which no accepting state is reachable.
\item For any two states $p$ and $q$ with distinct roles within their components ($p$ and $q$ may belong to the same component $T_i$ or to two distinct components $T_j$ and $T_k$ with $j\ne k$) it is easy to find some symbol $e\in\Sigma$ such that $\delta(p,e)=q_e$ and $\delta(q,e)\ne q_e$ or vice versa.
\item Two states $q_j$ and $q_k$ with the same role in their respective components $T_j$ and $T_{k}$ with $j<k$ are not equivalent: Let $u\in\{1,c,C,d,D,a_{k+1},\dots,a_{n}\}^*$ be of minimal length such that $\delta^*(q_k,u)=t_n$. Then $\delta^*(q_k,ua_{1})=s_1$ and $\delta^*(q_j,ua_{1})=q_e$.
\end{itemize}
Thus, the strongly regular strategy automata for $G$ are exactly the automata that can be obtained from $T$ by adding a state $\Call$ as the only accepting state, removing transitions  with function symbols from $T$ and adding new transitions with function symbols to $\Call$ instead.

A strategy automaton for Juliet derived from $T$ is called \emph{valid} if, for every $i$, $T_i$  has exactly one of the two transitions $(s_i,0,b_i)$ and $(s_i,1,d_i)$. A strategy is called valid if it is defined by a valid strategy automaton. With a valid strategy $\sigma$ we associate a variable assignment $\theta_\sigma$ as follows: $\theta_\sigma(x_i)=0$ if and only if the transition $(s_i,0,b_i)$ is \emph{missing} (and $\theta_\sigma(x_i)=1$ if the transition $(s_i,1,d_i)$ is missing). 

For the sub-game on $w_1$, the set $R$ has the following replacement rules.
\begin{alignat*}{2}
&0&&\to b\\
&1&&\to b\\
&C&&\to c1\\
&D&&\to d1d
\end{alignat*}
The symbols $b,c,d$ and $a_1,\ldots,a_n$ are \emph{not} function symbols.

These rules ensure that any strongly regular strategy $\sigma$ with strategy automaton $\mathcal{A}$ for Juliet  rewrites $w_1 = 0CDa_{2}0CDa_{3}0CD\cdots a_{n}0CD$ into a string $w_1'$ with $\delta^*(s_1,w_1') \neq q_e$ only if it is valid. It should be noted that this statement is independent of any strategy of Romeo, since the above four rules do not leave Romeo any choice.

The statement follows by an easy induction on prefixes of $w_1$ of the form $0CDa_{2}0\cdots a_i0CD$. 
 
The crucial observation is that a sub-play on a subword $0CD$ starting from a state $s_i$ reaches the sink state $q_e$ if either both or none of $(s_i,0,b_i)$ and $(s_i,1,d_i)$ are present. And there is no way out of $q_e$ afterwards. 

Indeed, if both transitions are present, the sub-play proceeds as follows: it reads 0. If the transition $(b_i,C,c_i)$ is missing, $C$ is replaced by $c1$ yielding the sink state after reading $c$. Otherwise, $C$ is read, and the subsequent $D$ either leads to $q_e$ or is necessarily replaced by $d1d$. In the latter case, $d$ is then read, bringing the play back to $s_i$. By assumption, $1$ is read as well and now $d$ yields a transition into $q_e$. 

On the other hand, if both transitions are missing, the sub-play proceeds as follows: it calls 0, getting $b$ which is read. It then calls $C$, getting $c1$. Reading $c$ leads back into $s_1$,where the subsequent 1 triggers another call and the resulting $b$ leads back to $f_i$. The call of $D$ yields $d1d$, reading $d$ leads into $t_i$. Either $q_e$ is now immediately reached by reading 1 or after calling it and reading $b$.

Altogether, we have established that any strongly regular winning strategy for Juliet on $w$ must be valid.
We are ready to complete the construction of $R$ and finish the proof. 

To this end, for each $j\le m$, let $K_j = (\ell_{j,1} \vee \ell_{j,2} \vee \ell_{j,3})$ be a clause of $\varphi$ with literals $\ell_{j,1}$, $\ell_{j,2}$, $\ell_{j,3}$. By $w(K_j)$ we denote the string $w(\ell_{j,1})w(\ell_{j,2})w(\ell_{j,3})$, where for an arbitrary literal $\ell$, we define

	$$w(\ell) = \begin{cases}
                    a_i0, & \text{if $\ell = \neg x_i$ for some $i$}\\
                    a_i1, & \text{if $\ell = x_i$ for some $i$.}\\
                \end{cases}$$
                
Finally, $w_2 = E$, where $E$ is a function symbol with replacement rules
	$$E \to w(K_1) \mid w(K_2) \mid \ldots \mid w(K_m).$$
It is not hard to see that this construction is indeed feasible in polynomial time and that $G$ is non-recursive with finite prefix-free replacement languages. It remains to be proven that Juliet has a strongly regular winning strategy on $w$ in $G$ if and only if $\varphi$ is satisfiable.

For the ``if'' direction, let $\theta$ be a satisfying assignment for $\varphi$ and consider the strongly regular strategy $\sigma_\theta$ with strategy automaton $\mathcal{A}_\theta$ derived from $T$ by adding the accepting $\Call$ state and rerouting to it from each component $T_i$ of $T$ all transitions leading to $q_e$ and, furthermore, the transition $(s_i,0,b_i)$ if $\theta(x_i)=0$ and the transition $(s_i,1,d_i)$, otherwise. We claim that $\sigma_\theta$ is winning on $w$.

It is easy to see, that every sub-play on $0CD$ starting in a state $s_i$ according to $\sigma_\theta$ leads to state $t_i$. Either through reading $0C$, calling $D$, reading $d$, calling 1, reading $b$ and $d$. Or through calling 0, reading $b$, calling $C$, and reading $c1D$.

Thus, with the help of the additional $a_i$-symbols, the sub-play on $w_1$, which starts in $s_1$, ends necessarily in $t_n$. By construction of $\sigma_\theta$, Juliet then calls the final $E$ of $w$. Let $j$ be such that $w(K_j)$ is Romeo's answer to Juliet's Call on $E$. Since $\theta$ satisfies $\varphi$, at least one literal of $K_j$ is set to ``true'' under $\theta$, which by the construction of $\mathcal{A}_\theta$ and the definition of $w(K_j)$ means that $w(K_j)$ contains at least one substring of the form $a_i0$ (resp. $a_i1$) for which the $0$-transition (resp. $1$-transition) is missing from $T_i$ in $\mathcal{A}_\theta$. It is easy to see that, after reading $a_i$, $T$ is in state $s_i$, so the subsequent $0$ (resp. $1$) is called by Juliet and thus rewritten into $b$, inducing a transition into $f_i$. Since $f_i$ has no outgoing $0$- or $1$-transitions in $\mathcal{A}_\theta$ and the remainder of $w(K_j)$ contains only symbols from $\{0,1, a_1, \ldots, a_n\}$, it is clear that all remaining symbols $0$ or $1$ get rewritten to $b$, and $T$ never leaves $f_i$ anymore. This implies that $\sigma_\theta$ is winning on $w$.

For the ``only if'' direction, assume that $\sigma=\sigma_{\mathcal A}$ is a strongly regular winning strategy for Juliet on $w$ induced by a strategy automaton $\mathcal{A}_\sigma$. 
As shown above, $\sigma$ must be valid and thus induces an assignment $\theta_\sigma$ for the variables of $\varphi$. We need to show that $\theta_\sigma$ satisfies at least one literal in each clause of $\varphi$, and is therefore a satisfying assignment for $\varphi$.

We first prove that the assumption that $\sigma$ is a winning strategy on $w$ implies that the sub-play on $w_1$ must end in state $t_n$. To show this, we consider again a sub-play on a string $0CD$ starting from a state $s_i$. We need to consider two cases.

In the first case, where $(s_i,0,b_i)$ is present, $(b_i,C,c_i)$ must also be present since otherwise, reading 0 and calling $C$ would yield the next symbol $c$, which can not be called and brings $T$ from $b_i$ to $q_e$, contradicting our winning assumption. Thus, $C$ is read, $D$ is called, $d$ is read, $1$ is called and finally $bd$ is read, reaching $t_i$.

In the other case, where $(s_i,0,b_i)$ is absent, $0$ is called, $b$ is read, $C$ is called, and $c1$ is read. If the transition $(d_i,D,t_i)$ were absent, the subsequent call of $D$ would yield $d1d$ and reading $d$ would bring $T$ into state $q_e$, again a contradiction. Thus, $D$ is read and the sub-play ends in $t_i$, as desired.

By induction and, taking the intermediate non-function symbols $a_i$ into account, the play on $w_1$ thus indeed ends in $t_n$. 

In the remaining play on $w_2=E$, $\sigma$ has to call $E$, since $\delta(t_n,E)=q_e$. Each counter-strategy of Romeo consists of answering this call by some $w(K_j)$. We show that, if Juliet wins according to $\sigma$ against a counter-strategy playing $w(K_j)$, then the assignment $\theta_\sigma$ satisfies clause $K_j$; since $\sigma$ is a winning strategy, this directly implies that $\theta_\sigma$ is a satisfying assignment for $\varphi$.

To this end, assume for the sake of contradiction that $\theta_\sigma$ does \emph{not} satisfy $K_j$. This means that $\theta_\sigma(\ell_{j,k})=0$, for every $k\in\{1,2,3\}$. Thus, for each substring of the form $a_i0$ (resp. $a_i1$) of $w(K_j)$, the $0$-transition (resp. $1$-transition) of $T_i$ is present in $\mathcal{A}_\sigma$. It is easy to see that, starting from $t_n$, the sub-play on $w(K_j)$ consists of six reading steps, and eventually ends in a non-accepting state of the form $s_k$. Therefore $\sigma$ does \emph{not} win on $w$ against the counter-strategy of Romeo replacing $E$ with $w(K_j)$. This contradicts the assumption that $\sigma$ is winning on $w$ and completes the proof.
\end{proof}

\begin{proposition}\label{prop:compare-lower}
The problem \textsc{IsDominated} is $\PSPACE$-hard, even for strongly regular strategies and for non-recursive games 
whose plays all have depth at most $2$, and whose replacement words all have length $1$.
\end{proposition}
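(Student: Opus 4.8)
The plan is to reduce from the \PSPACE-complete \emph{NFA universality} problem: given an NFA $\calN=(Q_\calN,\Gamma,\Delta,q_0,F_\calN)$, decide whether $L(\calN)=\Gamma^*$. This fits the present setting because, by Proposition~\ref{prop:winningSetRegular}, the complement $\Sigma^*\setminus W(\sigma_\calA)$ of a regular strategy's winning set is recognised by an NFA whose nondeterminism encodes exactly Romeo's freedom of choice; thus Romeo's \emph{existential} power is the natural counterpart of the existential acceptance of $\calN$. Concretely, I would let Romeo's replacement choices guess a run of $\calN$ and let the target automaton verify it.

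For the encoding I would use, for each $a\in\Gamma$, a fresh ``chooser'' symbol $\hat a$ and terminal pair symbols $\langle a,q\rangle$ for $q\in Q_\calN$, together with the length-$1$ rules $a\to\hat a$ and $\hat a\to\langle a,q\rangle$ (one such rule per $q$). Calling a $\Gamma$-symbol twice (nesting depth $2$) then lets Romeo commit to a state $q$, so the game is non-recursive, all replacement words have length $1$, and every play has depth at most $2$, as required. The DFA $T$ keeps the current $\calN$-state in its control and, reading $\langle a,q\rangle$ from a tracked state $p$, moves to $q$ iff $(p,a,q)\in\Delta$; I would define $L(T)$ to accept every string that is \emph{not} a valid accepting-run encoding. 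Then the strongly regular strategy $\sigma_1$ that calls every $\Gamma$-symbol down to a pair wins on a pure input $w\in\Gamma^*$ iff Romeo cannot complete a valid accepting run, i.e.\ iff $w\notin L(\calN)$; hence $W(\sigma_1)\cap\Gamma^{\ge1}=\overline{L(\calN)}\cap\Gamma^{\ge1}$. As in the definition of strong regularity, each such strategy arises by rerouting suitable transitions of $T$ to the accepting $\Call$-state.

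For the second strategy I would take $\sigma_2$ to \Read everywhere (the strongly regular strategy that reroutes nothing), so that on input $w$ the final string equals $w$; by additionally letting $T$ reject every string still containing a plain $\Gamma$-symbol, $\sigma_2$ loses on all nonempty pure-$\Gamma$ inputs, i.e.\ $W(\sigma_2)\cap\Gamma^{\ge1}=\emptyset$. After the trivial preprocessing that settles $\epsilon$ (report ``non-universal'' at once if $q_0\notin F_\calN$), one has $W(\sigma_1)\subseteq W(\sigma_2)$ on $\Gamma^{\ge1}$ exactly when $\overline{L(\calN)}\cap\Gamma^{\ge1}=\emptyset$, that is, when $\calN$ is universal. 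Thus \textsc{IsDominated} decides universality on an instance of the required restricted form, yielding \PSPACE-hardness.

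The main obstacle is that the single automaton $T$ is \emph{shared} by both strategies: it must accept enough final strings for $\sigma_1$ to win on all of $\overline{L(\calN)}\cap\Gamma^{\ge1}$, yet force $\sigma_2$ to lose on every nonempty pure-$\Gamma$ input, and at the same time prevent \emph{malformed} inputs (those already containing the auxiliary symbols $\hat a$ or $\langle a,q\rangle$) from becoming spurious witnesses of non-domination. The clean way to deal with this is to let $T$ accept every non-canonical final string and to verify that on each malformed input \emph{both} $\sigma_1$ and $\sigma_2$ then win, so that such inputs lie in $W(\sigma_1)\cap W(\sigma_2)$ and are irrelevant to the containment, leaving the pure-$\Gamma$ inputs as the only decisive ones. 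Carrying out this case analysis, while ensuring the minimality of $T$ demanded by the game definition, is the bulk of the technical work; the depth-$2$/length-$1$ bookkeeping and the strong regularity of $\sigma_1,\sigma_2$ are then routine.
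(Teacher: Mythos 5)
Your high-level plan is the same as the paper's: reduce from NFA universality, let Romeo's replacement choices simulate a run of $\calN$ against a strongly regular strategy $\sigma_1$, and compare against a second strongly regular strategy $\sigma_2$ whose winning set isolates the pure-$\Gamma$ inputs. However, your treatment of malformed inputs contains a genuine inconsistency that breaks the ``universal $\Rightarrow$ dominated'' direction. The three requirements you place on the single shared automaton $T$ cannot be met simultaneously when $\sigma_2$ reads everywhere and the pair symbols $\langle a,q\rangle$ have no rules (so nobody can ever call them). Concretely: (i) for $\sigma_2$ to lose on nonempty pure-$\Gamma$ inputs, every string containing a plain $\Gamma$-symbol must be rejected by $T$; (ii) for malformed inputs to lie in $W(\sigma_1)\cap W(\sigma_2)$ as you propose, every malformed input $u$ must satisfy $u\in L(T)$, because $\sigma_2$'s final string on $u$ is $u$ itself; (iii) for $\sigma_1$ to lose on a pure word $w\in L(\calN)$, the string encoding a valid accepting run of $\calN$ on $w$ must be \emph{outside} $L(T)$. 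But a valid accepting-run encoding is itself a possible (malformed) input, so (ii) and (iii) directly contradict each other. If you instead drop (ii) and keep only (i), you get spurious witnesses: e.g.\ on input $\langle a,q\rangle b$ with $(q_0,a,q)\notin\Delta$, the pair sends $T$ into the accepting sink, so $\sigma_1$ wins, while $\sigma_2$'s final string still contains the plain symbol $b$ and is rejected; hence $\langle a,q\rangle b\in W(\sigma_1)\setminus W(\sigma_2)$ even when $\calN$ is universal, and the reduction answers ``not dominated'' incorrectly.

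The paper escapes this by a different mechanism, which your scheme cannot emulate without changing the rules. In the paper, the auxiliary symbols \emph{are} function symbols with a rule $(a,p)\to\#$, and $\#$ (as well as every ``unexpected'' transition) leads to an accepting sink $f$; plain symbols $0,1$ are self-loops on the tracked states, and $\sigma_2$ loses on pure inputs not because $T$ rejects strings with plain symbols, but because the tracked start state is non-accepting in $T$ (using the harmless normalisation $s\in F_{\calN}$). Crucially, $\sigma_2$ is \emph{not} ``read everywhere'': it calls every auxiliary symbol, forcing it to become $\#$ and driving $T$ into $f$, so that $W(\sigma_2)$ is exactly the set of inputs containing some non-plain symbol. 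This makes every malformed input an element of $W(\sigma_2)$, so malformed inputs can never witness non-containment, while pure inputs behave exactly as dictated by $L(\calN)$. To repair your proof you would need to add rules for the pair symbols (and a $\#$-like escape symbol), redefine $\sigma_2$ to call them, and rework $T$ accordingly --- at which point you have essentially reconstructed the paper's gadget. (Separately, the minimality of $T$, which you defer as routine, also needs an explicit device; the paper introduces the symbols $(\$,p)$ solely to make the tracked states pairwise non-equivalent.)
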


\begin{proof} 
The proof is by a reduction from the $\PSPACE$-complete universality problem for NFAs \cite{MeyerS72,HuntRS76}:  Given an NFA $\mathcal N$ over the alphabet $\{0,1\}$, does $L(\mathcal N)=\{0,1\}^*$?

The idea is to construct $G$ such that there is a strongly regular strategy $\sigma_1$ that loses on a string from $\{0,1\}^*$ if and only if it is accepted by $\mathcal N$ and a second one, $\sigma_2$, that loses exactly on all words from $\{0,1\}^*$. Then $W(\sigma_1)\subseteq W(\sigma_2)$ if and only if $L(\mathcal N)=\{0,1\}^*$.

\newcommand{\QN}{\ensuremath{Q_{\mathcal N}}}

Let thus $\mathcal N=(Q_{\mathcal N}, \{0,1\}, \delta_{\mathcal N},s,F_{\mathcal N})$ be an NFA. We construct a game $G=(\Sigma,R,T)$ and strongly regular strategy automata $\mathcal A_1$, $\mathcal A_2$ for $G$ such that $L(\mathcal N)=\{0,1\}^*$ holds if and only if $W(\sigma_{\mathcal A_1})\subseteq W(\sigma_{\mathcal A_2})$. 

Without loss of generality, we assume that $s\in F_{\mathcal N}$. Otherwise $\mathcal N$ is not universal due to $\epsilon\notin L(\mathcal N)$ and $\mathcal N$ can be mapped to a fixed negative instance of \textsc{IsDominated}. We also assume without loss of generality that all states of $\mathcal N$ are reachable from the initial state $s$ (otherwise, unreachable states can be removed).

We construct $G=(\Sigma,R,T)$ with $\{0,1\}\subset\Sigma$ and automata $\mathcal A_1$ and $\mathcal A_2$ such that
\begin{enumerate}[(a)]
\item $L(\mathcal N)=\{0,1\}^*\setminus W(\sigma_{\mathcal A_1})$, \ie $\mathcal N$ recognizes the language of words in $\{0,1\}^*$ on which $\sigma_{\mathcal A_1}$ does not win, and \label{it:compareA1}
\item $W(\sigma_{\mathcal A_2}) = \Sigma^*\setminus \{0,1\}^*$.\label{it:compareA2} \end{enumerate}
Then it holds that $L(\mathcal N)=\{0,1\}^*$ if and only if $W(\sigma_{\mathcal A_1})\cap\{0,1\}^*=\emptyset$, which in turn holds if and only if $W(\sigma_{\mathcal A_1})\subseteq W(\sigma_{\mathcal A_2})$, since the latter equals $\Sigma^*\setminus\{0,1\}^*$. Thus it suffices to show that $G$, $\mathcal A_1$ and $\mathcal A_2$ with \eqref{it:compareA1} and \eqref{it:compareA2}  can be constructed in polynomial time. An example of the subsequently described construction is shown in Figure~\ref{fig:compare}.

We define the alphabet of $G$ as
$\Sigma=\{0,1,\#\}\cup\{(a,p)\mid a\in\{0,1\}, p\in \QN\}\cup\{(\$,p)\mid p\in Q_{\mathcal N}\setminus\{s\}\}.$ 

The set $R$ of rules of $G$ is given by
\begin{align*}
 a &\to \bigvee_{p\in  Q_{\mathcal N}} (a,p) &&\text{for }a\in\{0,1\}\\
 (a,p)&\to \# &&\text{for $a\in\{0,1\}$ and $p\in \QN$}\\
 (\$,p)&\to \# &&\text{for }p\in Q_{\mathcal N}\setminus\{s\}.
\end{align*}
We note that $G$ is non-recursive, all possible plays have depth at most $2$ and all replacement words are of length $1$.

The target language automaton of $G$ is $T=(Q,\Sigma,\delta,s,F)$, where $Q=Q_{\mathcal N}\dotCup\{f\}$ for some new state $f\notin Q_{\mathcal N}$, the set $F=\{f\}\dotCup Q_{\mathcal N}\setminus F_{\mathcal N}$ of accepting states consists of $f$ and the non-accepting states of $\mathcal N$, and the transition function $\delta$ is defined as follows, for $a\in\{0,1\}$ and $p,q\in\QN$:
\begin{itemize}
\item $\delta(p,a)=p$,
\item $\delta(p,(a,q))=
  \begin{cases}
  q &\text{if $(p,a,q)\in \delta_{\mathcal N}$,}\\
  f & \text{otherwise,}
  \end{cases}$
\item $\delta(p,(\$,p))=s$, if $p\not=s$,
\item all remaining transitions lead to the state $f$.
\end{itemize}

\begin{figure}
\centering
\subcaptionbox{NFA $\mathcal N$}[.28\textwidth]{
	\centering
	\begin{tikzpicture}[xscale=2.3,yscale=2.3,>=stealth',initial text=,initial distance=1.5ex,every state/.style={inner sep=0pt,minimum size=1.5em}]
	\node[state,initial,accepting] at (0.1,0.5) (q0) {$s$};
	\node[state,accepting] at (1,1) (q1) {$b$};
	\node[state] at (1,0) (q2) {$c$};
	\path[->] (q0)	edge[bend left] node[above] {$0$} (q1)
	edge[bend right] node[below] {$0$} (q2)
	(q1)	edge[out=130,in=50,loop]	node[above] {$0$} ()
	edge				node[right] {$1$} (q2);
	\end{tikzpicture}}% Do not remove this comment as otherwise it will cause a spurious blank space to be added, the total length will surpass \textwidth and the figures will end up not side-by-side
\subcaptionbox{Target language DFA $T$. \newline All transitions that are not shown end in $f$.}[.450\textwidth]{
	\centering
	\begin{tikzpicture}[xscale=2.3,yscale=2.3,>=stealth',initial text=,initial distance=1.5ex,every state/.style={inner sep=0pt,minimum size=1.5em}]
	%\clip (2.95,-1.1) rectangle (6.2,2.7);
	\node[state,initial] at (3.5,0.5) (q0') {$s$};
	\node[state] at (4.4,1) (q1') {$b$};
	\node[state,accepting] at (4.4,-0.0) (q2') {$c$};
	\node[state,accepting] at (5.4,0.5) (qf) {$f$};
	\path[->] (q0')	edge[out=-150,in=-70,loop]	node[below] {$0,1$} ()
					edge[bend left] node[above, sloped] {$(0,b)$} (q1')
					edge[bend right] node[below, sloped] {$(0,c)$} (q2')
	(q1')	edge[out=130,in=50,loop] node[above, sloped] {$0,1,(0,b)$} ()
			edge[bend left] node[above, sloped] {$(\$,b)$} (q0')
			edge node[right] {$(1,c)$} (q2')
%			edge node[above] {$\quad\;\;\;\; 0_2,\$_{s},\$_{c}$} (qf)
	(q2')	edge[out=-145,in=-65,loop]	node[below] {$0,1$} ()
			edge[bend right] node[below, sloped] {$(\$,c)$} (q0');
	\end{tikzpicture}}% Do not remove this comment as otherwise it will cause a spurious blank space to be added, the total length will surpass \textwidth and the figures will end up not side-by-side
\subcaptionbox{Rule set $R$.  The symbol $*$ represents all states in \QN.}[.27\textwidth]{
	\centering
	\begin{tikzpicture}[xscale=2.3,yscale=2.3,>=stealth',initial text=,initial distance=1.5ex,every state/.style={inner sep=0pt,minimum size=1.5em}]
	\node[anchor=north west,align=left] at (1.53,1.8) {
		\begin{minipage}{3.5cm}
		\begin{align*}
		 0&\to (0,s)+(0,b)+(0,c)\\
		 1&\to (1,s)+(1,b)+(1,c)\\
		 (\$,b)&\to \#\\
		 (\$,c)&\to \#\\
                 (0,*)&\to \#\\
                 (1,*)&\to \#
		\end{align*}
		\end{minipage}
	};
	\end{tikzpicture}}

\subcaptionbox{Strategy automaton $\mathcal A_1$. Here, $*$ represents all states from $\QN$. All transitions not shown here end in the new accepting state $\Call_1$.}[.480\textwidth]{
	\centering
	\scalebox{0.9}{
	\begin{tikzpicture}[xscale=2.3,yscale=2.3,>=stealth',initial text=,initial distance=1.5ex,every state/.style={inner sep=0pt,minimum size=1.5em}]
	\node[state,initial] at (3.5,0.5) (q0') {$s$};
	\node[state] at (4.4,1) (q1') {$b$};
	\node[state] at (4.4,-0.0) (q2') {$c$};
	\node[state] at (5.9,0.5) (qf) {$f$};
	\path[->] (q0')	edge[bend left] node[above, sloped] {$(0,b)$} (q1')
	edge[bend right] node[below, sloped] {$(0,c)$} (q2')
	(q1')	edge[out=130,in=50,loop] node[above, sloped] {$(0,b)$} ()
	edge node[right] {$(1,c)$} (q2')
	edge node[above, sloped] {\scriptsize$(0,s),(0,c),(1,s),(1,b)$} (qf)
	(q2') edge node[below, sloped] {$(0,*), (1,*)$} (qf)
	(qf)	edge[out=-110,in=-30,loop]	node[below] {$(0,s),(1,*)$} ();
	\draw[->] (q0') .. controls ++(0,1.9) and ++(-0.0,2.1) .. node[above] {$(0,s),(1,*)$} (qf);
	\end{tikzpicture}}}
	
\hspace{3mm}%
\subcaptionbox{Strategy automaton $\mathcal A_2$. All transitions not shown here end in the new accepting state $\Call_2$. }[.480\textwidth]{
	\centering
	\begin{tikzpicture}[xscale=2.3,yscale=2.3,>=stealth',initial text=,initial distance=1.5ex,every state/.style={inner sep=0pt,minimum size=1.5em}]
	\node[state,initial] at (3.5,0.5) (q0') {$s$};
	\node[state] at (4.4,1) (q1') {$b$};
	\node[state] at (4.4,-0.0) (q2') {$c$};
	\node[state] at (5.4,0.5) (qf) {$f$};
	\path[->] (q0')	edge[out=-150,in=-70,loop]	node[below] {$0,1$} ()
	(qf)	edge[out=-110,in=-30,loop]	node[below] {$\Sigma$} ();
	\draw[->] (q0') .. controls ++(0,1.3) and ++(-0.0,1.5) .. node[above] {$\#$} (qf);
	\end{tikzpicture}}	
\caption{Example of an NFA, and the rules, target language DFA, and strategy automata constructed from it, as in the proof of Proposition~\ref{prop:compare-lower}}\label{fig:compare}
\end{figure} 

The automaton $T$ is in fact the minimal DFA for $L(T)$ because its states are reachable (by assumption on $\mathcal N$) and pairwise non-equivalent:
The two states $s$ and $f$ are non-equivalent because $f$ is accepting and $s$ is non-accepting (since $s$ is accepting in $\mathcal N$ by assumption). Each remaining state $q\in Q_{\mathcal N}\setminus\{s\}$ is not equivalent to any other state of $T$ because it is the only state from which reading $(\$,q)$ leads to a non-accepting state (namely $s$). This is actually the only purpose of the symbols of the form $(\$,q)$.

Since $T$ is minimal, strongly regular strategy automata $\mathcal A_1$ and $\mathcal A_2$ can be derived from $T$ by adding states $\Call_1$ for $\mathcal{A}_1$ and $\Call_2$ for $\mathcal{A}_2$, deleting some transitions and replacing them by transitions to the respective $\Call$ state. 

To achieve property~\eqref{it:compareA1}, the idea is that each run of $\mathcal N$ on an input word $w\in\{0,1\}^*$ should correspond to a play, in which Romeo chooses the  replacement words according to the choices in the run. Therefore, Romeo shall have a winning strategy on $w$ if and only if there exists an accepting run of $\mathcal N$ on $w$.

To this end, we define $\mathcal A_1$ as the automaton obtained from $T$ by keeping only the $(a,p)$-edges (for $a\in\{0,1\}$ and $p\in \QN$) and rerouting all others to state $\Call_1$. 

\textit{Claim:}  $\mathcal A_1$ satisfies \eqref{it:compareA1}:\\
It is easy to show by induction on the length of an input word $w\in\{0,1\}^*$ that $\mathcal N$ has a run ending in some state $q\in\QN$ if and only if Romeo has a strategy for which the play against $\sigma_{\mathcal A_1}$ yields a final word $v$ with $\delta^*(s,v)=q$.  Therefore, Romeo has a winning strategy against $\sigma_{\mathcal A_1}$ precisely on all words in $L(\mathcal N)$, since each run ending in an accepting state $q$ of $\mathcal N$ corresponds to a play that yields the non-accepting state $q$ of $T$ and vice versa. It is important to note here, that plays in which Romeo chooses a rewriting to a symbol $(a,p)$, for which no corresponding transition in $ \mathcal N$ exists, end in the accepting state $f$ of $T$ and are therefore losing for Romeo. This completes the proof of the claim.

The automaton $\mathcal A_2$ is obtained from $T$ by keeping only the transitions of the form $\delta(s,a)$ with $a\in\{0,1,\#\}$ and of the form $\delta(f,a)$ for $a\in\Sigma$, and rerouting all other transitions to state $\Call_2$.

\textit{Claim:}  $\mathcal A_2$ satisfies \eqref{it:compareA2}:\\
 Let $w\in\Sigma^*$. If $w\in\{0,1\}^*$, then the strategy $\sigma_{\mathcal A_2}$ plays only Read moves and the $\mathcal A_2$-state is $s$ all the time, yielding a play that is won by Romeo since, by our assumption, $s\notin F$. This shows that $W(\sigma_{\mathcal A_2})\subseteq\Sigma^*\setminus\{0,1\}^*$.

On the other hand, if $w\in\Sigma^*\setminus\{0,1\}^*$, let $x\in\Sigma\setminus\{0,1\}$ be the first symbol in $w$ that is not in $\{0,1\}$. If $x=\#$, the strategy $\sigma_{\mathcal A_2}$ reads $\#$ and the new $T$-state is $f$.  If $x\neq\#$ then $\sigma_{\mathcal A_2}$ calls $x$, it gets replaced by $\#$, and again $\sigma_{\mathcal A_2}$ reads $\#$ and reaches $f$. The state $f$ cannot be left any more, and since it is accepting, Juliet wins. Thus, $w\in W(\sigma_{\mathcal A_2})$, which completes the proof of the claim.

The proof of Proposition~\ref{prop:compare-lower} is now completed by the observation that $G=(\Sigma,R,T)$, $\mathcal A_1$ and $\mathcal A_2$ can clearly be constructed in polynomial time. 
\end{proof}
}% \full

Theorem~\ref{thm:complexity} assumes (as per our standard definition) that target language automata are DFAs. In \cite{Coester15}, also the case of NFAs was studied. The following complexities could be obtained.
\begin{itemize}
\item \textsc{IsWinning} is $\PSPACE$-complete. \cite[Theorem 6.2]{Coester15}
\item \textsc{ExistsWinning}(\SReg) is $\PSPACE$-hard and in $\NEXPTIME$. \cite[Theorem 6.7]{Coester15}
\item \textsc{IsDominated} is in $\EXPSPACE$. \cite[Corollary 6.10]{Coester15}
\end{itemize}

The problem \textsc{ExistsWinning} can also be studied for more general classes of strategies than strongly regular ones. We do not know in general whether this problem is decidable, even for regular (but not strongly regular) strategies. However, \cite[Theorem 6.8]{Coester15} shows that it is decidable for several restricted games, including
\begin{itemize}
\item prefix-free games,
\item non-recursive games with finite replacement sets,
\item non-recursive games with finite target languages.
\end{itemize}
\section{Conclusion and open questions}\label{cha:summary}
Whereas we have no positive results for general games, the results for restricted games are somewhat encouraging.  Indeed, prefix-free replacement languages seem to be a realistic solution in practice, because they can be achieved easily and with almost no overhead by suffixing replacement words with an end-of-file symbol. Another setting in which many positive properties hold is if the target language is finite. While finiteness of the target language may seem like a very strong restriction, one particular instance of it is if there is a single target document that has to be reached~\cite{MuschollSS06}. Restrictions that bound the number of recursive replacements also seem plausible in practise.

We have seen that in several cases it is possible to use a one-pass strategy that is based on an automaton and there is hope that this is possible in general (for undominated strategies). However, we can rule out that one can restrict oneself to strongly regular strategies without losing strategic power. We have also seen that it is not generally possible to \emph{find} good strategies efficiently.

Many questions remain open for further study. First of all, it remains unclear whether undominated strategies always exist. It is actually conceivable that each game has the bounded-depth property and therefore also has undominated strategies. It is also not clear whether in the bounded-depth property the bounding sequence can always be replaced by some constant. 
Another question is whether every game with an undominated strategy also has a regular one. This is open for general games and for games with a forgetful undominated strategy. 
Some open problems regarding complexity were mentioned at the end of Section~\ref{cha:complexity}.

\bibliographystyle{plainurl}
\bibliography{literature}

\end{document}